\newlength{\bibitemsep}\setlength{\bibitemsep}{.2\baselineskip plus .05\baselineskip minus .05\baselineskip}
\newlength{\bibparskip}\setlength{\bibparskip}{0pt}
\let\oldthebibliography\thebibliography
\renewcommand\thebibliography[1]{%
	\oldthebibliography{#1}%
	\setlength{\parskip}{\bibitemsep}%
	\setlength{\itemsep}{\bibparskip}%
}
\newcommand\smallO{
  \mathchoice
    {{\scriptstyle\mathcal{O}}}
    {{\scriptstyle\mathcal{O}}}
    {{\scriptscriptstyle\mathcal{O}}}
    {\scalebox{.7}{$\scriptscriptstyle\mathcal{O}$}}
  }
\newtheorem{theorem}{Theorem}
\newtheorem*{theorem*}{Theorem}
\newtheorem*{ftheorem*}{Formal Theorem}
\newtheorem*{definition*}{Definition}
\newtheorem{corollary}{Corollary}
\newtheorem*{remark*}{Remark}
\newtheorem{lemma}{Lemma}
\newcommand\R{\mathbb R} 
\newcommand\scpr{\boldsymbol{\cdot}} 
\newcommand\inner[1]{\langle #1 \rangle}
\newcommand\C{\mathbb C} 
\newcommand\restr[2]{{
  \left.\kern-\nulldelimiterspace 
  #1 
  \vphantom{\big|} 
  \right|_{#2} 
  }}
\newcommand\Z{\mathbb Z} 
\newcommand\N{\mathbb N} 
\newcommand\Q{\mathbb Q} 
\newcommand\norm[1]{\lVert #1 \rVert} 
\newcommand\ra{\rightarrow}
\newcommand{\rhobar}{\bar{\rho}}
\numberwithin{equation}{section}
\numberwithin{theorem}{section}
\numberwithin{corollary}{section}
\numberwithin{lemma}{section}
\begin{document}

\title{
Next-order correction to the Dirac exchange energy \\ of the free electron gas in the thermodynamic limit \\ and generalized gradient approximations
}

\author{Thiago Carvalho Corso\thanks{Email: \url{thiago.carvalho@ma.tum.de}}
\affil{Zentrum Mathematik, Technische Universit\"at M\"unchen, Germany} and Gero Friesecke\thanks{Email: \url{gf@ma.tum.de}}}
\affil{Zentrum Mathematik, Technische Universit\"at M\"unchen, Germany}


\renewcommand\Affilfont{\itshape\small}

\maketitle

\begin{abstract}
	We derive the next order correction to the Dirac exchange energy for the free electron gas in a box with zero boundary conditions in the thermodynamic limit. The correction is of the order of the surface area of the box, and comes from three different contributions: (i) a real-space boundary layer, 
	(ii) a boundary-condition-induced small shift of Fermi momentum and bulk density,
	and (iii) a long-range electrostatic finite-size correction. Moreover we show that the LDA, in addition to capturing the bulk term exactly, also produces a correction of the correct order but not the correct size. GGA corrections are found to be capable of capturing the surface term exactly, provided the gradient enhancement factor satisfies a simple explicit integral constraint.
	For current GGAs such as B88 and PBE we find that the new constraint is not satisfied and the size of the surface correction is overestimated by about ten percent. The new constraint  might thus be of interest for the design of future exchange functionals. 
\end{abstract}	
	


\section{Introduction}

We derive the next order correction to the Dirac exchange energy for the free electron gas in a box with zero boundary conditions in the thermodynamic limit. Like Dirac exchange, the correction is of significant interest for density functional theory (DFT). In particular, it yields a novel exact constraint on generalized gradient approximations (GGAs).

 \textit{Motivation from DFT and main result.}
 Original Kohn-Sham DFT \cite{KohnSham1965} is based on the paradigm that exchange and correlation effects in real many-electron systems in solid-state and molecular physics are reasonably similar to those in the three-dimensional uniform electron gas (UEG). This paradigm is implemented in the form of the celebrated local density approximation (LDA): one applies an analytical parametrization of the exchange-correlation energy of the UEG as a function of the density\footnote{obtained from high- and low-density asymptotics as well as numerical quantum Monte Carlo results} locally to the non-uniform density of the real system, thereby obtaining an approximation of the system's exact exchange-correlation energy. Thus the LDA corresponds to the following approximate model for the exchange-correlation energy density at a point $r$:  $e_{xc}(r)=f(\rho(r))$,  where $f(\rhobar)$ is the exchange-correlation energy density of the UEG with density $\rhobar$ and $\rho(r)$ is the density at $r$ of the inhomogeneous system. 

A milestone in elevating DFT from a useful first approximation to a widely applicable quantitative method was the development of generalized gradient approximations (GGAs) \cite{Becke1988, PBE1996}. These model the effect of density inhomogeneities by allowing the exchange-correlation energy density at a point $r$ to depend not just on the local density $\rho(r)$ but also the local density gradient, 
$e_{xc}(r)=f(\rho(r),|\nabla\rho(r)|)$. Deceptively simple but highly judicious semi-empirical choices for $f$ such as those by Becke \cite{Becke1988} and Perdew, Burke, and Ernzerhof \cite{PBE1996}  improved the accuracy of total DFT energies for small atoms and molecules from 1 eV to about 0.2 eV. 

The dominating part of the exchange-correlation energy for real molecular and solid-state systems consists of the exchange energy, on which we focus in this paper. The LDA exchange energy density is given by Dirac exchange, i.e. the exchange energy density of the \textit{free electron gas with density $\rhobar$} (which is known explicitly, see below) applied to the local density. But what about gradient corrections? Successful GGA models for exchange, such  as the widely used Becke-88 \cite{Becke1988} and PBE \cite{PBE1996} functionals, are  not obtained from uniform electron gas theory alone. Instead one makes a  low-dimensional (1- respectively 2-parameter) semi-empirical nonlinear ansatz for the gradient dependence and fits it to ab initio computations for noble gas atoms (Becke) respectively the lowest-order term in the gradient expansion of the weakly inhomogeneous electron gas\footnote{in which the electron gas is subjected to a weak and slowly-varying external potential and its energy density is expanded in increasingly higher-order density derivatives} and the Lieb-Oxford inequality (PBE). We emphasize that the gradient expansion which partially informs PBE  requires small density gradients and provides no information on the regime typical for real systems where the density gradient is of order $1$ (in atomic units). 

It seems to us that a very  natural reference system which provides just such a regime is the free electron gas in a box which underlies the celebrated Dirac exchange, but with an important difference: one replaces periodic boundary conditions by \textit{zero boundary conditions}. While the former yield a uniform density, the latter yield density gradients of order $1$ near the boundary, see Figure~\ref{F:bdrylayer}. This is the system studied in the present paper. Note that its exchange energy is captured correctly by the exact Hohenberg-Kohn functional since the system is a high-density limit of the interacting electron gas in a box (see eq.~\eqref{highdens} below).   
The free electron gas with zero boundary conditions can be solved exactly for $N$ electrons, just as in the periodic case. One can then study its asymptotic behaviour in the thermodynamic limit where $N$ and the sidelength $L$ of the box tend to infinity with the number of electrons per unit volume, $N/L^3=\rhobar$, remaining constant.  

\begin{figure}[http!]
    \centering
    \includegraphics[width=0.4\textwidth]{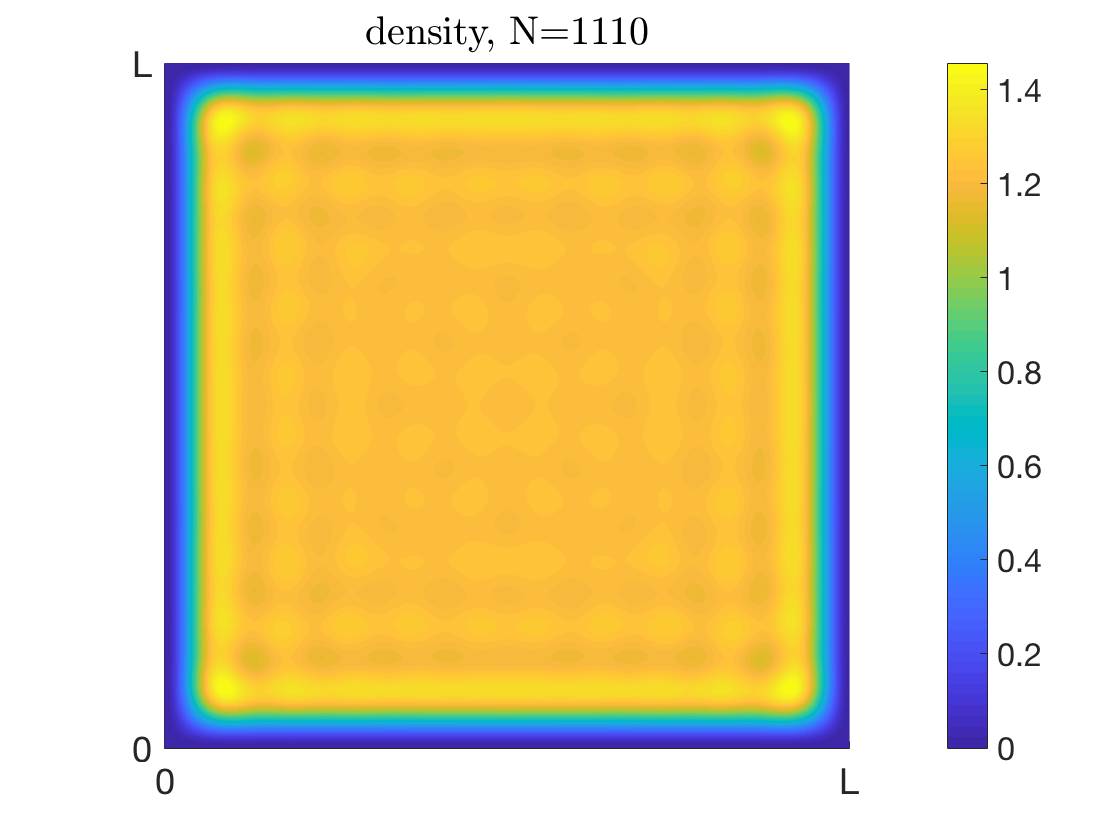} \;\;\;\; 
 \includegraphics[width=0.4\textwidth]{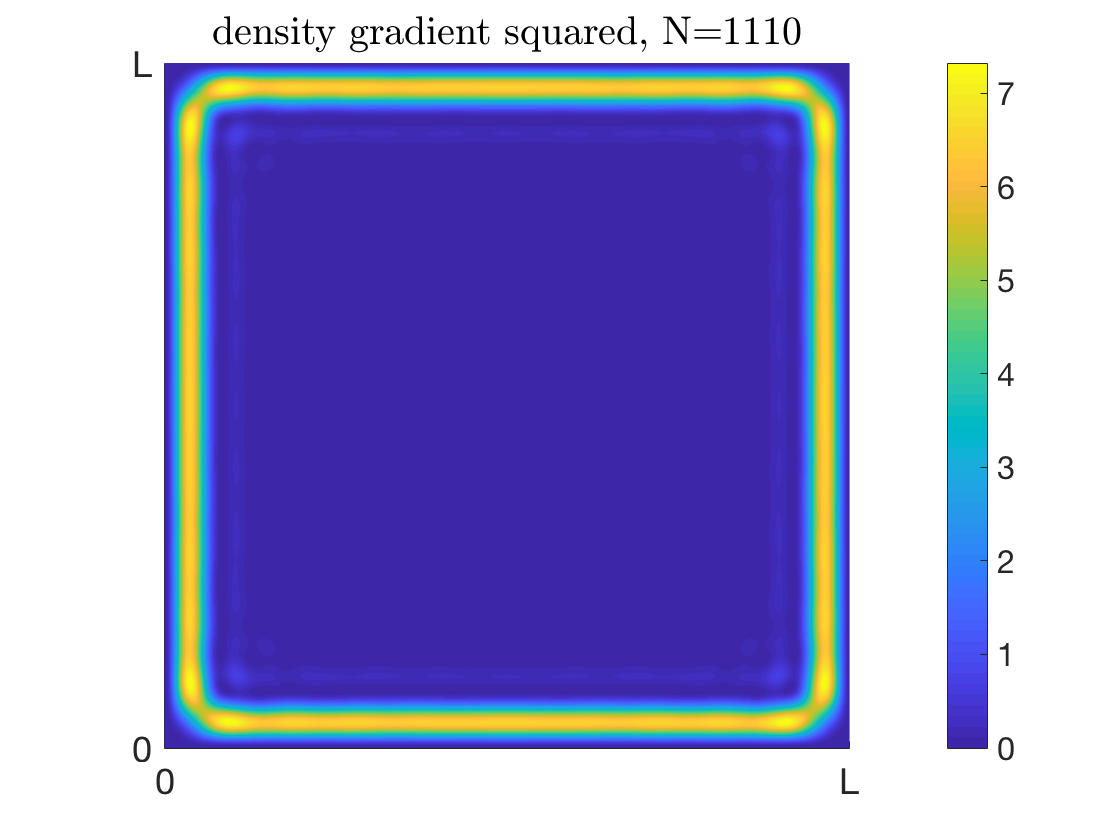}
    \caption[width=0.9\textwidth]{Density (left) and density gradient squared (right) of the free electron gas with 1110 electrons in a three-dimensional box with zero boundary conditions. The picture shows a two-dimensional cross-section through the center of the box, and the number of electrons per unit volume was normalized to $1$. 
    The density gradients of order $1$ near the boundary persist in the thermodynamic limit.
    }
    \label{F:bdrylayer}
\end{figure}

By careful asymptotic analysis, we are able to determine not just the bulk contribution to the exchange energy, which is just the familiar Dirac exchange regardless of the imposed boundary conditions (as has been shown previously  \cite{Friesecke1997}), but also the next-order (surface) contribution to the exchange energy, see Theorem \ref{mainthm} below.  The next-order term is to our knowledge new and captures the inhomogeneous boundary layer depicted in Figure \ref{F:bdrylayer}. It also captures two additional effects: a boundary-condition-induced small shift of Fermi momentum and bulk density,  and a long-range electrostatic finite-size correction which would also be present for periodic boundary conditions (i.e. unform density). Our asymptotic methods also yield the next-order (surface) contribution to the exchange energy for GGA exchange functionals with general $f$. Requiring these contributions to match yields a novel exact constraint on GGAs (see eq.~\eqref{exactconstr} below).








\textit{Main result in more detail.} The free electron gas in a box consists of $N$ electrons moving freely in a three-dimensional box $Q_L=[0,L]^3$ of sidelength $L$ and volume $V=L^3$ in the thermodynamic limit $N\to\infty$, $V\to\infty$, with the number of electrons per unit volume, $\rhobar=N/V$, remaining constant. Mathematically, ground states of the finite system are defined as minimizers of kinetic energy
\begin{equation} \label{kinen}
   T[\Psi] =  \sum_{s_1,...,s_N\in\Z_2} \int_{Q_L^N} \frac12 \sum_{i=1}^N|\nabla_{r_i}\Psi(r_1,s_1,...,r_N,s_N)|^2 dr_1 ... dr_N
\end{equation}
over square-integrable $N$-electron wave functions $\Psi$ with finite kinetic energy (i.e., functions in the Sobolev space $H^1((Q_L\times\Z_2)^N;\C)$) subject to the following constraints: normalization, $||\Psi||_{L^2}=1$; antisymmetry,  
$\Psi(...,r_i,s_i,...,r_j,s_j,...) = - \Psi(..,r_j,s_j,...,r_i,s_i,...)$ for $i\neq j$ (where $(r_i,s_i)\in Q_L\times\Z_2$ are space-spin coordinates for the i$^{th}$ electron); and one of the boundary conditions
\begin{align}
 \Psi(r_1,s_1,...,r_N,s_N)  = \Psi(r_1',s_1,...,r_N',s_N) \;
 \mbox{ if } r'\! -r\in L\Z^{3N} \quad \quad & \mbox{(periodic case) } \label{periodicBCs}\\
  \Psi(r_1,s_1,...,r_N,s_N) = 0 \; \mbox{ if any }r_i\in\partial[0,L]^3  \quad\quad & \mbox{(Dirichlet case) } \label{dirichletBCs} \\
 \nabla_{r_i}\Psi(r_1,s_1,...,r_N,s_N) \cdot \nu(r_i) = 0 \; \mbox{ if any }r_i\in\partial[0,L]^3 \quad \quad & \mbox{(Neumann case). } \label{neumannBCs}
\end{align}
Here $\nu(r_i)$ denotes the outward unit normal to $\partial[0,L]^3$ at $r_i$. (Of course, in the Neumann case no boundary conditions are imposed on the admissible functions; instead,  minimizers then automatically satisfy Neumann conditions.)

For ground states of non-interacting systems, such as the one above, the exchange energy is defined as the \textit{difference between the quantum-mechanical electron-electron interaction energy and the mean field energy}, 
\begin{equation} \label{exch0}
    E_x[\Psi]  = V_{ee}[\Psi] - \frac12 \int_{Q_L^2} \frac{\rho(r)\rho(r')}{|r-r'|} dr\, dr'
\end{equation}
with 
\begin{equation} \label{Vee}
   V_{ee}[\Psi] = \sum_{s_1,...,s_N\in\Z^2} \int_{Q_L^N} \sum_{1\le i<j\le N} \frac{1}{|r_i-r_j|} |\Psi(r_1,s_1,...,r_N,s_N)|^2 dr_1...dr_N
\end{equation}
(interaction energy) and
\begin{equation} \label{rho}
   \rho(r) = N \sum_{s_1,...,s_N\in\Z^2}
   \int_{Q_L^{N-1}}  |\Psi(r,s_1,r_2,s_2,...,r_N,s_N)|^2 dr_2...dr_N
\end{equation}
(single-particle density of the system). For explicit expressions of the exchange energy in terms of the single-particle orbitals (Laplace eigenfunctions in the box) see Section~\ref{sec:electrongasformulas}.  

We note that the exchange energy  \eqref{exch0} is of fundamental interest as a correction term appearing in the asymptotic expansion of the interacting ground state energy in the high-density limit. More precisely, denoting the interacting ground state energy, i.e. that of $T+V_{ee}$, in $Q_L$ by $E^{\rm T + V_{ee}}_L$ and considering the re-scaling
$$
  \Psi^\gamma(r_1,s_1,...,r_N,s_N) = \gamma^{3N/2}\Psi(\gamma r_1,s_1,...,\gamma r_N,s_N) \;\;\;\; (\gamma>0),
$$
an elementary change of variables shows that 
$$
        E^{T+V_{ee}}_{L/\gamma} = \gamma^2 E^{T+\gamma^{-1}V_{ee}}_L.
$$
Hence in the high-density limit $\gamma\to\infty$, the interaction is a small perturbation and thus by standard perturbation theory, provided the ground state $\Psi$ of $T$ in $Q_L$ is non-degenerate,
\begin{equation} \label{highdens}
   \tfrac{1}{\gamma^2} E^{T+V_{ee}}_{L/\gamma} = T[\Psi] + 
   \tfrac{1}{\gamma} \Bigl( J[\rho] + E_x[\Psi]\Bigr) + O\bigl(\tfrac{1}{\gamma^2}\bigr) \;\;\; \mbox{ as }\gamma\to\infty
\end{equation}
where $J[\rho]$ denotes the mean-field energy (second term in \eqref{exch0}). See the review \cite{GillUEG2016} for more information on the interacting electron gas and the recent review \cite{SCElimit2022} for information about the opposite (low-density) limit in which electrons become strictly correlated. 

We are not just interested in the exact exchange energy functional \eqref{exch0}, but also want to compare it to two important types of simpler functionals defined only in terms of the single-particle density:

\begin{itemize}
\item The Local Density Approximation (LDA) \cite{KohnSham1965}: 
\begin{align} E_x^{\rm LDA}[\rho] = \int_{\R^3} e_x^{\rm LDA}(\rho(r)) \mathrm{d}r \label{LDA} \end{align}
where the exchange energy density per unit volume is given by the Dirac-Bloch formula \cite{Bloch1929,dirac_1930} $e^{\rm LDA}_x(\rho) = -c_x \rho^{4/3}$ with $c_x = \frac{3}{4}(\frac{3}{\pi})^{\frac{1}{3}}$.
\item The GGA functionals \cite{Becke1988,PBE1996,PBEsol2008}:
\begin{align} E_x^{\rm GGA}[\rho] = 
E^{\rm LDA}_x[\rho] +\underbrace{\int_{\R^3} g^{\rm GGA}(\rho(r),|\nabla \rho(r)|) \mathrm{d}r}_{\coloneqq \Delta E_x^{\rm GGA}[\rho]} \label{gga} \end{align}
with the assumption that $g^{\rm GGA}(\bar{\rho},0) = 0$ for all $\bar{\rho}\geq 0$ (i.e. the functional reduces to the LDA for the homogeneous density). 
\end{itemize}
In the physics literature \cite{Becke1988,PBE1996,PBEsol2008}, GGAs are commonly expressed in terms of the density and the dimensionless gradient $s=|\nabla\rho|/\rho^{4/3}$. This has the advantage that, by a scaling argument, one arrives at the simpler ansatz
\begin{equation} \label{physlit}
   g^{\rm GGA}(\rho,|\nabla\rho|) = e_x^{\rm LDA}(\rho)G(s),
\end{equation}
with different GGAs differing only by the choice of $G$.\footnote{I.e., by the ``gradient enhancement factor'' $F = 1 + G$ of the overall integrand $f(\rho,|\nabla\rho|)=e_x(\rho)+g^{\rm GGA}(\rho,|\nabla\rho|)=e_x(\rho)F(s)$} The reason we prefer to work with the density and the density gradient instead is because $s(r) \ra \infty$ as $r$ approaches the boundary (for Dirichlet boundary conditions) while $\nabla \rho(r)$ remains bounded, making the mathematical analysis simpler. Our assumptions on $g^{\rm GGA}$ required in Theorem \ref{mainthm} below are as follows:
\begin{equation} \label{ggaassump}
 g^{\rm GGA}\in C^0([0,\infty)^2)\cap C^1((0,\infty)\times[0,\infty)) \mbox{ with }g^{\rm GGA}(\rhobar,0) = 0\mbox{ for all }\rhobar. 
\end{equation}
These are satisfied for typical GGAs of form \eqref{physlit} such as those in \cite{Becke1988,PBE1996,PBEsol2008} (see Appendix~\ref{sec:GGAcheck} for a proof).

With the functionals \eqref{exch0}, \eqref{LDA}, \eqref{gga} in mind, the main result of this paper can be stated as follows.
\begin{theorem}[Asymptotic expansion of exchange functionals] \label{mainthm} Let $N \in \N,L>0$, and let $\Omega \subset\R^3$ be a rectangular box. Let $\Psi_{N,L}$ be any determinantal ground state wave function of the free $N$-electron gas in $\Omega_L = \{ x \in \R^3 : x/L \in \Omega \}$ under either Dirichlet, Neumann, or periodic boundary conditions, and let $\rho_{N,L}$ denote the associated single-particle density. Moreover, assume that the GGA functional \eqref{gga} satisfies \eqref{ggaassump}. Then in the thermodynamic limit, i.e., for $N,L \ra \infty$ and $\bar{\rho} =N/(|\Omega|L^3) = constant$, one has:
\begin{itemize}
\item Under periodic boundary conditions:
\begin{align*} E_x[\Psi^{\textnormal{Per}}_{N,L}] &= -c_x \bar{\rho}^{4/3} |\Omega|L^3+c_{x,2}^{\textnormal{Per}}\bar{\rho}|\partial \Omega|L^2 + \mathcal{O}(L^{\frac{45}{23}+\epsilon}) \\ E^{\rm LDA}_x[\rho^{\textnormal{Per}}_{N,L}] &= -c_x \bar{\rho}^{4/3} |\Omega|L^3 + \mathcal{O}(L^{\frac{34}{23}+\epsilon}) \\ \Delta E_x^{\rm GGA}[\rho^{\textnormal{Per}}_{N,L}]&= \mathcal{O}(L^{\frac{34}{23}+\epsilon})  \end{align*}
\item Under Dirichlet boundary condtions:
\begin{align*} E_x[\Psi^{\textnormal{Dir}}_{N,L}] &= -c_x \bar{\rho}^{\frac{4}{3}} |\Omega|L^3 -  c_{x,2}^{\textnormal{Dir}}\bar{\rho}|\partial \Omega|L^2 + \mathcal{O}(L^{\frac{45}{23}+\epsilon}) \\ E_x^{\rm LDA}[\rho^{\textnormal{Dir}}_{N,L}] &= -c_x \bar{\rho}^{\frac{4}{3}} |\Omega|L^3 - c_{\rm LDA}^{\textnormal{Dir}} \bar{\rho}|\partial \Omega|L^2 + \smallO(L^2) \\ \Delta E_x^{\rm GGA}[\rho^{\textnormal{Dir}}_{N,L}] &= c_{\rm GGA}^{\textnormal{Dir}}(\bar{\rho})|\partial \Omega|L^2 + \smallO(L^2) \end{align*}
\item Under Neumann boundary conditions:
\begin{align*} E_x[\Psi^{\textnormal{Neu}}_{N,L}] &= -c_x \bar{\rho}^{\frac{4}{3}}|\Omega|L^3 -c_{x,2}^{\textnormal{Neu}} \bar{\rho}|\partial \Omega|L^2 + \mathcal{O}(L^{\frac{45}{23}+\epsilon}) \\ 
E_x^{\rm LDA}[\rho^{\textnormal{Neu}}_{N,L}] &= -c_x \bar{\rho}^{\frac{4}{3}} |\Omega|L^3 - c_{\rm LDA}^{\textnormal{Neu}} \bar{\rho}|\partial \Omega|L^2 + \smallO(L^2)  \\
 \Delta E_x ^{\rm GGA}[\rho^{\textnormal{Neu}}_{N,L}]&=  c_{\rm GGA}^{\textnormal{Neu}}(\bar{\rho}) |\partial \Omega|L^2+ \smallO(L^2) \end{align*}
\end{itemize}
where $|\Omega|$ and $|\partial \Omega|$ denotes the volume and surface area of the domain $\Omega$, $h$ is the explicit function $h(t)= 3 (\sin t - t \cos t)/t^3$, $p_F = (3\pi^2 \bar{\rho})^{1/3}$ (Fermi momentum), and the constants are given by
\begin{align*} &c_{x,2}^{\textnormal{Per}} = \frac{1}{8}, \quad c_{x,2}^{\textnormal{Dir}} = \frac{1-\log 2}{4} \approx 0.0767, \quad c_{\rm LDA}^{\textnormal{Dir}} = \frac{3}{8\pi}\int_0^\infty (1-h(t))^{\frac{4}{3}}-1 \mathrm{d}t + \frac{3}{8} \approx 0.0673, \\
 &c_{x,2}^{\textnormal{Neu}} =  \frac{ 3 \log 2 -2}{4} \approx 0.0199 , \quad c_{\rm LDA}^{\textnormal{Neu}} = \frac{3}{8\pi}\int_0^\infty (1+h(t))^{\frac{4}{3}}-1 \mathrm{d}t -\frac{3}{8} \approx 0.0430 ,  \\
  & c_{\rm GGA}^{\textnormal{Dir}}(\bar{\rho}) = \frac{1}{2p_F}\int_0^\infty g^{\rm GGA}\biggr(\bar{\rho}(1-h(t)),2\bar{\rho}p_F|\dot{h}(t)|)\biggl) \mathrm{d}t , \\
  & c_{\rm GGA}^{\textnormal{Neu}}(\bar{\rho}) = \frac{1}{2p_F}\int_0^\infty g^{\rm GGA}\biggr(\bar{\rho}(1+h(t)),2\bar{\rho}p_F|\dot{h}(t)|\biggl) \mathrm{d}t. \end{align*}
\end{theorem}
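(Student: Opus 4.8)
The plan is to reduce all three functionals to the one-particle density matrix $\gamma_{N,L}(r,r')=\sum_j\phi_j(r)\overline{\phi_j(r')}$ built from the occupied Laplace eigenfunctions $\phi_j$ on $\Omega_L$, using the explicit orbital formulas of Section~\ref{sec:electrongasformulas}: one has $E_x[\Psi_{N,L}]=-\iint_{\Omega_L^2}|\gamma_{N,L}(r,r')|^2\,|r-r'|^{-1}\,\dd r\,\dd r'$ up to the constant spin factor, while $E_x^{\rm LDA}$ and $\Delta E_x^{\rm GGA}$ depend on $\gamma_{N,L}$ only through $\rho_{N,L}(r)=2\gamma_{N,L}(r,r)$ and $\nabla\rho_{N,L}$. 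Two determinantal ground states differ only in how the eigenspace at the Fermi energy is occupied, and that eigenspace has dimension $\mathcal{O}(L^{1+\epsilon})$ (a classical bound on sums of three squares); since this is a rank-$\mathcal{O}(L^{1+\epsilon})$ change of $\gamma_{N,L}$ by orbitals of sup-norm $\mathcal{O}(L^{-3/2})$, it will perturb each functional only by $\smallO(L^2)$, so one may work with the canonical ``round'' filling in which the occupied wavevectors fill a lattice Fermi ball of radius $p_{F,L}$ fixed by the constraint $\int_{\Omega_L}\rho_{N,L}=N$.

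\emph{Step 1: asymptotics of $\gamma_{N,L}$.} Write $\gamma_\infty(x)=\int_{|k|\le p_F}(2\pi)^{-3}e^{ik\cdot x}\,\dd k=\tfrac{\bar\rho}{2}h(p_F|x|)$. In the periodic case $\gamma^{\rm Per}_{N,L}(r,r')=L^{-3}\sum_{k\in\frac{2\pi}{L}\Z^3,\,|k|\le p_{F,L}}e^{ik\cdot(r-r')}$ is a Riemann sum for $\gamma_\infty(r-r')$, and the crucial quantitative input is a bound on the lattice-point discrepancy. In the Dirichlet and Neumann cases the box eigenfunctions are products of sines resp.\ cosines, so $\gamma^{\rm Dir/Neu}_{N,L}$ equals the alternating resp.\ plain sum of $\gamma_\infty$ over the $2^3$ face-reflections of its two arguments together with their $(2L\Z)^3$-translates; away from the edges of $\partial\Omega_L$ only the single nearest-face reflection survives at leading order, and this produces the boundary-layer profile $\rho^{\rm Dir/Neu}_{N,L}(r)=\bar\rho_L\bigl(1\mp h(2p_{F,L}d(r))\bigr)+\smallO(1)$ and $|\nabla\rho^{\rm Dir/Neu}_{N,L}(r)|=2\bar\rho_L p_{F,L}\bigl|\dot h(2p_{F,L}d(r))\bigr|+\smallO(1)$ with $d(r)=\mathrm{dist}(r,\partial\Omega_L)$. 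The essential bookkeeping point is that this profile changes the particle content of the boundary layer by $\mp\tfrac{3\pi}{8p_F}\bar\rho\,|\partial\Omega|L^2+\smallO(L^2)$ (using $\int_0^\infty h=\tfrac{3\pi}{4}$), so the constraint $\int_{\Omega_L}\rho_{N,L}=N$ forces the $\mathcal{O}(1/L)$ shifts $p_{F,L}=p_F(1\pm\mathrm{const}/L+\cdots)$ and $\bar\rho_L=\bar\rho(1\pm\mathrm{const}/L+\cdots)$ --- this is the Fermi-momentum / bulk-density shift, contribution (ii), and it enters every functional at order $L^2$.

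\emph{Step 2: the three functionals.} For $E_x$ I expand $|\gamma_{N,L}|^2$ using Step~1. The bulk$\times$bulk part $\iint_{\Omega_L^2}|\gamma_\infty(r-r')|^2|r-r'|^{-1}$ equals $|\Omega_L|\int_{\R^3}|\gamma_\infty|^2|x|^{-1}\,\dd x-\int_{\Omega_L}\dd r\int_{(\Omega_L-r)^c}|\gamma_\infty|^2|x|^{-1}\,\dd x$; the first term is the Dirac--Bloch bulk energy $-c_x\bar\rho^{4/3}|\Omega|L^3$ (cf.\ \cite{Friesecke1997}) and the second, organized by distance to $\partial\Omega_L$, is the universal electrostatic surface term $\bigl(\int_{x_1>0}x_1|\gamma_\infty(x)|^2|x|^{-1}\,\dd x\bigr)|\partial\Omega|L^2=c_{x,2}^{\rm Per}\bar\rho|\partial\Omega|L^2$ with $c_{x,2}^{\rm Per}=\tfrac{1}{12\pi}\int_0^\infty t^2h(t)^2\,\dd t=\tfrac18$ --- this is contribution (iii), present for all three boundary conditions. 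For Dirichlet/Neumann the remaining cross term $-2\,\mathrm{Re}\,\gamma_\infty(r-r')\overline{\gamma_\infty(r-\tilde r')}$ and the image$\times$image term are supported within $\mathcal{O}(p_F^{-1})$ of $\partial\Omega_L$; rescaling the transverse variable by $2p_F$ converts them into explicit one-dimensional integrals which, together with the shift contribution $\mp\tfrac38\bar\rho|\partial\Omega|L^2$, assemble to $c_{x,2}^{\rm Dir}=\tfrac{1-\log2}{4}$ resp.\ $c_{x,2}^{\rm Neu}=\tfrac{3\log2-2}{4}$. For $E_x^{\rm LDA}=-c_x\int_{\Omega_L}\rho_{N,L}^{4/3}$ I split $\Omega_L$ into bulk and boundary layer: the bulk contributes $-c_x\bar\rho_L^{4/3}|\Omega_L|=-c_x\bar\rho^{4/3}|\Omega|L^3\mp\tfrac38\bar\rho|\partial\Omega|L^2+\smallO(L^2)$ through the shift, and the layer, after $t=2p_{F,L}d$, contributes $-\tfrac{3}{8\pi}\bigl(\int_0^\infty[(1\mp h(t))^{4/3}-1]\,\dd t\bigr)\bar\rho|\partial\Omega|L^2$, the two adding up to $-c_{\rm LDA}^{\rm Dir/Neu}\bar\rho|\partial\Omega|L^2$; for periodic $\rho^{\rm Per}_{N,L}\equiv\bar\rho$, so no $L^2$ term appears. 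For $\Delta E_x^{\rm GGA}=\int_{\Omega_L}g^{\rm GGA}(\rho_{N,L},|\nabla\rho_{N,L}|)$ the bulk contributes nothing, since $g^{\rm GGA}(\bar\rho_L,0)=0$, so the entire surface term is the layer integral, which --- substituting the profile and using the continuity of $g^{\rm GGA}$ from \eqref{ggaassump} --- equals $\tfrac{1}{2p_F}\bigl(\int_0^\infty g^{\rm GGA}(\bar\rho(1\mp h(t)),2\bar\rho p_F|\dot h(t)|)\,\dd t\bigr)|\partial\Omega|L^2=c^{\rm Dir/Neu}_{\rm GGA}(\bar\rho)|\partial\Omega|L^2$; here the $C^1$-regularity of $g^{\rm GGA}$ off $\{\rho=0\}$ is what permits absorbing the replacement $\bar\rho_L,p_{F,L}\to\bar\rho,p_F$ into the $\smallO(L^2)$ error, and continuity up to $\{|\nabla\rho|=0\}$ handles the outer edge of the layer.

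\emph{Main obstacle.} The delicate step is the error bound for $E_x$, namely proving the $\mathcal{O}(L^{45/23+\epsilon})$ remainder uniformly over all determinantal ground states; it must simultaneously control (a) the lattice-point discrepancy between $\gamma^{\rm (BC)}_{N,L}$ and its thermodynamic limit, plus, for Dirichlet/Neumann, the images accumulating near edges and corners of $\partial\Omega_L$; (b) the arbitrariness of the Fermi-shell filling, which enters $|\gamma_{N,L}|^2$ both via a cross term with the bulk and via its own square; and (c) the slowly decaying Coulomb tail $|\gamma_\infty(x)|^2|x|^{-1}\sim|x|^{-5}$, which couples the boundary layer to the bulk over distances up to $L$. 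The strategy is to split the $|r-r'|$-integration into a short-range regime --- where one has pointwise control of $\gamma_{N,L}-\gamma_\infty$ and of the image sum --- and a long-range regime --- where only $L^2$/Hilbert--Schmidt bounds on the density matrix are used; optimizing the cutoff between the two regimes against the lattice-point estimate is what yields the exponent $\tfrac{45}{23}$, the cruder $\tfrac{34}{23}$ for $E_x^{\rm LDA}$ and $\Delta E_x^{\rm GGA}$ reflecting that the Coulomb tail is absent there. Everything else --- evaluating the one-dimensional integrals defining the constants (for instance $\int_0^\infty t^2h^2=\tfrac{3\pi}{2}$ and $\int_0^\infty h=\tfrac{3\pi}{4}$, via Parseval and integration by parts) and matching $c^{\bullet}_{x,2}$, $c^{\bullet}_{\rm LDA}$, $c^{\bullet}_{\rm GGA}$ --- is lengthy but routine once Steps~1 and 2 are in place.
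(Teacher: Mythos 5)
Your architecture is essentially the paper's: reduce to the one-body density matrix, control the open shell, pass to the continuum approximation built from $h$ and the face reflections, extract the Fermi-momentum/bulk-density shift from particle conservation, and evaluate the resulting one-dimensional surface integrals (your constant bookkeeping — the $\tfrac38$ from the shift, $c_{x,2}^{\rm Per}=\tfrac{1}{12\pi}\int_0^\infty t^2h^2=\tfrac18$, the layer integrals for LDA/GGA — is correct and matches Theorems \ref{generalasymp} and \ref{exactexchangethm}). However, there is a genuine gap exactly where you flag the ``main obstacle,'' and your proposed resolution would not close it. The entire difficulty of the exact-exchange expansion is that the classical pointwise discrepancy bound $|\gamma_{N,L}-\gamma^{\textnormal{ctm}}_{N,L}|=\mathcal{O}(L^{-3/2})$ (the best available from \cite{Friesecke1997}) produces, upon squaring and integrating against $|r-\tilde r|^{-1}$ over $Q_L^D\times Q_L^D$, an error of order $L^{-3}\cdot L^5=L^2$ — precisely the order of the surface term you are trying to compute. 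Your fallback of a short-range/long-range splitting with Hilbert--Schmidt bounds does not help: the Hilbert--Schmidt norm of the discrepancy is not small in the required sense, and no cutoff optimization can manufacture a gain that is not already present in the underlying lattice-point estimate. What is needed — and what you neither supply nor identify — is a pointwise discrepancy estimate that \emph{beats} $L^{-3/2}$ \emph{uniformly in the separation variable} $z=\tfrac{\pi}{L}D^{-1}(r-\sigma\tilde r)$, even though the summand oscillates on the scale of the lattice. This is Lemma \ref{central} of the paper: the exponent $\tfrac{34}{23}<\tfrac32$, obtained by Poisson summation followed by Heath-Brown's refinement of van der Corput's $k$th-derivative estimate applied to the phase $R|k+z|$ on dyadic annuli in Fourier space, is the sole source of the exponents $\tfrac{35}{23}$ and $\tfrac{45}{23}=5-\tfrac{70}{23}$ in the theorem; they cannot be derived from the structure of your argument alone.

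Two secondary points. First, your open-shell control invokes the classical bound on sums of three squares to get degeneracy $\mathcal{O}(L^{1+\epsilon})$; this applies only to the cube, whereas the theorem is stated for a general rectangular box, where the Fermi shell is an ellipsoid and the degeneracy bound must itself come from the lattice-point discrepancy estimate (the paper gets $d(N)\lesssim N^{34/69+\epsilon}$ from Lemma \ref{central} with $\alpha=0$, $z=0$; this weaker bound still suffices). Second, for the cross terms $J_{\sigma,\tau}$ with $\sigma\neq\tau$ and for the edge/corner image accumulation you would still need the explicit $\mathcal{O}(L(\log L)^k)$ bounds of the paper's auxiliary lemma in Section \ref{sec:exchangethm}, but that part is routine once the continuum approximation is justified.
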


This result extends that of a previous work by one of the authors \cite{Friesecke1997} as we determine not just the leading but also the next-order terms (of order $L^2$) and include the GGA functionals. Also to further illustrate the role of the boundary conditions we have included the Neumann case. 

An immediate corollary of Theorem \ref{mainthm} is the following simple exact constraint on GGAs. The next-order correction to Dirac exchange for the free electron gas with zero boundary conditions is captured exactly, i.e. 
\begin{align} \frac{E_x[\Psi_{N,L}] - E_x^{\rm GGA}[\rho_{N,L}]}{L^2} \ra 0 \quad \mbox{ as } N,L\ra \infty \mbox{ with } \bar{\rho} = \frac{N}{L^3} = constant \label{constraint} \end{align}
for all values of the average density $\rhobar$, if and only if the gradient enhancement factor $F(s)=1+G(s)$ defined by  \eqref{physlit} satisfies
\begin{align}
    \frac{3 }{8\pi}\int_0^\infty (1-h(t))^{\frac{4}{3}}G\biggr(2(3\pi^2)^{\frac{1}{3}}\frac{|\dot{h}(t)|}{(1-h(t))^{\frac{4}{3}}}\biggr) \mathrm{d} t = c_{x,2}^{\textnormal{Dir}}-c^{\textnormal{Dir}}_{\rm LDA} \label{exactconstr}
\end{align} 
where the constants and the function $h$ are those from Theorem \ref{mainthm}. In contrast with previous exact conditions on $G$ which refer to small-$s$ asymptotics \cite{PBE1996} (for the weakly inhomogeneous electron gas) respectively large-$s$ asymptotics \cite{Becke1988} (for atomic densities), the above  condition is an \textit{integral constraint} which sees the whole profile of $G$. Note that as $t$ varies from $0$ to $\infty$, the argument $s$ of $G$ (which corresponds to the reduced density gradient of the Dirichlet free electron gas along a ray moving from the boundary in perpendicular direction into the interior, see below) traces out all possible $s$ values from $\infty$ to $0$. The extent to which current GGAs fail to satisfy \eqref{exactconstr} is discussed in Section \ref{sec:numerics}. 

Finally we remark that, due to \eqref{highdens}, our results correspond to first taking a high-density limit and then a thermodynamic limit of the interacting $N$-electron gas in a box. 
The opposite order of limits was studied in \cite{graf1994correlation} (note that when first taking a thermodynamic limit the surface correction terms disappear).  
In the periodic case a coupled high-density/thermodynamic limit in which very different corrections (of RPA type) appear is studied in \cite{Christiansen2023}.

\textit{Strategy of the proof.} We follow the overall strategy introduced in \cite{Friesecke1997} of deriving an accurate continuum approximation to the ground state density matrix (see Theorem~\ref{continuumapproxthm}
below, or Theorems 4.1 and 4.2 in \cite{Friesecke1997}) and analyzing the ensuing interior and boundary contributions to the exchange energy. 
While the continuum approximation is the same already introduced in \cite{Friesecke1997}, the main advance, and most involved part of our work, is an improved error estimate 
(see Theorem \ref{continuumapproxthm}) which shows that it is accurate enough to infer the next-order contributions to the exchange functionals which are of the order of the surface area of the box. This is achieved by leveraging, on top of Fourier analysis techniques \cite{sogge2017fourier} as already used in \cite{Friesecke1997}, the theory of exponential sums \cite{graham_kolesnik_1991,HeathBrown2017}. The main step is the proof of the following technical lemma which relies on recent work by Heath-Brown \cite{HeathBrown2017}. 

\begin{lemma}\label{central} Let $\alpha \in \N_0^3$ and $D \in \R^{3\times 3}$ be a positive diagonal matrix. Then there exists $c= c(\alpha,D)>0$ such that
\begin{align} \biggr | \sum_{k\in \Z^3\cap B^D_R} (ik)^\alpha e^{i k\scpr z} - \int_{B^D_R} (i k)^\alpha e^{i k\scpr z} \mathrm{d}k \biggl| \leq c(1+R^{|\alpha|+\frac{34}{23}+\epsilon}) , \label{centralest} \end{align}
for all $R>0$ and $z$ with $|z|_{max}\coloneqq \max_{j\leq 3} \{|z_j|\}\leq \pi$, where $\N_0 = \N\cup \{0\}$ and $B^D_R\coloneqq \{k\in \R^3 : |D^{-1} k|\leq R\}$.
\end{lemma}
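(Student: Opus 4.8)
The plan is to reduce the multidimensional lattice-point discrepancy on the left of \eqref{centralest} to a one-dimensional exponential sum estimate that falls within the scope of Heath-Brown's recent bounds \cite{HeathBrown2017}. First I would reduce to the case $\alpha=0$: differentiating the identity $\sum_{k}e^{ik\scpr z}-\int e^{ik\scpr z}\dd k$ in the parameter $z$ produces the factor $(ik)^\alpha$ inside both the sum and the integral, so an estimate for $\alpha=0$ with a controlled dependence on $z$ (uniform for $|z|_{max}\le\pi$, together with a cruder bound on a slightly larger polydisc to absorb Cauchy estimates in $z$) upgrades to general $\alpha$ at the cost of the extra $R^{|\alpha|}$ factor. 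Next I would remove the matrix $D$: after rescaling, $B^D_R$ is an ellipsoid, and summing over $k\in\Z^3$ inside it is the same as a weighted count where the ``radius'' in each coordinate direction is $RD_{jj}$; since $D$ is fixed this only changes constants, so I will write the proof for a ball (or axis-aligned ellipsoid) of radius $R$.

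The heart of the matter is then the following: express the discrepancy $\sum_{k\in\Z^3\cap B_R}e^{ik\scpr z}-\int_{B_R}e^{ik\scpr z}\dd k$ via Poisson summation. Poisson summation in $\Z^3$ turns the sum into $\sum_{m\in\Z^3}\widehat{\mathbf 1_{B_R}(\,\cdot\,)e^{i(\,\cdot\,)\scpr z}}(2\pi m)$; the $m=0$ term is exactly the integral, so the discrepancy is $\sum_{m\ne 0}\widehat{\mathbf 1_{B_R}}(2\pi m - z)$ (the $e^{ik\scpr z}$ factor just shifts the Fourier variable). The Fourier transform of the indicator of a ball of radius $R$ is explicit in terms of Bessel functions — indeed $\widehat{\mathbf 1_{B_R}}(\xi)=R^3 h(R|\xi|)$ up to constants, with the very same $h(t)=3(\sin t-t\cos t)/t^3$ appearing in Theorem~\ref{mainthm} — and decays like $R^{3}(R|\xi|)^{-2}$ with oscillation $e^{\pm iR|\xi|}$. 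Plugging this in, the discrepancy becomes, up to constants, $R^{2}\sum_{m\ne 0}|2\pi m-z|^{-2}e^{\pm iR|2\pi m-z|}$ plus lower-order (non-oscillatory) tails. The non-oscillatory and far tails ($|m|$ large) are summable and contribute $O(R^{2})$, which is already better than the claimed $R^{34/23+\epsilon}$; wait — that is too strong to be true uniformly, so the real point is that the oscillatory sum over $m$ exhibits cancellation. Organizing $m$ into dyadic shells $|m|\sim M$ and summing the phase $e^{iR|2\pi m - z|}$ over the $\sim M^{2}$ relevant lattice directions is precisely a three-dimensional exponential sum; one further reduces it (fixing two coordinates, or passing to the curve $|2\pi m - z|=\text{const}$) to a one-dimensional sum $\sum_{n\sim X} e\bigl(R\,\phi(n/X)\bigr)$ with $\phi$ smooth and curved.

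The main obstacle — and the reason the exponent is the somewhat exotic $34/23$ rather than the classical $2/3$ or $131/208$ — is getting a sufficiently strong bound on this one-dimensional exponential sum \emph{uniformly in the shift $z$} and with the correct power of the length. This is exactly where Heath-Brown's estimate \cite{HeathBrown2017} (which improves the van der Corput / Bombieri–Iwaniec / Huxley line of bounds for sums of the form $\sum e(T f(n))$) is invoked: it supplies $\sum_{n\le X}e(Tf(n))\ll (T^{\kappa}X^{\lambda}+\dots)$ with exponents that, after the dyadic reassembly over $M$ and the bookkeeping of the Bessel asymptotics, telescope to $R^{34/23+\epsilon}$. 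Concretely I would (i) fix the dyadic shell, (ii) apply the exponential-sum bound in each shell with $T\asymp RM$ and summation length $\asymp M$, (iii) sum the resulting bounds over the $O(\log R)$ shells with $M$ ranging up to the point where the Bessel tail $R^{2}M^{-2}\cdot(\#\text{terms})$ stops beating the exponential-sum bound, and (iv) collect the trivial tail beyond that. The delicate points are checking the nonvanishing of the relevant second derivative of $\phi$ (curvature of the sphere, which holds away from degenerate directions — the axis-aligned coordinate planes need a separate, easier treatment since there the level curves degenerate but the count of lattice points is correspondingly smaller), and tracking the $z$-dependence so that all implied constants depend only on $D$ and $\alpha$ and not on $z$ in the stated range $|z|_{max}\le\pi$; this last uniformity is what makes the lemma usable in the continuum-approximation error estimate of Theorem~\ref{continuumapproxthm}. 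I expect essentially no difficulty in the reductions of the first paragraph and in the Poisson/Bessel bookkeeping; the entire weight of the proof rests on correctly importing and summing Heath-Brown's one-dimensional bound.
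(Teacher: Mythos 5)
Your overall strategy coincides with the paper's: Poisson summation turns the discrepancy into $\sum_{m\neq 0}\widehat{\chi_{B_R}}(2\pi m-z)$, the Fourier transform of the ball supplies the phase $e^{\pm iR|2\pi m-z|}$ with weight $|2\pi m-z|^{-2}$, the $m$-sum is cut into dyadic shells, and Heath-Brown's derivative estimate is applied one coordinate at a time (with partial summation for the weight and a separate treatment of the degenerate, near-axis directions). However, there is one essential missing ingredient and one faulty reduction.

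The essential gap is the absence of any mollification of the sharp cutoff $\chi_{B_R}$. In three dimensions $\widehat{\chi_{B_R}}(\xi)$ decays only like $R\,|\xi|^{-2}$, so $\sum_{m\neq 0}|2\pi m-z|^{-2}$ \emph{diverges}; your parenthetical claim that the "non-oscillatory and far tails are summable and contribute $O(R^2)$" is false (you sense something is wrong mid-sentence but misdiagnose it as the bound being "too strong" rather than the series failing to converge absolutely). Poisson summation for the sharp indicator is therefore not even well defined termwise, and no amount of cancellation in the dyadic shells rescues the far tail without a regularization. The paper replaces $\chi_{B_R}$ by a smoothed version $f_{R,H}$ supported in $B_{R+H}$, which (after $n$-fold integration by parts) effectively truncates the dual sum at $|m|\lesssim M$ with a controllable tail $R/(H^{n+1}M^n)$, at the price of a smoothing error $\lesssim R^2H$. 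The exponent $34/23$ is then obtained precisely by balancing $R^2H$ against the Heath-Brown contribution $R\,M^{11/12+\epsilon}$ with $M\sim 1/H$, i.e. $H=R^{-12/23}$; it does not "telescope" out of the dyadic reassembly alone. Without the smoothing parameter and this optimization your argument cannot produce the claimed exponent (nor, indeed, any finite bound). A second, smaller but genuine, problem is your reduction of general $\alpha$ to $\alpha=0$ by differentiating in $z$ and invoking Cauchy estimates: the $\alpha=0$ bound is only available for \emph{real} $z$ with $|z|_{\max}\le\pi$ (for real $z$ outside this range the integral is no longer periodic and the discrepancy is of size $R^3$, and for complex $z$ the summands grow like $e^{|k|\,|\mathrm{Im}\,z|}$), so neither a global Bernstein inequality nor a Cauchy estimate on a polydisc applies without redoing the whole estimate for complex arguments. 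The paper sidesteps this by running the identical Poisson-summation argument for the weighted cutoff $(ir)^\alpha f_{R,H}(r)$, picking up the factor $R^{|\alpha|}$ directly from the Leibniz expansion of $\partial_k^\alpha$ applied to the dual-side summand.
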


The exponent $\frac{34}{23}$ may seem peculiar at first, and we do not claim it to be optimal, but the main point is that it improves over the $\frac{3}{2}$ exponent obtained in \cite{Friesecke1997}. This improvement is necessary for rigorous derivation of the asymptotic terms of the order of magnitude of the surface area of the box. This can be quickly seen by integrating the square of an error proportional to $L^{-\frac{3}{2}}$ (like in Theorem 4.1 and 4.2 of \cite{Friesecke1997}) against the Coulomb potential in the double box $[0,L]^6$, which yields an error proportional to $L^2$ and is therefore not enough for our purpose.

Estimates of this kind were originally motivated by analytic number theory. In particular, by setting $z=0$ and $\alpha=0$ one recognizes the famous lattice point counting problem in $\R^3$ (also known as the sphere problem) \cite{Vinogradov1963,chen1963improvement,Chamizo1995,HeathBrown1999}. The remarkable difference here is that the estimate holds uniformly in $z$, even though the integrand oscillates, for typical $k\in B_R^D$, on the length scale of the lattice. 

\begin{figure}[ht!]
    \centering
    \includegraphics[width=0.7\textwidth]{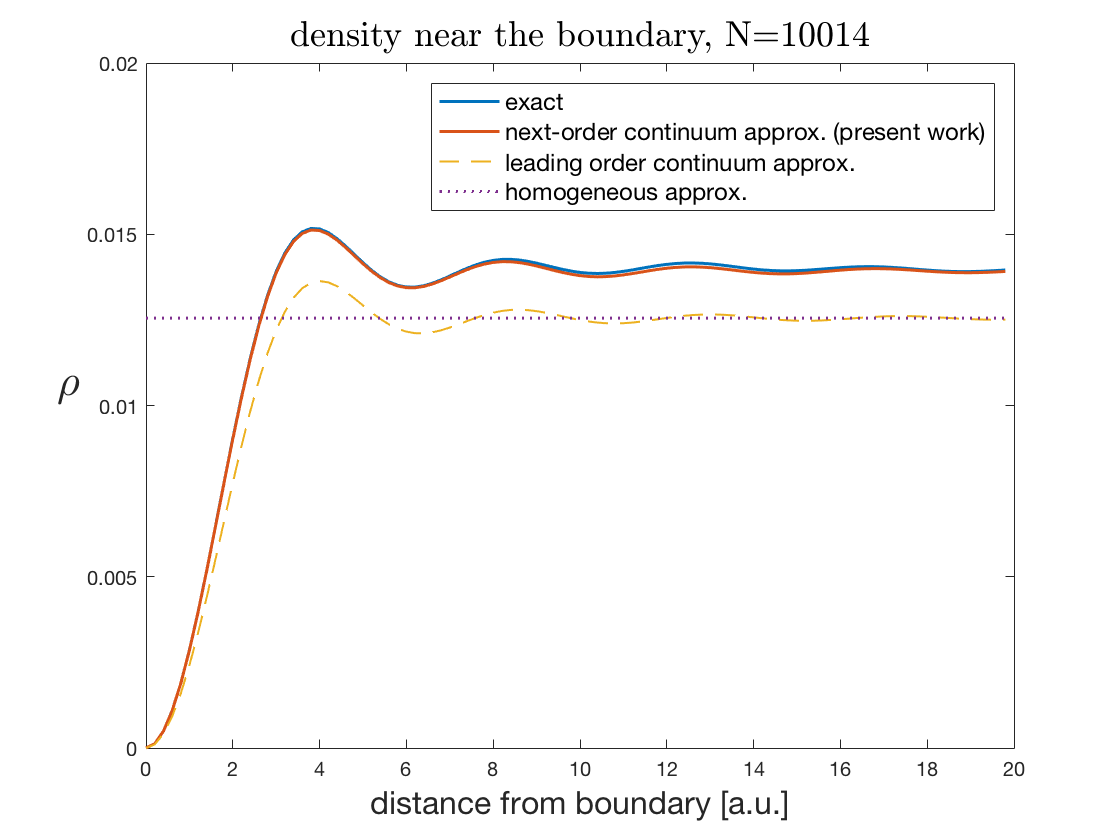}
    \caption{Boundary layer of the free electron gas with 10014 electrons in a three-dimensional box with zero boundary conditions. The exact density is given by \eqref{densitydir}, the leading order continuum approximation by its thermodynamic limit \eqref{layerapprox3}, and the next-order continuum approximation by eqs.~\eqref{layerapprox}--\eqref{layerapprox2}. The number of electrons per unit volume was normalized to the conduction electron density of copper.}
    \label{F:layerapprox}
\end{figure}

The continuum approximation of the density matrix which we justify with the help of the above lemma (see Theorem \ref{continuumapproxthm}) entails, in particular, the following accurate approximation to the boundary layer for zero boundary conditions and the box $[0,L]^3$: 
\begin{equation}  \label{layerapprox}
  \rho_{N,L}(r_0+r') = \rhobar\Bigl( \frac{p_{N,L}}{p_F} \Bigr)^3\bigl( 1 - h(2p_{N,L} |r'|)\bigr) + O(L^{-\frac{35}{23} + \varepsilon})
\end{equation}
whenever $r_0$ belongs to the boundary of the box, its distance from the edges is of order $L$, and $r'$ points in normal direction to $\partial[0,L]^3$ into the interior. Here $p_{N,L}$ is the Fermi momentum of the finite system, which is found (see Lemma \ref{latticeeffect}) to differ from its thermodynamic limit $p_F=(3\pi^2\rhobar)^{1/3}$ by an order $L^{-1}$ shift, 
\begin{equation} \label{layerapprox2}
   p_{N,L} = p_F + \frac{3\pi}{4}L^{-1} + O(L^{\frac{35}{23}+\epsilon}).
\end{equation}
An important and rather subtle consequence of \eqref{layerapprox}--\eqref{layerapprox2} is the following. The -- at first sight reasonable -- leading order approximation (thermodynamic limit) of the boundary layer profile in normal direction,
\begin{equation} \label{layerapprox3}
\lim_{\substack{ N, \, L \to\infty \\ N/L^3=\rhobar}} \rho_{N,L}(r_0+r') = \rhobar\bigr(1-h(2p_F|r'|)\bigr), 
\end{equation}
while enough to resolve the surface contribution for GGA exchange\footnote{note that the integrand in \eqref{exactconstr} is precisely the GGA correction to the exchange energy density for this profile after making the change of variables $t=2p_F|r'|$ and setting $\rhobar$=1}, \textit{yields a surface contribution of the wrong size for exact exchange}, missing e.g. the $\sim L^{-1}$ rearrangement of the bulk density implied by \eqref{layerapprox}--\eqref{layerapprox2}. The much better accuracy of the next-order approximation \eqref{layerapprox}--\eqref{layerapprox2} can be seen in Figure \ref{F:layerapprox}.

\textit{Structure of the paper.} We start with a small subsection to introduce the notation used throughout the paper. In Section~\ref{sec:electrongasformulas} we begin by recalling some basic facts about the ground state of the free electron gas in the box under different boundary conditions. In Section~\ref{sec:fermimomentum} we discuss the control of open shell effects and the Fermi momentum asymptotics in the thermodynamic limit. Section~\ref{sec:continuumapprox} contains the proof of Lemma \ref{central} and the derivation of the continuum approximation of the density matrix. In Section~\ref{sec:exchangethm} we present the proof of Theorem \ref{mainthm} by splitting it into treating semi-local functionals (Theorem \ref{generalasymp}) and exact exchange (Theorem \ref{exactexchangethm}). Section~\ref{sec:kinetic} briefly discusses the asymptotics of the kinetic energy, which can easily be extracted with our methods. Section~\ref{sec:numerics} compares the asymptotic behaviour of different exchange functionals  
(exact exchange, LDA, B88, PBE, PBEsol) when applied to the free electron gas in a box with zero boundary conditions and up to 30~000 electrons. We find good agreement between asymptotics and numerics. Physics-minded readers may want to skip Sections~\ref{sec:electrongasformulas}--\ref{sec:kinetic} and move forward directly to Section~\ref{sec:numerics}.

\subsection{Notation}

The following notation will be used throughout the text.
\begin{itemize} \item We use the standard big-O and small-o notation: for functions $f:(0,\infty) \ra \R$ and $g:\R\ra (0,\infty)$, we say that $f = \mathcal{O}(g)$ respectively $f = \smallO(g)$ if
\begin{align*} \limsup_{L\ra\infty} \frac{|f(L)|}{g(L)} <\infty \quad \mbox{ respectively } \quad \limsup_{L\ra \infty} \frac{|f(L)|}{g(L)} = 0.  \end{align*}
Moreover, for functions $f,g:\R \mapsto \R$, we say that $f(L) \lesssim g(L)$ or $f(L) \sim g(L)$ to indicate, respectively, the existence of a constant $C>0$ which does not depend on $L$ such that 
\begin{align*} |f(L)| \leq C |g(L)| \quad \mbox{ or } \quad C^{-1} |f(L)| \leq |g(L)| \leq C |f(L)| \end{align*}
for all sufficiently large values of $L$. Sometimes we will also use the notation $\lesssim_{\epsilon}$ to indicate dependence of the implicit constant on an additional parameter ($\epsilon$ in this case).
\item Throughout the text, $D \in \R_+^{3\times 3}$ will always denote a  diagonal matrix with entries $d_1,d_2,d_3 >0$, and $|D| \coloneqq \det D$ stands for its determinant. The balls of radius $R$ and $D$-radius $R$ are denoted by
\begin{align*} B_R \coloneqq \{ r\in \R^3 : |r|\leq R\} \quad \mbox{ and } \quad B^D_R \coloneqq \{ r\in \R^3 : |D^{-1} r|\leq R\}. \end{align*}
The cubic box, the $D$-rectangular box, and their re-scaled versions are denoted by
\begin{align*} Q \coloneqq [0,1]^3, \;\;\;  & Q^D  \coloneqq [0,d_1]\times [0,d_2]\times [0,d_3] , \\
Q_L \coloneqq [0,L]^3, \;\;\; & Q_L^D \coloneqq [0,L\, d_1]\times [0, L \, d_2]\times [0,L \, d_3]. \end{align*}
\item  For the Fourier transform of a function $f:\R^n \ra \C$, we use the normalization convention
\begin{align} \widehat{f}(k) = \int_{\R^n} f(r) e^{-i k\scpr r} \mathrm{d}r \label{fourierconvention} \end{align}
where $k\scpr r \coloneqq \sum_{j=1}^n k_j r_j$ is the standard Euclidean scalar product. We denote the inverse Fourier transform of $f$ by $\widecheck{f}$.
\item For a set $\Omega \subset \R^n$, we use $\chi_{\Omega}$ for 
its characteristic function.  In particular, with the above convention for  the Fourier transform, in $\R^3$ one has that $\widehat{\chi}_{B_1} = \frac{4\pi}{3} h(|k|)$, where the function $h:\R \mapsto \R$ will appear many times in the sequel and is given by
\begin{align} h(t) = 3\frac{\sin(t) - t \cos (t)}{t^3}. \label{hdef} \end{align}
For an elementary derivation of this formula see e.g. \cite{Friesecke1997} Lemma 6.1.
\item The group generated by reflections at coordinates hyperplanes of $\R^3$ is denoted by $G$ and its elements by $\sigma$, i.e.,
\begin{align} G = \{\sigma \in \R^{3\times 3} : \sigma \mbox{ diagonal and } \sigma_{jj} = \pm 1 \mbox{ for any } j=1,2,3\}. \label{reflectiondef}  \end{align}
\item The projection on the $i^{th}$ coordinate hyperplane  is denoted by $\pi_i : \R^3 \ra \R^2$, e.g. $\pi_1(r_1,r_2,r_3) = (r_2,r_3)$. Moreover, for any $z\in \R^3$, we define $|z|_{\max} \coloneqq \max_{j\leq 3} |z_j|$ .
\item The positive reals, non-negative integers and the additive group of order 2 are denoted, respectively, by  $\R_+ = (0,\infty)$, $\N_0 = \N \cup \{0\}$ and $\Z_2 = \{0,1\}$.
\item For any set in $\R^3$, we use $|\cdot|$ for either its volume, surface area or cardinality depending on whether the set has dimension 3, 2, or 0, e.g, $|B_1| = \frac{4\pi}{3}$, $|\partial B_1| = 4\pi$, and $|B_1 \cap \Z^3|$ is the number of elements in $\Z^3$ with Euclidean norm smaller than $1$.
\end{itemize}

\section{Ground state of the free electron gas : closed shell formulas}
\label{sec:electrongasformulas}
We now recall some basic facts and formulas for the ground state of the free $N$-electron gas in the box subject to Dirichlet, Periodic or Neumann boundary conditions. (The reader can find a detailed account of the eigenstates in   \cite{ReedSimonIV} and of the density matrices and exchange energy in  \cite{Friesecke1997}.)

It is well known that the Laplacian in $Q^D_L$ under any of the discussed boundary conditions (BCs) is diagonalizable in the sense that there exists an orthonormal basis of $L^2(Q^D_L)$ consisting of eigenvectors. Furthermore, the eigenvectors and eigenvalues can be labelled by
\begin{itemize} 
\item vectors $k\in \Z^3$ in case of periodic boundary conditions:
\begin{align} \phi^L_k(r) = \frac{1}{\sqrt{|D| L^3}}  \, e^{i \frac{2 \pi}{L} D^{-1} k \scpr r},\quad \lambda_k = \frac{4 \pi^2|D^{-1}k|^2}{L^2} . \label{eigenvectorsper} \end{align}
\item vectors $k \in \N^3$ in case of Dirichlet boundary condition:
\begin{align} \phi^L_k(r) = \frac{1}{\sqrt{|D| L^3}} \prod_{i=1}^3 \sqrt{2}\sin\biggr(k_i \frac{\pi}{d_i L} r_i\biggr), \quad \lambda_k = \frac{\pi^2|D^{-1} k|^2}{L^2} . \label{eigenvectors} \end{align}
\item vectors $k\in\N_0^3$ in case of Neumann boundary conditions:
\begin{align} \phi^L_k(r) = \frac{1}{\sqrt{|D| L^3}}   \prod_{\substack{i=1 \\ k_i\neq 0}}^3 \sqrt{2}\cos \biggr(k_i \frac{\pi}{d_i L} r_i\biggr), \quad \lambda_k = \frac{\pi^2|D^{-1} k|^2}{L^2} . \label{eigenvectorsneu} \end{align}\end{itemize}
As a consequence, one possible ground state for the $N$-electron gas (i.e. a normalized anti-symmetric minimizer of \eqref{kinen} under one of the BCs \eqref{periodicBCs}--\eqref{neumannBCs}) with $N$ even is given by the determinantal wave function (or Slater determinant)
\begin{align} \Psi_{N,L}(x_1,...,x_N) = \frac{1}{\sqrt{N!}} \det \begin{pmatrix} \psi_1(x_1) & \ldots & \psi_1(x_N) \\ \vdots & & \vdots \\ \psi_N(x_1) 
& \ldots & \psi_N(x_N)   \end{pmatrix} , \label{slaterdeterminant} \end{align}
where $x_\ell=(r_\ell,s_\ell) \in Q^D_L \times \Z_2$ are the space-spin variables and $\psi_i$ are the space-spin orbitals given by
\begin{align} \psi_{2i-1}(x) = \phi^L_{k_i}(r)\chi_{1}(s) , \quad \quad \psi_{2i}(x) = \phi^L_{k_i}(r) \chi_{0}(s) \quad \mbox{ for } i \in\biggr\{1,...,\frac{N}{2}\biggr\} ,\label{doublyoccupied} \end{align} 
where $\phi^L_{k_i}$ are the eigenfunctions defined in \eqref{eigenvectorsper}--\eqref{eigenvectorsneu}, and $\{k_i\}_{i\leq N/2}$ is any subset of distinct vectors in $\N^3$ (in the Dirichlet case) such that $B_{\max_i |k_i|-\epsilon}\cap \N^3 \subset \{k_i\}_{i\leq N/2}$, for all $\epsilon>0$. In fact, the collection of all such Slater determinants forms a basis for the ground state eigenspace of the free $N$-electron gas in $Q^D_L$.

Let us also introduce the Fermi radius\footnote{Note that the Fermi radius also depends on $D$. However, as $D$ will be fixed and $N,L$ will vary, we will not exhibit this dependence in our notation.}
\begin{align} R_N \coloneqq \begin{dcases} \min \{ R>0 : \frac{N}{2}\leq |B^D_{R}\cap \Z^3|\}, &\mbox{ for periodic BCs,} \\ 
 \min \{ R>0 : \frac{N}{2}\leq |B^D_{R}\cap \N^3|\}, &\mbox{ for Dirichlet BCs,}  \\
 \min \{ R>0 : \frac{N}{2}\leq |B^D_{R}\cap \N_0^3|\}, &\mbox{ for Neumann BCs.} \end{dcases} \label{fermiradiidef} \end{align} 
 In the following, for boundary-condition-dependent quantities like $R_N$ we will use   superscripts Per, Dir, and Neu.
The ground state of the free-electron gas is unique for any $N$ satisfying the closed shell condition
\begin{align} 
    \frac{N}{2} = \begin{dcases} |B^D_{R^{\textnormal{Per}}_N} \cap \Z^3|, &\mbox{for periodic BCs,} \\
    |B^D_{R^{\textnormal{Dir}}_N} \cap \N^3|, &\mbox{for Dirichlet BCs,} \\
    |B^D_{R^{\textnormal{Neu}}_N} \cap \N_0^3|, &\mbox{for Neumann BCs.}\end{dcases} \label{eq:closedshellcondition}
\end{align}
In particular, by recalling that the spinless single-particle density matrix of a state $\Psi$ is defined as
\begin{align} \gamma_{\Psi}(r,\tilde{r}) \coloneqq  N \!\!\!\!\! \sum_{s_1,...,s_N\in \Z^2}\int_{(\R^3)^{N-1}}\!\!\Psi(r,s_1,r_2,s_2,...r_N,s_N) \overline{\Psi(\tilde{r},s_1,r_2,s_2,...,r_N,s_N)}\mathrm{d}r_2...\mathrm{d}r_N , \label{densitymatrixdef} \end{align}
for any $N$ satisfying the closed shell condition \eqref{eq:closedshellcondition}, the spinless density matrix of the (unique) aforementioned ground state is given by
\begin{align}
    \gamma_{N,L}(r,\tilde{r}) = \begin{dcases} \frac{2}{|D| L^3} \sum_{k\in \Z^3\cap B^D_{R_N^{\textnormal{Per}}}} e^{i \frac{2\pi}{L} D^{-1} k \scpr (r-\tilde{r})}, &\mbox{ for periodic BCs,} \\
    \frac{1}{4|D| L^3} \sum_{\sigma \in G} \det \sigma \sum_{k\in \Z^3\cap B^D_{R_N^{\textnormal{Dir}}}} e^{i\frac{\pi}{L} D^{-1}k\scpr (r-\sigma \tilde{r})}, &\mbox{ for Dirichlet BCs,} \\
     \frac{1}{4|D|L^3} \sum_{\sigma \in G} \sum_{k \in \Z^3 \cap B^D_{R_N^{\textnormal{Neu}}}}e^{i \frac{\pi}{L}D^{-1} k\scpr (r-\sigma \tilde{r})}, &\mbox{ for Neumann BCs,} \end{dcases} \label{densitymatrix}
\end{align}
where $G$ is the reflection group defined in \eqref{reflectiondef}. Therefore, the single-particle (or one-body) density can be written as
\begin{align} \rho_{N,L}(r) = \begin{dcases}  \gamma^{\textnormal{Per}}_{N,L}(r,r) = \frac{N}{|D|L^3} , &\mbox{ for periodic BCs,} \\
\gamma^{\textnormal{Dir}}_{N,L}(r,r) =  \frac{1}{4|D|L^3} \sum_{\sigma \in G} \det \sigma \sum_{k\in \Z^3 \cap B^D_{R_{N}^{\textnormal{Dir}}}} e^{i\frac{\pi}{L}D^{-1} k \scpr (r-\sigma r)} , &\mbox{ for Dirichlet BCs,} \\
\gamma^{\textnormal{Neu}}_{N,L}(r,r) =  \frac{1}{4|D|L^3} \sum_{\sigma \in G} \sum_{k\in \Z^3 \cap B^D_{R_N^{\textnormal{Neu}}}} e^{i\frac{\pi}{L}D^{-1} k \scpr (r-\sigma r)}  , &\mbox{ for Neumann BCs,} \end{dcases} \label{densitydir}
\end{align}
and the exact exchange energy can be rewritten\footnote{The equivalence of \eqref{exch0} and \eqref{exchDM} with $\gamma_\Psi$ defined by \eqref{densitymatrixdef} is in fact valid for any Slater determinant $\Psi$ of doubly-occupied spatial orbitals. It follows from \eqref{slaterdeterminant}-\eqref{doublyoccupied} by straightforward calculation.} as
\begin{align} E_x[\Psi_{N,L}] = -\frac14\int_{Q^D_L\times Q^D_L} \frac{|\gamma_{N,L}(r,\tilde{r})|^2}{|r-\tilde{r}|} \mathrm{d}r\mathrm{d}\tilde{r} , \label{exchDM} \end{align}
with $\gamma_{N,L}$ from equation \eqref{densitymatrix}.

\begin{remark*} The above expression of density and density matrix as an average of an exponential sum over the reflection group was introduced and derived in \cite{Friesecke1997}. For the Neumann case the derivation is analogous. \end{remark*}

The above expressions have a few simple but important symmetries that we state as a lemma for further reference. (The proof is a straightforward verification.)
\begin{lemma}[Symmetries of $\rho_{N,L}$]
\label{rhosymmetries} Let $\rho_{N,L}$ be defined by equation \eqref{densitydir}, then $|\nabla \rho_{N,L}|$ and $\rho_{N,L}$ are unchanged under the following reflections:
\[  r_i \mapsto d_iL-r_i 
. \]\end{lemma}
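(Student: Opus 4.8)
The plan is to verify the two claimed invariances — of $\rho_{N,L}$ and of $|\nabla\rho_{N,L}|$ — directly from the explicit formulas, treating the three boundary conditions in turn; the only slightly non-mechanical point is transferring a symmetry of the scalar field $\rho_{N,L}$ to a symmetry of $|\nabla\rho_{N,L}|$. Throughout, write $T_i$ for the affine reflection realizing $r_i\mapsto d_iL-r_i$, i.e.\ $T_i(r)=S_ir+d_iL\,e_i$ with $S_i=\operatorname{diag}(1,\dots,-1,\dots,1)$ (the $-1$ in slot $i$), which is an isometry of $\R^3$ with $S_i=S_i^\top=S_i^{-1}$.

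For periodic boundary conditions there is nothing to prove: by \eqref{densitydir}, $\rho_{N,L}$ is the constant $N/(|D|L^3)$, so it and its vanishing gradient are invariant under every such reflection. For Dirichlet and Neumann boundary conditions I would argue from the orbital representation $\rho_{N,L}(r)=2\sum_{j\le N/2}|\phi^L_{k_j}(r)|^2$ obtained from \eqref{densitymatrixdef}, \eqref{eigenvectors}, \eqref{eigenvectorsneu}. Under $r_i\mapsto d_iL-r_i$ only the $i$-th factor of $\phi^L_k$ is affected, and the angle-addition identity gives $\sin(k_i\tfrac{\pi}{d_iL}r_i)\mapsto-(-1)^{k_i}\sin(k_i\tfrac{\pi}{d_iL}r_i)$ in the Dirichlet case and $\cos(k_i\tfrac{\pi}{d_iL}r_i)\mapsto(-1)^{k_i}\cos(k_i\tfrac{\pi}{d_iL}r_i)$ in the Neumann case (the $k_i=0$ factor being constant, hence fixed), so the modulus $|\phi^L_k|$, and hence $\rho_{N,L}$, is unchanged. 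Equivalently, one can work directly with the exponential-sum formula \eqref{densitydir}: the substitution $k\mapsto S_ik$ is a bijection of $\Z^3\cap B^D_{R_N}$ since $|D^{-1}\cdot|$ is even in each coordinate, and it exactly cancels the sign change of the $i$-th component of $r-\sigma r$ caused by the reflection, while the translation part of $T_i$ contributes only the residual phase $e^{i\pi k_i(1-\sigma_{ii})}=1$.

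Finally, since $\rho_{N,L}$ is a finite sum of products of trigonometric functions it is smooth, so differentiating the identity $\rho_{N,L}\circ T_i=\rho_{N,L}$ and using $S_i^\top=S_i$ yields $\nabla\rho_{N,L}(r)=S_i\,(\nabla\rho_{N,L})(T_ir)$; taking Euclidean norms and using orthogonality of $S_i$ gives $|\nabla\rho_{N,L}|=|\nabla\rho_{N,L}|\circ T_i$. The argument for the reflections in the other two coordinate directions is identical. There is no genuine obstacle here — the statement is a verification — and the only step one should not gloss over is this last chain-rule computation, which is precisely what makes invariance of $\rho_{N,L}$ under the isometry $T_i$ force invariance of the modulus of its gradient.
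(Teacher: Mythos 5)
Your verification is correct and is exactly the ``straightforward verification'' the paper alludes to (the paper gives no written proof): invariance of the orbitals' moduli (or equivalently the bijection $k\mapsto S_ik$ in the exponential sum) gives invariance of $\rho_{N,L}$, and the chain rule with the orthogonal reflection $S_i$ transfers this to $|\nabla\rho_{N,L}|$. No gaps.
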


\section{Open shell effects and Fermi momentum asymptotics in the thermodynamic limit} \label{sec:fermimomentum}

In this section we consider two important aspects of the FEG in the thermodynamic limit: (i) we justify the use of formulas \eqref{densitymatrix} for determinantal ground states with a general number of particles $N \in \N$ (Lemma \ref{openshellcontrol}), and (ii) we derive a two-term asymptotic formula for the finite-size Fermi momentum (Lemma \ref{latticeeffect}).

\subsection{Open shell effects}

Our goal now is to show that, in the thermodynamic limit, the single-particle density matrix (and its derivatives) for any determinantal ground state is pointwise close to the unique closed shell formulas in \eqref{densitymatrix}. 

To this end, let us introduce the previous and the current shells for some $N\in \N$ as
\begin{align*} 
   &N_- \coloneqq \max\{ n\in \N : n < N \mbox{ and } n/2 = |B^D_{R}\cap \N^3| \mbox{ for some } R>0\} = \max_{\epsilon>0} 2|B^D_{R_N-\epsilon}\cap \N^3|, \\  
   &N_+ \coloneqq \min\{ n \in \N : n\geq N \mbox{ and }n/2 = |B^D_R\cap \N^3| \mbox{ for some } R>0 \} = 2|B^D_{R_N}\cap \N^3|,
\end{align*}
where again $\N$ is replaced by $\N_0$ and $\Z$ for Neumann and periodic boundary conditions, respectively. Then according to \cite[Section 3]{Friesecke1997}, one can work with the unique (closed shell) ground state density matrix $\gamma_{N_-,L}$ up to a pointwise error proportional to $N^{-1/2}$ (or $L^{-\frac{3}{2}}$). However, as previously remarked,   these estimates are not enough to justify the use of the exact closed shell formulas on the analysis of the next-to-leading order term in the asymptotic expansion for the exact exchange. To improve on this error estimate and include the more general rectangular box case, we use our own estimate in Lemma \ref{central}. More precisely, setting $\alpha =0$ and $z=0$ in Lemma \ref{central}, we obtain\footnote{Note that for the case of the cubic box, finding the optimal (algebraic) coefficient on the remainder goes under the name of sphere problem and has been studied by many authors \cite{Vinogradov1963,chen1963improvement,Chamizo1995,HeathBrown1999}. In particular, better estimates (with smaller exponent) than \eqref{generalsphereest} are available in this case.}
\begin{align} |\Z^3\cap B^D_R|  - \frac{4\pi}{3}|D| R^3 = \mathcal{O}(R^{\frac{34}{23}+\epsilon}). \label{generalsphereest} \end{align}
As a consequence, by adapting the arguments in \cite{Friesecke1997} we can prove the following lemma.
\begin{lemma}[Open shell control]\label{openshellcontrol}
Let $\alpha,\beta \in \N_0^3$ and $\epsilon>0$. Then, there exists a constant $c = c(\alpha,\beta,\epsilon) >0$ independent of $N$ and $L$ such that for any determinantal ground state of the free $N$-electron gas in $Q^D_L$ (under either Dirichlet, Neumann or periodic boundary conditions) we have
\begin{align} |\partial^\alpha_r \partial^\beta_{\tilde{r}} \gamma_{N,L}(r,\tilde{r})-\partial^\alpha_r \partial^\beta_{\tilde{r}} \gamma_{N_-,L}(r,\tilde{r})| \leq c \frac{N^{\frac{|\alpha|+|\beta|}{3}+\frac{34}{69}+\epsilon}}{L^{3+|\alpha|+|\beta|}}. \label{openshellcontroleq} \end{align}
In particular, if $\bar{\rho} = \frac{N}{|Q^D|L^3}$ is constant, one has 
\begin{align*} \partial^\alpha_r \partial^\beta_{\tilde{r}} \gamma_{N,L}(r,\tilde{r}) = \partial^\alpha_r\partial^\beta_{\tilde{r}} \gamma_{N_-,L}(r,\tilde{r}) + \mathcal{O}(L^{-\frac{35}{23}+\epsilon}) . \end{align*}
\end{lemma}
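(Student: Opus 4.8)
The plan is to bound the difference $\gamma_{N,L}-\gamma_{N_-,L}$ by the contribution of the ``partially filled'' shell, i.e.\ the eigenvectors with $k \in (\overline{B^D_{R_N}} \setminus B^D_{R_N-\epsilon}) \cap \N^3$ (and the analogous index set for Neumann/periodic), and to control this contribution using the lattice point estimate \eqref{generalsphereest}. First I would write, for the closed-shell density matrices, $\partial^\alpha_r \partial^\beta_{\tilde r}\bigl(\gamma_{N,L}-\gamma_{N_-,L}\bigr)(r,\tilde r)$ as a sum over the reflection group $G$ of partial derivatives of $\frac{\pi}{L}D^{-1}k$-factors times $e^{i\frac{\pi}{L}D^{-1}k\scpr(r-\sigma\tilde r)}$, where $k$ ranges over the shell annulus. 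Each term in this sum is bounded in absolute value by $\frac{c}{|D|L^3}\bigl(\frac{|k|}{L}\bigr)^{|\alpha|+|\beta|}$, so the whole expression is at most $\frac{c}{|D|L^{3+|\alpha|+|\beta|}} R_N^{|\alpha|+|\beta|}$ times the number of lattice points in the annulus $\overline{B^D_{R_N}}\setminus B^D_{R_N-\epsilon}$. The key observation for a general determinantal (possibly open-shell) ground state is that its density matrix differs from $\gamma_{N_-,L}$ only through orbitals lying in this same partially filled shell, so the same bound applies verbatim; this is exactly the argument of \cite[Section 3]{Friesecke1997}, which I would adapt rather than reprove.

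The crux is therefore to show that the number of lattice points in the shell annulus is $\mathcal{O}(R_N^{34/23+\epsilon})$. This follows from \eqref{generalsphereest}: writing $|\Z^3 \cap B^D_{R_N}| - |\Z^3 \cap B^D_{R_N-\epsilon}|$ and inserting the two-term expansion $\frac{4\pi}{3}|D|R^3 + \mathcal{O}(R^{34/23+\epsilon})$ for each, the leading volume terms would naively produce an $\mathcal{O}(R_N^2\,\epsilon)$ discrepancy; the point is that by the very definition of $N_-$ (the previous closed shell), there are no lattice points strictly between radius $R_{N_-}$ and $R_N$, and both $R_{N_-}$ and $R_N$ satisfy the closed-shell identity, so the count in the annulus equals $\frac{N_+-N_-}{2}$, which by \eqref{generalsphereest} applied at $R_N$ and at $R_{N_-}$ (together with $R_N - R_{N_-} = \mathcal{O}(R_N^{-1+34/23+\epsilon}) = o(1)$) is $\mathcal{O}(R_N^{34/23+\epsilon})$. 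Combining with $R_N \sim N^{1/3} \sim L$ gives \eqref{openshellcontroleq}; restricting to $\bar\rho = N/(|Q^D|L^3)$ constant converts $N^{34/69}$ into $L^{34/23}$ and the prefactor $N^{(|\alpha|+|\beta|)/3}/L^{3+|\alpha|+|\beta|}$ into $L^{-3}$, yielding the stated $\mathcal{O}(L^{-35/23+\epsilon})$.

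The main obstacle I anticipate is not any single estimate but the bookkeeping needed to make the open-shell reduction fully rigorous in the rectangular ($D \neq I$) case: one must verify that an arbitrary determinantal minimizer of \eqref{kinen} indeed has all orbitals with $|D^{-1}k| < R_N$ occupied and all with $|D^{-1}k| > R_N$ empty (so that the symmetric difference of occupied-orbital sets between it and the closed-shell state lies in the thin annulus), and that the density-matrix formula \eqref{densitymatrix} for closed shells, together with its symmetrization over $G$ in the Dirichlet/Neumann cases, carries over to the partially filled shell term. Once this structural fact is in hand — it is essentially combinatorics on which eigenvalues can coincide — the analytic content is entirely supplied by \eqref{generalsphereest}, and the derivative bounds are routine since differentiating the exponentials only pulls down polynomially bounded factors $\frac{\pi}{L}D^{-1}k$ with $|k| \le R_N$.
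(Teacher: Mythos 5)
Your proposal is correct and follows essentially the same route as the paper: bound the difference by the open-shell contribution, control the shell degeneracy $N_+-N_-=\mathcal{O}(R_N^{34/23+\epsilon})$ via the lattice-point estimate \eqref{generalsphereest} (the volume terms cancelling because no lattice points lie strictly between $R_{N_-}$ and $R_N$), and multiply by the sup-norm bound $\lesssim N^{(|\alpha|+|\beta|)/3}L^{-3-|\alpha|-|\beta|}$ on derivatives of individual orbitals. The one step you defer to \cite{Friesecke1997} — that arbitrary orthonormal combinations within the degenerate shell give the same bound — is handled in the paper by a short Cauchy--Schwarz argument using the unitarity of the coefficient matrix, exactly as you anticipate.
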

\begin{proof} For simplicity we disregard spin here. First, we note that by eq.~\eqref{generalsphereest}, the degeneracy of the open shell can be controlled by
\begin{align} d(N) = N_+-N_- \lesssim_{\epsilon} R_N^{\frac{34}{23}+\epsilon}\lesssim_{\epsilon} N^{\frac{34}{69}+\epsilon} ,\label{degeneracycontrol} \end{align}
where the $\epsilon$ in the second inequality is different from the first by a factor of $1/3$. Next, if $\Psi$ is a determinantal ground state of the free $N$-electron gas in $Q^D_L$, then we know (see section~\ref{sec:electrongasformulas}) that, up to a phase factor,
\begin{align*} \Psi = \underbrace{\biggr(\bigwedge_{k \in B_{R_{N_-}}} \phi^L_{k} \biggr)}_{= \Psi_{N_-,L}}\wedge \widetilde{\phi}_1 \wedge ... \wedge \widetilde{\phi}_{N-N_-}  ,\end{align*}
where $\wedge$ is the anti-symmetric tensor product (see \eqref{slaterdeterminant}) and $\{\tilde{\phi}_i \}_{i\leq N-N_-}$ is a set of orthonormal functions given by linear combinations of the orbitals in the open shell, i.e.
\begin{align*} \widetilde{\phi}_i = \sum_{\substack{k \in \N^3 \\ |k| = R_N}} c_{ik} \phi_{k}^L \quad \mbox{ for some $\{c_{ik}\}_{i,j} \subset \C$ satisfying $\sum_{k} c_{ik} \overline{c_{jk}} = \delta_{ij}$.} \end{align*}
In particular, $C^\ast = \{\overline{c_{ki}}\}$ is unitary (from $\C^{N-N_-}$ to $\C^{d(N)}$) and we have
\begin{align}
    \biggr|\sum_{i,j,k} c_{ij} \overline{c_{ik}} a_j b_k \biggr|\leq \biggr(\sum_j |a_j|^2\biggr)^\frac12 \biggr(\sum_k |b_k|\biggr)^\frac12  \label{eq:isometry}
\end{align}
for any $(a_1,..a_{d(N)}),(b_1,...,b_{d(N)}) \in \C^{d(N)}$. We can now use the formula
\begin{align*} \gamma_{\Psi}(r,\tilde{r}) = \underbrace{\sum_{k \in B_{R_{N_-}}\cap \N^3} \phi_k^L(r)\overline{\phi_k^L}(\tilde{r})}_{= \gamma_{N_-,L}(r,\tilde{r})} + \sum_{i=1}^{N-N_-} \widetilde{\phi}_i(r)\overline{\widetilde{\phi}_i}(\tilde{r}) \end{align*}
and the estimates \eqref{degeneracycontrol},\eqref{eq:isometry}, and $|\partial^\alpha \phi_{k}^L(r)| \leq c_\alpha (R/L)^{|\alpha|} L^{-\frac32} \leq c_\alpha N^{\frac{|\alpha|}{3}}L^{-\frac32 - |\alpha|}$ (see \eqref{eigenvectorsper}--\eqref{eigenvectorsneu}) to conclude that
\begin{align*}  |\partial^\alpha_r\partial^\beta_{\tilde{r}}\gamma_{\Psi}(r,\tilde{r})-\partial^\alpha_r\partial^\beta_{\tilde{r}}\gamma_{N_-,L}(r,\tilde{r})| &= \biggr|\sum_{i=1}^{N-N_-}\partial^\alpha \widetilde{\phi_i}(r) \partial^\beta \overline{\widetilde{\phi}_i}(\tilde{r})\biggr| = \biggr|\sum_{i,j,k} c_{ij} \overline{c}_{ik} \partial^\alpha \phi_j^L(r) \overline{\partial^\beta \phi_k^L}(\tilde{r}) \biggr| \\
&\leq \biggr(\sum_{|j|= R_N} |\partial^\alpha \phi_j^L(r)|^2\bigg)^\frac12 \biggr(\sum_{|k|=R_N} |\overline{\partial^\beta \phi_k^L}(\tilde{r})|^2\biggr)^\frac12 \\
&\lesssim \frac{N^{\frac{|\alpha|+|\beta|}{3}+\frac{34}{69}+\epsilon}}{L^{3+|\alpha|+|\beta|}}. \end{align*} \end{proof}
\begin{remark*} One could equally well consider the density matrix of the current closed shell $\gamma_{N_+,L}$ in Lemma \ref{openshellcontrol}.\end{remark*}
\subsection{Fermi momentum asymptotics}

Here we derive the asymptotics of the finite-size Fermi momentum appearing in \eqref{layerapprox2}.

For the non-interacting electron gas model, the finite-size Fermi momentum is defined as the momentum of the highest occupied orbital of the ground state wave function. For the free $N$-electron gas in $Q^D_L$, it is simply given (in atomic units)  by
\begin{align} 
    p_{N,L} \coloneqq \begin{dcases} \frac{2\pi R^{\textnormal{Per}}_N}{L}, \quad &\mbox{ for periodic BCs,} \\ \frac{\pi R^{\textnormal{Dir}}_N}{L}, \quad &\mbox{ for Dirichlet BCs,} \\
    \frac{\pi R^{\textnormal{Neu}}_N}{L}, \quad &\mbox{ for Neumann BCs,} \end{dcases} \label{fermimomentumdef} 
\end{align}
where $R_N$ is the Fermi radius. It is well known that in the thermodynamic limit, the finite-size Fermi momentum converges to the (continuum) Fermi momentum, defined as
\begin{align*} p_F = (3\pi^2 \bar{\rho})^{\frac{1}{3}} . \end{align*}
The next lemma presents the next-order correction of the finite-size Fermi momentum, which is crucial for deriving the next-order corrections from Theorem \ref{mainthm}.
\begin{lemma}[Fermi momentum asymptotics] \label{latticeeffect} Let $\bar{\rho} = \frac{N}{|Q^D|L^3}$ be constant, 
$p_F$ be the Fermi momentum and $p_{N,L}$ be the finite-size Fermi momentum. Then
\begin{align}  p_{N,L} = \begin{dcases} p_F + \mathcal{O}(L^{-\frac{35}{23}+\epsilon}) , &\mbox{ for periodic BCs,} \\
  p_F+\frac{\pi|\partial Q^D|}{8|Q^D|}L^{-1} +\mathcal{O}(L^{-\frac{35}{23}+\epsilon}), &\mbox{ for Dirichlet BCs,} \\
  p_F - \frac{\pi|\partial Q^D|}{8|Q^D|} L^{-1} +\mathcal{O}(L^{-\frac{35}{23}+\epsilon}) , &\mbox{for Neumann BCs.} \end{dcases} \label{p_Lestdir}\end{align}
\end{lemma}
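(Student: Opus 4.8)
\emph{Proof sketch.} The plan is to reduce the statement to two-term asymptotics for three lattice-point counting functions and then invert. First I would turn the implicit definition \eqref{fermiradiidef} of the Fermi radius $R_N$ into an approximate closed-shell identity: since $N_-/2 < N/2 \le N_+/2 = |\Lambda \cap B^D_{R_N}|$ with $\Lambda = \Z^3$, $\N^3$, $\N_0^3$ in the periodic, Dirichlet and Neumann cases respectively, the degeneracy bound \eqref{degeneracycontrol}, $N_+ - N_- \lesssim_\epsilon R_N^{34/23+\epsilon}$, gives
\begin{align*}
\tfrac{N}{2} = |\Lambda \cap B^D_{R_N}| + \mathcal O\bigl(R_N^{\frac{34}{23}+\epsilon}\bigr).
\end{align*}
Since $\bar\rho = N/(|Q^D|L^3)$ is fixed, the leading term forces $R_N \sim L$, so this error is $\mathcal O(L^{\frac{34}{23}+\epsilon})$.

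Next I would establish the expansions
\begin{align*}
|\Z^3 \cap B^D_R| &= \tfrac{4\pi}{3}|Q^D|R^3 + \mathcal O\bigl(R^{\frac{34}{23}+\epsilon}\bigr), \\
|\N^3 \cap B^D_R| &= \tfrac{\pi}{6}|Q^D|R^3 - \tfrac{\pi}{16}|\partial Q^D|R^2 + \mathcal O\bigl(R^{\frac{34}{23}+\epsilon}\bigr), \\
|\N_0^3 \cap B^D_R| &= \tfrac{\pi}{6}|Q^D|R^3 + \tfrac{\pi}{16}|\partial Q^D|R^2 + \mathcal O\bigl(R^{\frac{34}{23}+\epsilon}\bigr).
\end{align*}
The first is exactly \eqref{generalsphereest}, i.e.\ a consequence of Lemma \ref{central}. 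For the others I would split $\Z^3 \cap B^D_R$ by the sign pattern of the coordinates: as $B^D_R$ is invariant under the reflection group $G$, the eight full octants each contribute $|\N^3 \cap B^D_R|$, the three ``coordinate faces'' contribute $4\sum_{i=1}^3 N_2^{(i)}(R)$ with $N_2^{(i)}(R)=|\{k\in\N^2:\ k_j^2/d_j^2+k_k^2/d_k^2\le R^2\}|$ (indices $\{j,k\}=\{1,2,3\}\setminus\{i\}$), the three ``coordinate axes'' contribute $2\sum_{i=1}^3\lfloor d_iR\rfloor$, and the origin contributes $1$. The trivial estimate for the two-dimensional problem gives $N_2^{(i)}(R)=\tfrac\pi4 d_jd_kR^2+\mathcal O(R)$ and $\lfloor d_iR\rfloor=\mathcal O(R)$; using $\sum_i d_jd_k=\tfrac12|\partial Q^D|$, $d_1d_2d_3=|Q^D|$, and $R\ll R^{\frac{34}{23}+\epsilon}$, solving for $|\N^3\cap B^D_R|$ yields the second line, while the analogous decomposition $|\N_0^3\cap B^D_R|=|\N^3\cap B^D_R|+\sum_i N_2^{(i)}(R)+\sum_i\lfloor d_iR\rfloor+1$ yields the third.

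Finally I would substitute into the reduced identity. In the Dirichlet case, writing $R_N = p_{N,L}L/\pi$ (from \eqref{fermimomentumdef}) and $N=\bar\rho|Q^D|L^3$, and dividing by the coefficient of $R^3$, one gets
\begin{align*}
3\pi^2\bar\rho = p_{N,L}^3 - \frac{3\pi|\partial Q^D|}{8|Q^D|}\,\frac{p_{N,L}^2}{L} + \mathcal O\bigl(L^{-\frac{35}{23}+\epsilon}\bigr),
\end{align*}
the error coming from $R_N^{\frac{34}{23}+\epsilon}/L^3 \sim L^{-\frac{35}{23}+\epsilon}$. Using the already established convergence $p_{N,L}\to p_F=(3\pi^2\bar\rho)^{1/3}$ to replace $p_{N,L}^2$ by $p_F^2$ in the $\mathcal O(L^{-1})$ term, and then taking the cube root via $(p_F^3+a/L+\cdots)^{1/3}=p_F+\tfrac{a}{3p_F^2}L^{-1}+\cdots$, gives $p_{N,L}=p_F+\tfrac{\pi|\partial Q^D|}{8|Q^D|}L^{-1}+\mathcal O(L^{-\frac{35}{23}+\epsilon})$. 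The Neumann case is identical except for the sign of the $R^2$ term (hence the opposite shift), and the periodic case ($R_N=p_{N,L}L/(2\pi)$, no $R^2$ term) gives $p_{N,L}=p_F+\mathcal O(L^{-\frac{35}{23}+\epsilon})$; this is \eqref{p_Lestdir}. I expect no genuinely hard analytic step beyond the sphere-problem bound \eqref{generalsphereest}, which is already in hand. The delicate parts are purely bookkeeping: pinning down the combinatorial coefficients in the octant decomposition so that the constant in front of $|\partial Q^D|R^2$ is precisely $\tfrac\pi{16}$ (which relies on the two-dimensional error being of strictly lower order than $R^{\frac{34}{23}+\epsilon}$, so that it cannot perturb the $R^2$ term), and checking that the open-shell discrepancy from \eqref{degeneracycontrol} together with all the $\mathcal O(R)$ remainders are absorbed into the single error $\mathcal O(R_N^{\frac{34}{23}+\epsilon})$.
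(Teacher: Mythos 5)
Your proposal is correct and follows essentially the same route as the paper's proof: reduce to the closed-shell count via the degeneracy bound \eqref{degeneracycontrol}, apply the improved lattice-point estimate \eqref{generalsphereest} for the three-dimensional count together with an octant/inclusion--exclusion decomposition whose two- and one-dimensional contributions are handled by the trivial $\mathcal O(R^2)$ and $\mathcal O(R)$ estimates, and then invert the resulting cubic for $p_{N,L}$ by Taylor expansion around $p_F$. The only difference is presentational: you write out the sign-pattern decomposition and the Neumann/periodic cases explicitly, whereas the paper records only the Dirichlet computation via the projections $\pi_j$.
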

\begin{proof} We present the proof only for the Dirichlet case and denote $R^{\textnormal{Dir}}_{N}$ simply by $R_{N}$. First, from the previous section we already know that $R_N\lesssim N^{\frac{1}{3}}$ and $0\leq N_+-N\leq d(N) \lesssim N^{\frac{34}{69}+\epsilon}$. Since $|Q^D| = |D|$, one has
\begin{align*} \frac{\bar{\rho}L^3}{2} &= \frac{N}{2|D|} =  \frac{|\N^3\cap B^D_{R_{N}}|}{|D|} +\mathcal{O}(N^{\frac{34}{69}+\epsilon}) \\
&= \frac{1}{8|D|}(|\Z^3\cap B^D_{R_{N}}|-\sum_{j=1}^3 |\Z^2 \cap \pi_j(B_{R_{N}}^D)|) +\mathcal{O}(N^{\frac{34}{69}+\epsilon}) \\
&= \frac{\pi}{6} R_{N}^3 - \frac{\pi |\partial Q^D|}{16|D|} R_{N}^2 + \mathcal{O}(N^{\frac{34}{69}+\epsilon}) , \end{align*}
which implies that
\begin{align} 
p_F^3 = p_{N,L}^3 - \frac{3\pi |\partial Q^D|}{8|D|}\frac{1}{L} p_{N,L}^2 + \frac{\mathcal{O}(N^{\frac{34}{69}+\epsilon})}{L^3} . \label{auxeq} \end{align}
Then, with $\bar{\rho} = \frac{N}{|Q^D|L^3}$ fixed, it is clear that $p_{N,L}^3 \ra p_F^3$. The correction proportional to $1/L$ now follows from  equation \eqref{auxeq} by Taylor expansion of the (near the solution $p_F$ for $1/L=0$ smooth) solution of the cubic equation for $p_{N,L}$.  \end{proof}

\begin{remark*} The above lemma can be viewed as a variant of the famous two term Weyl's law \cite{ivrii2016100} on the asymptotic behaviour of eigenvalues of the Laplacian (with an improved remainder). \end{remark*}

\section{Discrete to continuum approximation} \label{sec:continuumapprox}
In this section, we present the proof of Lemma \ref{central} and use it to derive the continuum approximation for the density  matrix with explicit estimates. To simplify the notation and make the proofs more efficient, we will focus on the case of the unit cube $Q$ and point out the modifications necessary for the more general domain $Q^D$ at the end of each proof.

\subsection{Estimate on exponential sums}

Here we want to derive non-trivial estimates\footnote{the trivial estimate being $|\sum_{k \in \Z^n \cap S} e^{i f(k)}| \leq |\Z^n \cap S|$, which holds for any subset $S \subset \R^n$ and any real-valued function $f(k)$}  for some weighted exponential sums that appear naturally in the proof of Lemma \ref{central}. More precisely, let $\alpha \in \N_0^3$, $M,R>0$ and $Q^h_M$ be the cubic holed box defined by
\begin{align} Q^h_{M} \coloneqq \{ r\in \R^3 : M < |r|_{max}\leq \frac{4}{3} M \},  \label{holedboxdef} \end{align}
then our goal is to find a better than trivial estimate for the sum
\begin{align}  S^\alpha_{M}(R,z) \coloneqq \sum_{k \in Q^{h}_{M}\cap (2\pi\Z)^3} \frac{(k+z)^{\alpha}}{|k+z|^{|\alpha|+2}}e^{i R |k+z|} , \label{mainsumdef} \end{align}
when $M$ is large and $R \sim M^2$. Using the above notation the main estimate can be stated as follows.
\begin{lemma} Let $\epsilon>0$, $\alpha \in \N_0^3$ and $z \in \R^3$ with $|z|_{\max} \leq \pi$. Then
\[ S_{M}^\alpha(R,z) \lesssim_\epsilon M^{\epsilon}(R^{\frac{1}{12}} M^{\frac{3}{4}} + M^{\frac{11}{12}} + R^{-\frac{1}{24}} M^{\frac{23}{24}}) , \]
where the implicit constant depends on $\epsilon$ and $\alpha$, but is independent of $M, R$ and $z$.
\label{exponentiallemma}
\end{lemma}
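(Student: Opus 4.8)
The plan is to estimate the weighted exponential sum $S_M^\alpha(R,z)$ by reducing it, via dyadic decomposition and partial summation, to a genuinely oscillatory exponential sum of the form $\sum_k e^{iR|k+z|}$ over a box of side $\sim M$, and then to invoke the exponential sum machinery of Heath-Brown \cite{HeathBrown2017} (or the van der Corput / Graham--Kolesnik apparatus \cite{graham_kolesnik_1991}) to beat the trivial bound $M^3/R$. First I would observe that on the holed box $Q_M^h$ the amplitude $(k+z)^\alpha/|k+z|^{|\alpha|+2}$ is a smooth function of $k$ whose size is $\sim M^{-2}$ and whose derivatives of order $j$ are $\lesssim M^{-2-j}$ (uniformly in $z$, since $|z|_{\max}\le \pi \ll M$). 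Thus by a standard partial summation / integration-by-parts argument in several variables, it suffices to bound, up to a factor $M^{-2}$ and logarithmic losses, the unweighted sums $\sum_{k\in B\cap(2\pi\Z)^3} e^{iR|k+z|}$ over sub-boxes $B$ of $Q_M^h$; rescaling the lattice $(2\pi\Z)^3$ to $\Z^3$ turns the phase into $(R/2\pi)|k+z'|$ with $|z'|$ bounded, i.e. a phase with ``large parameter'' $\lambda := R/2\pi \sim M^2$ and Hessian of size $\sim \lambda/M \sim M$.

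The core step is then the estimate for $\sum_{k} e^{i\lambda \psi(k)}$ where $\psi(k)=|k+z'|$ on a box of side $M$. I would slice in one coordinate direction: fix $k_3$ (there are $\sim M$ choices) and apply a two-dimensional exponential sum bound to the remaining sum over $(k_1,k_2)$. The function $(k_1,k_2)\mapsto \sqrt{k_1^2+k_2^2+c}$ has non-vanishing Hessian determinant of size $\sim c/(k_1^2+k_2^2+c)^{3/2}$, so on the relevant range this is a non-degenerate phase and Heath-Brown's theorem on multidimensional exponential sums (or the classical two-dimensional van der Corput estimate) gives a bound of the shape (second derivative size)$^{1/6}\cdot(\text{area})^{2/3}$ plus lower-order terms, which with second-derivative size $\sim\lambda/M$ and area $\sim M^2$ produces the three terms $R^{1/12}M^{3/4}$, $M^{11/12}$, $R^{-1/24}M^{23/24}$ after reinstating the $M^{-2}$ amplitude factor, the extra $k_3$-sum, and book-keeping the various ranges. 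The $M^\epsilon$ absorbs the logarithmic factors from the dyadic decomposition and from summing the error terms over slices.

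The main obstacle I anticipate is twofold. First, checking that the phase $\psi$ remains genuinely non-degenerate and that its higher derivatives are under control uniformly over all sub-boxes of the holed region $Q_M^h$ and uniformly in the shift $z$ — in particular near the coordinate hyperplanes where some $k_i+z_i$ is small, which is exactly why the ``holed'' geometry $M<|k|_{\max}\le \tfrac43 M$ is imposed rather than a full ball; one must verify that $|k|\gtrsim M$ throughout, keeping the Hessian determinant bounded below. Second, the precise bookkeeping to land on the stated exponents: one must choose the dyadic scales, decide in which direction to slice, and track how the error terms in the two-dimensional estimate (the ``$+1$'' and the ``$+(\text{length})$'' type terms in van der Corput) propagate through the outer sum over $k_3$ and the partial summation; getting $\tfrac{1}{12},\tfrac{3}{4},\tfrac{11}{12},\tfrac{1}{24},\tfrac{23}{24}$ exactly right is where the real work lies, and it is sensitive to which version of the multidimensional exponential-sum estimate one feeds in. I would cite the relevant theorem from \cite{HeathBrown2017} in the precise form needed and then carry out this optimization; the reduction steps before that are routine.
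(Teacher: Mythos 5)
Your preliminary reductions match the paper's: partial summation (Lemma \ref{partialsum}) strips off the weight $g(k)=(k+z)^\alpha/|k+z|^{|\alpha|+2}$, whose size is $\sim M^{-2}$ and whose one-dimensional variation on $Q_M^h$ is $\lesssim M^{-2}$ uniformly in $z$, and the holed geometry does keep $|k+z|\sim M$. But the core analytic step you propose --- a two-dimensional estimate for $\sum_{(k_1,k_2)}e^{iR|k+z|}$ based on non-degeneracy of the Hessian, with a trivial sum over $k_3$ --- is the wrong mechanism and would not beat the trivial bound. In the critical regime $R\sim M^2$ the second derivatives of the phase $f(k)=R|k+z|$ in any single coordinate are of size $R/M\sim M\gg 1$; curvature-based (second-derivative) tests only save when the second derivative is $\lesssim 1$, and for curvature this large they degenerate into stationary-phase information that gives nothing here. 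Your own bookkeeping reflects the problem: $(\lambda/M)^{1/6}(M^2)^{2/3}\cdot M^{-2}\cdot M=R^{1/6}M^{1/6}$, which at $R\sim M^2$ is $M^{1/2}$, not any of the three stated terms (each of which is $\approx M^{11/12}$ there). The genuinely two-dimensional routes that do work for such sums (Chamizo--Iwaniec type arguments, as in the sphere problem) are of a different nature and, as remarked after the proof in the paper, lose the uniformity in $z$ that is essential here.

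What the proof actually does is the opposite decomposition: it estimates the \emph{one-dimensional} inner sum over a single well-chosen coordinate $k_j$ by Heath-Brown's $k$-th derivative test (Theorem \ref{heathbrownthm}) with $k=4$, and bounds the outer double sum over the remaining two coordinates trivially by $M^2$. The key observation is that $\partial_{k_j}^2f\sim R/M$ is large and $\partial_{k_j}^3f\sim R/M^2$ is borderline, but $\partial_{k_j}^4f\sim R/M^3$ is small exactly when $R\sim M^2$; inserting $\lambda_4=R/M^3$ into the $k=4$ case of Theorem \ref{heathbrownthm} yields $M^{1+\epsilon}\bigl(R^{1/12}M^{-1/4}+M^{-1/12}+R^{-1/24}M^{-1/24}\bigr)$ for the inner sum, and multiplying by the $M^{-2}$ weight and the $M^{2}$ outer count gives precisely the claimed exponents. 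Correspondingly, the relevant non-degeneracy is not the Hessian but a two-sided bound on the \emph{fourth} derivative: formula \eqref{derivative} shows $\partial_{k_j}^4f$ changes sign on the cone $|\pi_j(k+z)|^2=4(k_j+z_j)^2$, which is why the annulus ratio is taken to be $4/3<\sqrt2$ and why one needs the further splitting into the sets $U_M^j$ (sum over $k_j$) and $L_M^j$ (sum over a different coordinate $\ell\neq j$), chosen according to which components of $k$ are comparable to $M$. This identification of the correct derivative order and the associated sign/degeneracy analysis is the substance of the lemma and is absent from your sketch.
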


Before proving this result, we recall a recent improvement due to Heath-Brown \cite{HeathBrown2017} of the classical Van der Corput $k^{th}$-derivative estimate \cite{graham_kolesnik_1991} on exponential sums.
\begin{theorem}[Heath-Brown \cite{HeathBrown2017}] \label{heathbrownthm} Let $k\geq 3$ be an integer and $f \in C^k([0,M],\R)$. Suppose that
\[ 0 < \lambda_k \leq f^{(k)}(s) \leq A\lambda_k , \quad s \in (0,M) ,\]
for some $A>0$. Then
\[ \sum_{n \leq M}e^{i f(n)} \lesssim_{k,A,\epsilon} M^{1+\epsilon}\Bigl(\lambda_k^{\frac{1}{k(k-1)}}+M^{-\frac{1}{k(k-1)}}+M^{-\frac{2}{k(k-1)}}\lambda_k^{-\frac{2}{k^2(k-1)}}\Bigr) .\]
\end{theorem}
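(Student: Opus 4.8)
\textit{Strategy.} The plan is to derive the estimate from the Vinogradov Mean Value Theorem (VMVT), in the now-unconditional form established by Wooley and by Bourgain--Demeter--Guth, which I take as a black box: for integers $k\ge 3$ and $s=\tfrac12 k(k-1)$ the number of integer solutions of $\sum_{i\le s}x_i^j=\sum_{i\le s}y_i^j$, $1\le j\le k$, with $1\le x_i,y_i\le X$, is $\lesssim_{k,\epsilon}X^{s+\epsilon}$; equivalently one has the sharp Weyl bound $\bigl|\sum_{m\le H}e^{iP(m)}\bigr|\lesssim_{k,\epsilon}H^{1+\epsilon}\bigl(H^{-1}+q^{-1}+qH^{-k}\bigr)^{1/(k(k-1))}$ for every real polynomial $P$ of degree $k$ whose leading coefficient $\alpha_k$ admits a rational approximation $|\alpha_k-a/q|\le q^{-2}$ with $\gcd(a,q)=1$. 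The point of the proof is then a transfer argument: replace the general $C^k$ phase $f$ by local Taylor polynomials, apply this Weyl bound on each piece, and optimise the parameters. (This is exactly what lets one bypass the classical iterated van der Corput $B$-process, whose exponent degrades exponentially in $k$.)

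\textit{Step 1: localisation and polynomial approximation.} We may assume $M^{-k}\le\lambda_k\le 1$: if $\lambda_k\ge 1$ the first term of the claimed bound already exceeds $M$, and if $\lambda_k\le M^{-k}$ the third term does, so in either case the trivial bound $\bigl|\sum_{n\le M}e^{if(n)}\bigr|\le M$ suffices. Fix a parameter $1\le H\le M$ (to be chosen) and split $(0,M]$ into $O(M/H)$ intervals $I_r=(M_r,M_r+H_r]$ of length $H_r\le H$. On $I_r$ put $n=M_r+m$ and Taylor-expand $f$ to order $k$: $f(M_r+m)=P_r(m)+E_r(m)$ with $P_r(m)=\sum_{j=0}^{k}\tfrac{f^{(j)}(M_r)}{j!}m^j$ and $|E_r(m)|\le\tfrac{A\lambda_k}{k!}H^k$. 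Here $P_r$ has degree $k$ with leading coefficient $f^{(k)}(M_r)/k!\asymp_{k,A}\lambda_k$; choosing $H$ with $\lambda_k H^k\le 1$ gives $e^{iE_r(m)}=1+O(\lambda_k H^k)$ and hence
\[ \Bigl|\sum_{n\le M}e^{if(n)}\Bigr|\ \le\ \sum_r\Bigl|\sum_{m\le H_r}e^{iP_r(m)}\Bigr|\ +\ O\bigl(M\lambda_kH^k\bigr). \]
So it suffices to bound a single-interval polynomial Weyl sum with leading coefficient $\asymp\lambda_k$, sum over the $O(M/H)$ pieces, and control the localisation error $M\lambda_kH^k$ — which forces $H$ somewhat below $\lambda_k^{-1/k}$.

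\textit{Step 2: applying VMVT and optimising.} On each interval, use Dirichlet's theorem to choose, for a further parameter $1\le Q\le H^k$, an approximation $|\alpha_k-a/q|\le 1/(qQ)\le q^{-2}$ with $q\le Q$, and insert it into the Weyl bound. Two regimes occur. If the denominator $q_r$ is large, the $q^{-1}$-term is small and the bound is strong. If $q_r$ is small the Weyl bound is weak, but this can happen only when $f^{(k)}(M_r)/k!$ lies close to a rational of small denominator; since $f^{(k)}$ sweeps out an interval of length $\asymp(A-1)\lambda_k$ as $r$ runs over all pieces, the number of such ``bad'' $r$ is controlled by counting these rationals, and replacing the crude triangle inequality by a Hölder/averaging step over the admissible denominators recovers a power saving. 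Collecting contributions: $q^{-1}$ becomes a factor $\lambda_k^{1/(k(k-1))}$; the $H^{-1}$-term, after multiplying by $M/H$ and optimising $H$, becomes $M^{-1/(k(k-1))}$; and the balanced regime $qH^{-k}$ together with the averaging becomes $(M^k\lambda_k)^{-2/(k^2(k-1))}$, the extra factor $2$ and extra power of $k$ arising precisely from that averaging step. Optimising $H$ and $Q$ against all three targets and adding the localisation error of Step 1 yields the claimed inequality.

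\textit{Main obstacle.} The genuinely deep input is VMVT itself, used here as a black box; everything else is a careful but classical ``Weyl sum from mean value'' reduction. Within it the two delicate points are: (i) making the passage from the $C^k$ phase $f$ to the polynomials $P_r$ \emph{uniform in $f$} — this works only because each piece is short enough that the Taylor remainder is $O(1)$, yet shrinking $H$ degrades the final bound, so $H$ and the Dirichlet parameter $Q$ must be balanced simultaneously against all three target terms, with all implied constants depending only on $k$ and $A$; and (ii) handling the small-denominator (``major arc'') case, where the naive Weyl bound fails and one must instead exploit that $f^{(k)}$ ranges over an interval of length only $\asymp\lambda_k$ to average over the offending subintervals. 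Point (ii) is what produces the third term and is, I expect, the most error-prone part of a full write-up; point (i) is the source of the ugly but necessary book-keeping.
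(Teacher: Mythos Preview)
The paper does not prove this theorem. It is quoted as an external result from Heath-Brown's 2017 paper and invoked as a black box in the proof of the exponential-sum lemma; the authors supply no argument of their own. So there is no ``paper's own proof'' to compare your proposal against.

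For what it is worth, your overall strategy matches Heath-Brown's actual argument in the cited reference: the deep input is indeed the (now unconditional) Vinogradov mean value theorem of Wooley and Bourgain--Demeter--Guth, and the reduction from a $C^k$ phase to local polynomial phases via Taylor expansion on short subintervals is the core mechanism. Your Step~2, however, is more impressionistic than Heath-Brown's treatment: he does not run a major/minor arc dichotomy with an averaging step over bad denominators as you outline, but instead extracts a clean pointwise bound for the polynomial Weyl sum directly from the mean-value estimate (via a large-sieve-type inequality) and then optimises the single parameter $H$. The third term in the conclusion arises from that optimisation rather than from the small-$q$ averaging you sketch. If you were actually writing this up, you would want to follow the source more closely at that point; as it stands your Step~2 is a plausible heuristic but not a proof.
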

\begin{remark*} Note that for any $c<1$, the estimate above still holds for the sum in the interval $cM \leq n \leq M$ (with two times the implicit constant above). Indeed, one can simply consider the sum in this interval as the sum until $M$ minus the sum until $cM$, where both can be controlled by the same factor.\end{remark*} 

\noindent We also recall an elementary partial summation lemma (see e.g. \cite{graham_kolesnik_1991,Huxley1996} for the proof) that is used to deal with the weight function in $S^\alpha_M(R,z)$.
\begin{lemma}[Partial summation \cite{Huxley1996}]\label{partialsum} Let $g \in C^1([a,b])$ and denote the total variation $\int_a^b |g'|$ of $g$ by $V_{g}[a,b]$. Then for any sequence $a_n$, one has
\[\biggr|\sum_{n\geq a}^b g(n)a_n \biggl| \leq (V_g[a,b]+|g(a)|)\max_{\gamma \leq b} \biggr|\sum_{\gamma}^b a_n\biggl| . \]
\end{lemma}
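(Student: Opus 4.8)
\emph{Proof plan.} This is the classical Abel (partial) summation identity, so the plan is simply to carry it out, paying attention to the (possibly) non-integer endpoints. I may assume $\lceil a \rceil \le b$, the sum being empty otherwise, and set $n_0 \coloneqq \lceil a \rceil$ and $n_1 \coloneqq \lfloor b \rfloor$. The first step is to introduce the right-tail sums
\[
  A(\gamma) \coloneqq \sum_{\gamma \le n \le b} a_n \qquad (\gamma \in \R),
\]
which are exactly the objects appearing on the right-hand side of the claimed bound, together with the two trivial observations $a_n = A(n) - A(n+1)$ for $n \in \Z$ and $A(n_1 + 1) = 0$ (an empty sum).

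Next I would substitute $a_n = A(n) - A(n+1)$ into $\sum_{n_0 \le n \le n_1} g(n) a_n$, reindex the resulting second sum by $n \mapsto n+1$, and use $A(n_1+1) = 0$ to discard the upper boundary term, arriving at
\[
  \sum_{n_0 \le n \le n_1} g(n) a_n = g(n_0)\,A(n_0) + \sum_{n_0 < n \le n_1} \bigl( g(n) - g(n-1) \bigr) A(n).
\]
Setting $\mathcal{A} \coloneqq \max_{\gamma \le b} |A(\gamma)|$ and noting that each $A(n)$ is of this form, so $|A(n)| \le \mathcal{A}$, the triangle inequality gives
\[
  \Bigl| \sum_{n_0 \le n \le n_1} g(n) a_n \Bigr| \le \mathcal{A} \Bigl( |g(n_0)| + \sum_{n_0 < n \le n_1} |g(n) - g(n-1)| \Bigr).
\]
Finally I would bound $|g(n) - g(n-1)| \le \int_{n-1}^{n} |g'|$ and $|g(n_0)| \le |g(a)| + \int_{a}^{n_0} |g'|$ by the fundamental theorem of calculus; since the intervals $[n-1,n]$ for $n_0 < n \le n_1$ tile $[n_0, n_1]$ and $[a,n_0] \cup [n_0,n_1] \subseteq [a,b]$, the variation integrals add up to at most $\int_a^b |g'| = V_g[a,b]$, which yields the asserted inequality
\[
  \Bigl| \sum_{n \ge a}^{b} g(n) a_n \Bigr| \le \bigl( V_g[a,b] + |g(a)| \bigr) \max_{\gamma \le b} \Bigl| \sum_{\gamma}^{b} a_n \Bigr|.
\]

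There is no real obstacle here — as the text states, the lemma is elementary. The only points needing a little care are that the upper boundary term of the summation by parts vanishes precisely because it involves the empty sum $A(\lfloor b\rfloor + 1)$, and that one must convert $|g(\lceil a\rceil)|$ into $|g(a)|$ at the price of the total variation of $g$ on $[a,\lceil a\rceil]$, which is exactly why the final bound reads $V_g[a,b] + |g(a)|$ rather than simply $|g(a)|$.
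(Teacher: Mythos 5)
Your proof is correct: it is the standard Abel summation argument, with the two endpoint subtleties (the vanishing boundary term $A(\lfloor b\rfloor+1)$ and the conversion of $|g(\lceil a\rceil)|$ into $|g(a)|$ plus variation on $[a,\lceil a\rceil]$) handled properly. The paper does not prove this lemma itself but cites the literature for it, and your argument is exactly the expected one, so there is nothing further to compare.
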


\begin{proof}[Proof of Lemma \ref{exponentiallemma}] First note that if $M$ is small (compared to $R$), then the estimate is trivial. Therefore, we need to consider only the case where $M$ is big. Next, defining
\begin{align}
f(k) \coloneqq R|k+z| \quad \mbox{ and } \quad  g(k) \coloneqq \frac{(k+z)^\alpha}{|k+z|^{2+|\alpha|}} , \label{fgdef} \end{align}
the idea is to see $f$ and $g$ as functions of one coordinate, estimate the inner sum by Theorem \ref{heathbrownthm} and Lemma \ref{partialsum}, and then use the trivial estimate for the outer sums. To verify the assumptions of Theorem \ref{heathbrownthm}, let us consider the sets
\begin{align} U^j_M \coloneqq \{k\in \R^3:|k_j|\geq M \mbox{ and } \max_{\ell \neq j}\{|k_\ell|\} \geq M/10 \}\cap Q_M^h ,\label{Usetdef} \\
L^j_M \coloneqq \{k\in\R^3: |k_j|\geq M \mbox{ and } \max_{\ell \neq j}\{|k_\ell|\} < M/10 \}\cap Q_M^h  , \label{Lsetdef} \end{align}
and observe that a straightforward calculation yields
\begin{align} \partial_{k_j}^4 f(k) = -3 R \frac{|\pi_j (k+z)|^2(|\pi_j (k+z)|^2-4(k_j+z_j)^2)}{|k+z|^7} . \label{derivative} \end{align}
Now the reason for our choice of $\tfrac43$ in the definition of $Q^h_M$ (any number between $1$ and $\sqrt{2}$ would be enough) is that, since $|z|_{max}\leq \pi$ is small compared to $M$, inside of $U_M^j$ we have 
\begin{align*}4(k_j+z_j)^2-|\pi_j(k+z)|^2 \sim M^2 ,  \quad \quad |k+z| \sim M , \quad \mbox{ and }\quad |\pi_j (k+z)|^2 \sim M^2 \end{align*}
(with implicit constants independent of $M$ and $R$).
In particular, from \eqref{derivative}, we find that
\begin{align} |\partial^{4}_{k_j} f(k)| \sim \frac{R}{M^3} \label{fderest} \end{align}
inside $U^j_M$. Moreover, note that $|g(k)|\lesssim \frac{1}{|k|^2}$ and $|\partial_{k_j} g(k)|\lesssim \frac{1}{|k|^3}$. Hence, as $|k| \geq M$ for any $k \in Q_M^h$, if we consider $g$ as a function of only one coordinate, say $k_3$, then for any $\{k_1\}\times \{k_2\}\times I \subset Q_M^h$ we have
\begin{align} V_{g(k_1,k_2,\cdot)}[I] + |g(k)| \lesssim M^{-2}  \label{gvar} \end{align}
with implicit constant independent of $k_1,k_2,I, M$ and $R$. Therefore, we can apply Lemma \ref{partialsum} and Theorem \ref{heathbrownthm} (see  \eqref{fderest},\eqref{gvar}) to conclude that,  for any $I_1\times I_2\times I_3 \subset U_M^j$,
\begin{align} \sum_{\substack{k\in I_1\times I_2\times I_3 \\ k \in (2\pi\Z)^3}} g(k) e^{i f(k)} &\lesssim_\epsilon \sum_{\substack{ k_\ell \in 2\pi\Z\cap I_\ell \\ \ell\neq j}} M^{-2} M^{1+\epsilon}(R^{\frac{1}{12}} M^{-\frac{1}{4}} + M^{-\frac{1}{12}} + R^{-\frac{1}{24}} M^{-\frac{1}{24}}) \nonumber \\
&\lesssim M^\epsilon (R^{\frac{1}{12}} M^{\frac{3}{4}} + M^{\frac{11}{12}} + R^{-\frac{1}{24}} M^{\frac{23}{24}}) .
\label{intervalsumest} \end{align}
On the other hand, if $I_1\times I_2\times I_3 \subset L_M^j$, then from equation \eqref{derivative} and the definition of $L_M^j$ (see \eqref{Lsetdef}), one can show that $\partial_{k_\ell}^4 f(k) \sim \frac{R}{M^3}$ for $\ell\neq j$. In particular, by summing over $I_\ell$ first and using the same arguments as before, we conclude that, as long as $I_1\times I_2\times I_3 \subset L_M^j$, the sum over $I_1\times I_2\times I_3$ is again bounded by the right hand side of \eqref{intervalsumest}.

Finally, note that we can split the summation over $Q_M^h$ as
\begin{align*} \sum_{k \in Q_M^h\cap (2\pi\Z)^3} g(k) e^{i f(k)} = \sum_{p=1}^{P} \biggr(\sum_{\substack{ k \in I_p^1\times I_p^2 \times I_p^3 \\ k \in (2\pi\Z)^3}} g(k) e^{i f(k)} \biggl) , \end{align*}
where the $I_p^j$ are intervals such that the product $I_p^1\times I_p^2 \times I_p^3$ is contained in one of the $U^j_M$ or $L^j_M$, and the number $P$ is independent of $M$. Therefore, the result follows from estimating the sums for each $p$ independently by \eqref{intervalsumest} and summing the estimates up (since $P$ is independent of $M$).\end{proof}

\begin{remark*}[on the proof above]
\begin{itemize}
\item The important point in the estimate above is that it gives better than trivial estimates for $M\approx \sqrt{R}$, which is crucial for improving the exponent 3/2 in \eqref{centralest}.
\item The classical Van der Corput $4^{th}$-derivative estimate (see \cite{graham_kolesnik_1991}) would actually be enough for our purposes. However, Theorem \ref{heathbrownthm} gives a slightly better estimate on the error term. 
\item If we assume that $z \in \Q^3$, then one can show that the same estimate from \cite[Lemma 3.1]{Chamizo1995} holds, which would then lead to the Vinogradov, Chen \cite{Vinogradov1963,chen1963improvement} exponent of $4/3$ in \eqref{centralest}. Unfortunately, in this case the implicit constant depends on the least common multiple of the denominators of $z_1,z_2,z_3$, and therefore, we lose the uniform control that is essential for our purposes in the next section. \end{itemize}\end{remark*}

\begin{remark*}[Generalization to $Q^D$] For general $Q^D$ we need an analog of Lemma \ref{exponentiallemma} with $f$ and $g$ from equation \eqref{fgdef} replaced by $f(k) = R|D(k+z)|$ and $g(k) = \frac{(D(k+z))^\alpha}{|D(k+z)|^{|\alpha|+2}}$. In particular, the only significant modifications are that equation \eqref{derivative} has to change accordingly, and the sets $U^j_M$ and $L^j_M$ defined in \eqref{Usetdef} and \eqref{Lsetdef} have to be replaced by their pre-images under $D$ (as a map in $\R^3$). \end{remark*}

\subsection{Proof of Lemma \ref{central}}
To complete the proof of Lemma \ref{central}, we shall use two classical results. The first one is the celebrated Poisson summation formula (see e.g. \cite{sogge2017fourier}), already used in the present context in \cite{Friesecke1997}, which we write in a slightly different form to fit our goal.
\begin{lemma}[Poisson summation formula] Let $u\in C^\infty_c(\R^n)$, then (see the Fourier transform convention adopted \eqref{fourierconvention}) the Poisson summation formula states that
\begin{align} \sum_{k\in \Z^n} \frac{1}{(2\pi)^n}\widehat{u}(k) e^{i k\scpr z} - u(z) = \sum_{\substack{k\in (2\pi\Z)^n \\ k \neq 0}} u(z+k) . \label{poisson} \end{align}
\end{lemma}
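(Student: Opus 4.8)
The plan is to recognize the right-hand side of \eqref{poisson} as the Fourier series of the $(2\pi\Z)^n$-periodization of $u$, and to identify its Fourier coefficients with $(2\pi)^{-n}\widehat{u}(k)$.

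First I would introduce the periodization $F(z)\coloneqq\sum_{m\in(2\pi\Z)^n}u(z+m)$. Since $u$ has compact support, for each $z$ only finitely many summands are nonzero, and the relevant index set is locally constant in $z$; hence $F$ is well defined, $(2\pi\Z)^n$-periodic, and $C^\infty$ as a function on the torus $\R^n/(2\pi\Z)^n$. A smooth function on the torus equals its Fourier series pointwise, with absolutely and uniformly convergent series (smoothness forcing rapid decay of the coefficients), so $F(z)=\sum_{k\in\Z^n}c_k\,e^{ik\scpr z}$ with $c_k=(2\pi)^{-n}\int_{[0,2\pi]^n}F(z)\,e^{-ik\scpr z}\dd z$.

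Next I would compute $c_k$. Interchanging the (locally finite) sum with the integral and using that $e^{-ik\scpr(z+m)}=e^{-ik\scpr z}$ whenever $k\in\Z^n$ and $m\in(2\pi\Z)^n$ (since then $k\scpr m\in 2\pi\Z$), the integral over the fundamental domain $[0,2\pi]^n$ unfolds against the lattice translates into an integral over all of $\R^n$, giving $c_k=(2\pi)^{-n}\int_{\R^n}u(w)e^{-ik\scpr w}\dd w=(2\pi)^{-n}\widehat{u}(k)$. Substituting this into the Fourier expansion of $F$ yields $\sum_{m\in(2\pi\Z)^n}u(z+m)=\sum_{k\in\Z^n}(2\pi)^{-n}\widehat{u}(k)e^{ik\scpr z}$, and moving the $m=0$ term to the other side gives exactly \eqref{poisson}.

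There is no real obstacle here: the only points deserving a word of justification are that $F$ is well defined and smooth on the torus (immediate from the compact support of $u$, which makes the lattice sum locally finite), the pointwise and uniform convergence of the Fourier series of the smooth periodic function $F$, and the interchange of a finite sum with an integral in the computation of $c_k$ — all standard. Equivalently, one could deduce \eqref{poisson} from the classical Poisson summation formula applied to the translate $w\mapsto u(w+z)$, but I would present the direct periodization argument above as it is self-contained.
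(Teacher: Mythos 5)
Your proof is correct and is the standard periodization argument for Poisson summation; the paper itself does not prove this lemma but simply cites a standard reference (Sogge), so there is no competing in-paper proof to compare against. All the points you flag for justification (local finiteness of the lattice sum, smoothness of the periodization on the torus, rapid decay and hence absolute/uniform convergence of the Fourier series, and the unfolding of the fundamental-domain integral using $e^{ik\scpr m}=1$ for $k\in\Z^n$, $m\in(2\pi\Z)^n$) are exactly the right ones and are handled correctly.
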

The second one is an (optimal) estimate on the decay of the Fourier transform of the characteristic function of the ball and its derivatives. The proof is a straightforward calculation (at least in $\R^3$).
\begin{lemma}[Fourier transform of the ball] \label{fouriertransformball} Let $\chi_{B_1}$ be the characteristic function of the unit ball in $\R^3$. Then
\[ \widehat{\chi_{B_1}}(k) = \frac{4 \pi}{3} h(|k|) , \]
where $h(s) =  3(\sin s - s \cos s)/s^3$. Moreover, for any $\alpha \in \N_0^3$, there exists $c_\alpha>0$ such that
\begin{align} |\partial^\alpha \widehat{\chi_{B_1}}(k)| \leq \frac{c_{\alpha}}{(1+|k|)^2}, \quad \forall k \in \R^3 .\label{fourierdecay} \end{align}
\end{lemma}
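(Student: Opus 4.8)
The plan is to compute $\widehat{\chi_{B_1}}$ explicitly first, and then extract the decay of its derivatives from the resulting function of a single variable. Since $\chi_{B_1}$ is radial, so is $\widehat{\chi_{B_1}}$; writing $s=|k|$ and choosing coordinates with $k$ along the third axis, I would pass to spherical coordinates and integrate out the angles,
\[
 \widehat{\chi_{B_1}}(k)=\int_{B_1}e^{-ik\scpr r}\dd r=2\pi\int_0^1 r^2\!\int_{-1}^1 e^{-isru}\dd u\,\dd r=\frac{4\pi}{s}\int_0^1 r\sin(sr)\dd r ,
\]
and one integration by parts in $r$ gives $\int_0^1 r\sin(sr)\dd r=(\sin s-s\cos s)/s^2$, whence $\widehat{\chi_{B_1}}(k)=4\pi(\sin s-s\cos s)/s^3=\tfrac{4\pi}{3}h(s)$. (At $s=0$ this reads $\tfrac{4\pi}{3}=|B_1|$, consistent with the removable singularity of $h$.)

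For the derivative bound I would first dispose of the region $|k|\le 1$. The key observation is that $\sin s - s\cos s$ is an odd entire function vanishing to third order at $s=0$, so $h$ is an \emph{even} entire function; equivalently $h(s)=\widetilde h(s^2)$ for some entire $\widetilde h$. Since $|k|^2=k_1^2+k_2^2+k_3^2$ is a polynomial, $k\mapsto h(|k|)=\widetilde h(|k|^2)$ is $C^\infty$ on all of $\R^3$, so $\partial^\alpha\widehat{\chi_{B_1}}$ is continuous and hence bounded on $\{|k|\le 1\}$; since $(1+|k|)^{-2}\ge \tfrac14$ there, the claimed estimate holds on this set.

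For $|k|\ge 1$ I would use the explicit form $h(s)=3\sin s/s^3-3\cos s/s^2$ together with an elementary induction showing that, for every $j\ge0$, $h^{(j)}(s)$ is a finite linear combination of functions $\sin s/s^m$ and $\cos s/s^m$ with $m\ge 2$; hence $|h^{(j)}(s)|\lesssim_j s^{-2}$ for $s\ge1$. Separately, for any multi-index $\beta$ with $|\beta|\ge1$ the derivative $\partial^\beta|k|$ is smooth away from the origin and homogeneous of degree $1-|\beta|\le0$, hence bounded on $\{|k|\ge1\}$. Expanding $\partial^\alpha\big(h(|k|)\big)$ by Fa\`a di Bruno as a sum of terms $h^{(j)}(|k|)\prod_\ell\partial^{\beta_\ell}|k|$ with $\sum_\ell\beta_\ell=\alpha$ and $j$ equal to the number of factors (so $j\ge1$ as soon as $\alpha\neq0$), each such term is $\lesssim|k|^{-2}$ on $\{|k|\ge1\}$; for $\alpha=0$ the bound is just $|h(|k|)|\lesssim|k|^{-2}$. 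Combining the two ranges yields $|\partial^\alpha\widehat{\chi_{B_1}}(k)|\lesssim(1+|k|)^{-2}$ on $\R^3$.

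The only mildly delicate point is the bookkeeping in the large-$|k|$ step: one must check that no derivative of $h$ produces a term decaying slower than $s^{-2}$, which is immediate once one notices that $\tfrac{d}{ds}(\sin s/s^2)=\cos s/s^2-2\sin s/s^3$ and $\tfrac{d}{ds}(\cos s/s^2)=-\sin s/s^2-2\cos s/s^3$ both keep the exponent $2$; and that the $|k|^{-1}$-type factors arising from the chain rule are harmless since, for $\alpha\neq0$, each is accompanied by at least one factor $h^{(j)}$ with $j\ge1$ which already supplies the full $|k|^{-2}$ decay. No oscillatory-integral machinery beyond this explicit computation is needed, in line with the parenthetical remark in the statement.
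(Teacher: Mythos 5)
Your proposal is correct and is precisely the ``straightforward calculation'' the paper alludes to: the explicit radial integration for the formula (as in Lemma 6.1 of \cite{Friesecke1997}) and elementary smoothness/decay bookkeeping for the derivative bound \eqref{fourierdecay}. Both the small-$|k|$ argument via evenness of $h$ and the large-$|k|$ induction on the exponents $m\ge 2$ are sound, so nothing further is needed.
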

\begin{proof}[Proof of Lemma \ref{central}] Let $R>0$, then our goal now is to bound the error term
\begin{align*} \mathcal{E}^{\alpha}(R,z) \coloneqq \sum_{k\in B_R\cap \Z^3} (i k)^\alpha e^{i k\scpr z} - \int_{B_R} (i k )^\alpha e^{i k \scpr z} dk .\end{align*}
For this, the main idea is to apply the Poisson summation formula to a smooth version of the characteristic function of the ball of radius $R$, use Lemma \ref{exponentiallemma} to control the error on the Fourier side and then estimate the difference between our smoothed error and the error $\mathcal{E}^\alpha$ defined above. So first, let $H>0$ be a small parameter to be chosen later, pick any non-positive function $\phi \in C_c^\infty(0,1)$ with $\int \phi(s)ds = -1$, and define
\begin{align} f_{R,H}(r) = \begin{cases} 1, &\mbox{if}\quad |r| \leq R,\\ \displaystyle{1+ \int_0^{|r|-R} \phi^H(s) \mathrm{d}s},  &\mbox{otherwise,}\end{cases} \label{smoothed} \end{align}
where $\phi^H(s) \coloneqq \frac{1}{H} \phi\bigr(\frac{s}{H}\bigr)$. One can check that $f_{R,H}$ is smooth, assumes only values between 0 and 1, and vanishes outside $B_{R+H}$. Moreover, observing that $\widecheck{f}_{R,H}(k) = \frac{1}{(2\pi)^3}\widehat{f}_{R,H}(k)$ (as $f_{R,H}$ is symmetric) and applying the Poisson summation formula to $\widehat{u} = f_{R,H}$ we obtain
\begin{align} \underbrace{\sum_{k\in \Z^3} f_{R,H}(k)e^{i k\scpr z} - \widehat{f}_{R,H}(z)}_{\coloneqq \mathcal{E}_H(R,z)} = \underbrace{\sum_{\substack{k \in (2\pi\Z)^3 \\ k \neq 0}} \widehat{f}_{R,H}(z+k)}_{\coloneqq \widehat{\mathcal{E}}_H(R,z)} . \label{Poisson}\end{align}
The next step is to use Lemma \ref{exponentiallemma} to estimate the right-hand side of \eqref{Poisson}. For this, we first use spherical coordinates to obtain 
\begin{align*} \widehat{f}_{R,H}(z+k) &= \int_{\R^3} f_{R,H}(r) e^{-i r \scpr (z+k)} \mathrm{d}r\\
&= 4\pi \int_0^R \frac{\tau \sin(|k+z| \tau)}{|k+z|}
\mathrm{d}\tau +4\pi \int_R^{R+H} \frac{\tau \sin(|k+z| \tau)}{|k+z|} \biggr(1+ \int_0^{\tau-R} \phi^H(s) ds\biggl) \mathrm{d}\tau \\
 &= -4\pi \int_R^{R+H} \frac{\sin(|k+z|\tau)-(|k+z|\tau)\cos( |k+z|\tau)}{|k+z|^3} \phi^H(\tau-R) \mathrm{d}\tau .\end{align*}
In particular, by partial integration and re-scaling we have
\begin{align} \widehat{\mathcal{E}}_H(R,z) &= -4\pi \int_R^{R+H} \biggr(3\dot{\phi}^H(\tau-R) + \tau\ddot{\phi}^H(\tau-R)\biggr) \sum_{\substack{k \in (2\pi\Z)^3 \\ k \neq 0}} \frac{ \cos(|k+z|\tau)}{|k+z|^4} \mathrm{d}\tau \nonumber \\
&= -4\pi \int_0^1 \biggr(\overbrace{\frac{3}{H}\dot{\phi}(\tau)+\frac{\tau}{H}\ddot{\phi}(\tau)}^{= \mathcal{O}(\frac{1}{H})} + \frac{R}{H^2} \ddot{\phi}(\tau)\biggr) \sum_{\substack{k \in (2\pi\Z)^3 \\ k \neq 0}} \frac{\cos\bigr((R+H\tau)|k+z|\bigr)}{|k+z|^4} \mathrm{d}\tau \label{middle} \\
&= \mathcal{O}\biggr(\frac{1}{H}\biggr) - \frac{4\pi R}{H^{2}}\int_0^1\ddot{\phi}(\tau) \sum_{\substack{k \in (2\pi\Z)^3 \\ k \neq 0}} \frac{\cos\bigr((R+H\tau)|k+z|\bigr)}{|k+z|^4} \mathrm{d}\tau , \nonumber \end{align}
where for the last equality we used that $\sum_{k\in (2\pi\Z)^3\setminus \{0\}} |k+z|^{-4} \leq c$ (since $|z|_{\max} \leq \pi$). Here we slightly abused the big-O notation to denote terms bounded in absolute value by a constant times $H^{-1}$ by $\mathcal{O}(H^{-1})$. Now let $M>0$, then by integrating by parts $n$ times, throwing out the terms outside the box $Q_M\setminus Q_1 = [-M,M]^3\setminus [-1,1]^3$,  and integrating back by parts (recall that $\phi \in C^\infty_c(0,1)$), we find that
\begin{align*} 
    \widehat{\mathcal{E}}_H(R,z) &= 4\pi R \int_0^1 \phi(\tau) \sum_{\substack{k \in (2\pi\Z)^3 \\ \cap Q_M\setminus Q_1 }} \frac{\cos\bigr((R+H\tau)|(k+z)|\bigr)}{|k+z|^2} \mathrm{d}\tau +\mathcal{O}\biggr(\frac{1}{H} + \frac{R}{H^{n+1}M^{n}}\biggr). 
\end{align*}
Next, after possibly replacing the implicit constant above by a factor independent of $M$, $H$, and $R$, we can assume that $M = (\frac{4}{3})^q$ for some $q\in \N$. Thus by using the decomposition $Q_M\setminus Q_1 = \bigcup_{k=1}^{q-1} Q_{(\frac{4}{3})^k}^h$, the facts that $H<1$ and $|z|_{max}\leq \pi$ (by periodicity), and Lemma \ref{exponentiallemma}, we find that
\begin{align}
   \widehat{\mathcal{E}}_H(R,z) \lesssim_{\epsilon,n}&  \frac{1}{H} + \frac{R}{H^{n+1}M^n} + \sum_{k=1}^{q-1} R \biggr(\frac{4}{3}\biggr)^{k\epsilon} \biggr(R^{\frac{1}{12}}\biggr(\frac{4}{3}\biggr)^{\frac{3k}{4}}+\biggr(\frac{4}{3}\biggr)^{\frac{11k}{12}}+ R^{-\frac{1}{24}}\biggr(\frac{4}{3}\biggr)^{\frac{23k}{24}}\biggr) \nonumber \\
    \lesssim_{\epsilon,n}& \frac{1}{H} + \frac{R}{H^{n+1}M^n} + \log M \biggr( R^{\frac{13}{12}}M^{\frac{3}{4}+\epsilon}+RM^{\frac{11}{12}+\epsilon}+R^{\frac{23}{24}}M^{\frac{23}{24}+\epsilon}\biggr) ,\label{est1} \end{align}
where the $\log M$ can be absorbed in $M^\epsilon$. Moreover, we can estimate the error coming from the smoothing procedure as
\begin{align}
    |\mathcal{E}^0(R,z) - \mathcal{E}_H(R,z)| \leq \!\!\sum_{\substack{k \in \Z^3\cap B_{R+H} \\ k \not \in B_R}} \!\!\!\!\!\! |f_{R,H}(k)| + |\widehat{f}_{R,H}(z)-\widehat{\chi_{B_R}}(z)|  \lesssim R^2 H. \label{est2} 
\end{align}
Hence by summing estimates \eqref{est1} and \eqref{est2}, taking into account \eqref{Poisson}, setting $M = R^m$ and $ H = R^{-h}$, and minimizing the exponents, one concludes that
\begin{align*} 
    \mathcal{E}^0(R,z) \lesssim_{\epsilon,n} R^{2-h(n)+\epsilon m(n)} \log R  , \quad \forall n\in \N , 
\end{align*}
where $h(n) \coloneqq \frac{12n+11}{23n+22}$ and $m(n)\coloneqq \frac{12n + 12}{23 n + 22}$. Thus, since $h \uparrow \frac{12}{23}$ for $n\ra \infty$, by choosing  $n$ big enough, the result for $\alpha = 0$ follows.

For the estimates with $\alpha\neq 0$, we repeat the same arguments by taking $f^\alpha_{R,H}(r) = (ir)^\alpha f_{R,H}(r)$. In this case, note that by the Leibniz rule and using that $\partial^\alpha (|r|^{-4})\lesssim |r|^{-4-|\alpha|}$ away from the origin, we have
\begin{align} 
    \partial_k^\alpha \biggr( \frac{\cos((R+H\tau)|k+z|)}{|k+z|^{4}}\biggl) - \frac{\bigr(iR(k+z)\bigr)^\alpha}{2|k+z|^{|\alpha|+4}} \biggr(e^{i(R+H\tau)|k+z|}+(-1)^{|\alpha|} e^{-i(R+H\tau)|k+z|}\biggr) \nonumber \\
    \lesssim \frac{R^{|\alpha|-1}}{|k+z|^4}. \label{cosderest} \end{align}
We also have, by Poisson summation and the symmetry of $f^\alpha_{R,H}$, that
\begin{align} \label{Poisson'}\underbrace{\sum_{k\in \Z^3} f^\alpha_{R,H}(k)e^{i k\scpr z} - \widehat{f}^\alpha_{R,H}(z)}_{\coloneqq \mathcal{E}^\alpha_H(R,z)} = \underbrace{\sum_{\substack{k \in (2\pi\Z)^3 \\ k \neq 0}} \widehat{f}^\alpha_{R,H}( z+k)}_{\coloneqq \widehat{\mathcal{E}}^\alpha_H(R,z)} . \end{align}
Hence using the identity $\widecheck{(i \cdot)^\alpha f} = \partial^{\alpha} \widecheck{f}$ together with estimate \eqref{cosderest}, and repeating the same steps from before (see \eqref{middle}), we conclude that
\begin{align*} 
    \mathcal{E}^\alpha_H(R,z) &= -4\pi \int_0^1 \biggr(\frac{3}{H}\dot{\phi}(\tau)+\frac{R+H\tau}{H^2}\ddot{\phi}(\tau)\biggr) \sum_{\substack{k \in (2\pi\Z)^3 \\ k \neq 0}}\partial^\alpha \biggr(\frac{\cos((R+H\tau)|k+z|)}{|k+z|^4}\biggl) \\
&= -\frac{4\pi R}{H^2} \int_0^1 \ddot{\phi}(\tau) \!\!\sum_{\substack{k \in (2\pi\Z)^3  \\ k\neq 0}} \!\! \frac{\bigr(i R(k+z)\bigr)^\alpha}{2|k+z|^{|\alpha|+4}} \biggr(e^{i(R+H\tau)|k+z|}+(-1)^{|\alpha|} e^{-i(R+H\tau)|k+z|}\biggr) \mathrm{d}\tau  \\
&+\mathcal{O}\biggr(\frac{R^{|\alpha|}}{H^2} \biggr) . \end{align*}
Thus we can again integrate by parts $n$ times, remove all terms for $|k|\not \in Q_M\setminus Q_1$ and use Lemma \ref{exponentiallemma} to show that
\begin{align} 
    \mathcal{E}^\alpha_H(R,z) \lesssim_\epsilon R^{|\alpha|}\biggr(\frac{1}{H^2} + \frac{R}{H^{n+1}M^n} + R^{\frac{13}{12}}M^{\frac{3}{4}+\epsilon}+RM^{\frac{11}{12}+\epsilon}+R^{\frac{23}{24}}M^{\frac{23}{24}+\epsilon}\biggr) . \label{middle2} 
\end{align}
Finally, by combining \eqref{middle2} with \eqref{Poisson'} and the following simple estimate for the smoothing error
\[
|\mathcal{E}^\alpha(R,z)-\mathcal{E}^\alpha_H(R,z)|\lesssim R^{2+|\alpha|}H \]
one obtains an overall bound which is just $R^{|\alpha|}$ times the right hand sides of \eqref{est1} plus \eqref{est2}, and 
the result follows by optimizing the exponent and using the same $n\ra \infty$ argument as before. \end{proof}

\begin{remark*}[Generalization to $Q^D$] To deal with the general case, one has to replace $f_{R,H}$ by $f^{D}_{R,H}(r) \coloneqq f_{R,H}(D^{-1} r)$ and observe that $\widehat{f}^D_{R,H}(k) = |D| \widehat{f}_{R,H}(D k)$. The rest of the proof follows from the same steps by using the generalization of Lemma \ref{exponentiallemma} discussed in the previous remark.
\end{remark*}

\subsection{Continuum approximation of density matrices}

We can now introduce continuum approximations for the density matrices and prove that they accurately approximate the exact ground state density matrices.

Recalling formula \eqref{densitymatrix} and replacing the sum over $k$ by an integral, we can define the continuum version of the density matrix as
\begin{align}
    \gamma_{N,L}^{\textnormal{ctm}}(r,\tilde{r}) = \begin{dcases}
    \bar{\rho}\frac{p_{N,L}^3}{p_F^3} h(p_{N,L}|r-\tilde{r}|_{D,L}), &\mbox{ for periodic BCs,} \\
    \bar{\rho}\frac{p_{N,L}^3}{p_F^3} \sum_{\sigma\in G} \det \sigma \, h(p_{N,L}|r-\sigma\tilde{r}|_{D,2L}) , &\mbox{ for Dirichlet BCs,} \\  \bar{\rho}\frac{p_{N,L}^3}{p_F^3} \sum_{\sigma\in G} h(p_{N,L}|r-\sigma\tilde{r}|_{D,2L}), &\mbox{ for Neumann BCs ,} 
    \end{dcases} \label{ctmdensitymatrix}
\end{align}
where $p_{N,L}$ and $p_F$ are the finite-size and continuum Fermi momentum (see \eqref{fermimomentumdef}), the function $h$ is defined in \eqref{hdef}, and $|r|_{D,2L} = |r \mod D(2L\Z)^3|$ (the torus distance). 
In fact, in the Dirichlet case this approximation was already introduced in \cite{Friesecke1997} but the error estimates obtained there were insufficient to conclude that the approximation is accurate enough to reveal the surface correction to Dirac exchange. 

The following new estimate will be sufficient to obtain the surface correction. It is a direct consequence of Lemma \ref{central}.
\begin{theorem}[Continuum approximation estimates]
\label{continuumapproxthm} Let $\bar{\rho}=\frac{N}{|Q^D|L^3}$ be fixed. Let $\Psi_{N,L}$ be any determinantal ground state of the free $N$-electron gas on $Q^D_L$, and let $\gamma_{N,L}$ be its single-particle density matrix introduced in \eqref{densitymatrixdef}. Then for any $\epsilon>0$ we have
\begin{align}
    | \gamma_{N,L}(r,\tilde{r})- \gamma_{N,L}^\textnormal{ctm}(r,\tilde{r})| \lesssim_{\epsilon}  \bar{\rho} N^{-\frac{35}{69}+\epsilon} \lesssim_\epsilon \bar{\rho}^{\frac{34}{69}+\epsilon} L^{-\frac{35}{23}+3\epsilon}. \label{densitymatrixest0}
\end{align}
More generally, for any $\alpha, \beta \in \N_0^3$,
\begin{align} |\partial_r^\alpha \partial_{\tilde{r}}^\beta \gamma_{N,L}(r,\tilde{r})-\partial^\alpha_r \partial_{\tilde{r}}^\beta \gamma_{N,L}^\textnormal{ctm}(r,\tilde{r})| \lesssim_{\epsilon,\alpha,\beta} \bar{\rho}^{1+\frac{|\alpha|+|\beta|}{3}} N^{-\frac{35}{69}+\epsilon} \lesssim_{\epsilon,\alpha,\beta} \bar{\rho}^{\frac{34}{69}+\frac{|\alpha|+|\beta|}{3}+\epsilon} L^{-\frac{35}{23}+3\epsilon}. \label{densitymatrixest} \end{align}
Moreover, one has
\begin{align} &|\gamma^{\textnormal{Dir/Neu}}_{N,L}(r,\tilde{r})| \lesssim_\epsilon\bar{\rho} \biggr(N^{-\frac{35}{69}+\epsilon} + (1+|r-\tilde{r}|_{D,2L})^{-2}\biggl) , \label{densitymatrixdecay} \\
 &|\gamma^{\textnormal{Per}}_{N,L}(r,\tilde{r})|\lesssim_\epsilon \bar{\rho} \biggr(N^{-\frac{35}{69}+\epsilon}+(1+|r-\tilde{r}|_{D,L})^{-2}\biggl) .\label{densitymatrixdecayper} \end{align}
\end{theorem}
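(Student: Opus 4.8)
The strategy is to deduce all four estimates from Lemma~\ref{central} after reducing to a closed shell. Observe first that the continuum matrices \eqref{ctmdensitymatrix} are, term by term, precisely what one obtains from the closed-shell formulas \eqref{densitymatrix} by replacing each lattice sum $\sum_{k\in\Z^3\cap B^D_R}$ by the integral $\int_{B^D_R}$: after the substitution $k=D\xi$ and Lemma~\ref{fouriertransformball} one has $\tfrac{1}{4|D|L^3}\int_{B^D_{R}}e^{i\frac{\pi}{L}D^{-1}k\cdot v}\mathrm{d}k=\bar\rho\,\tfrac{p^3}{p_F^3}\,h(p|v|)$ with $p=\pi R/L$ (and the analogous identity for the periodic sum). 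Since $R_N=R_{N_+}$ for every $N$ (immediate from the definitions of $R_N$ and $N_+$, with the boundary-condition-appropriate lattice $\N^3$, $\N_0^3$ or $\Z^3$), the continuum matrix $\gamma^{\textnormal{ctm}}_{N,L}$ is exactly the one attached to the closed shell $N_+$. Hence, invoking the $N_+$-version of Lemma~\ref{openshellcontrol} (see the remark following it), it suffices to prove \eqref{densitymatrixest0}--\eqref{densitymatrixest} with $\gamma_{N,L}$ replaced by the closed-shell matrix $\gamma_{N_+,L}$: for fixed $\bar\rho$ (so that $L\sim N^{1/3}$) the replacement error $\partial^\alpha_r\partial^\beta_{\tilde r}(\gamma_{N,L}-\gamma_{N_+,L})$ is already $\mathcal O_{\epsilon,\alpha,\beta}\bigl(\bar\rho^{1+(|\alpha|+|\beta|)/3}N^{-35/69+\epsilon}\bigr)$, which is of the asserted order.

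Next I would unfold the torus distance in the closed-shell difference. For $r,\tilde r\in Q^D_L$ and $\sigma\in G$ let $w_\sigma$ be the representative of $r-\sigma\tilde r$ modulo $D(2L\Z)^3$ (modulo $D(L\Z)^3$ in the periodic case, where $G$ collapses to $\{\mathrm{id}\}$); then $z_\sigma\coloneqq\tfrac{\pi}{L}D^{-1}w_\sigma$ (resp.\ $\tfrac{2\pi}{L}D^{-1}w_\sigma$) satisfies $|z_\sigma|_{\max}\le\pi$, and since $k\mapsto e^{i\frac{\pi}{L}D^{-1}k\cdot w}$ is $D(2L\Z)^3$-periodic in $w$, both the lattice sum in \eqref{densitymatrix} and the integral defining the corresponding term of $\gamma^{\textnormal{ctm}}_{N,L}$ equal the value at $w=w_\sigma$ of the sum (resp.\ integral) of $e^{ik\cdot z_\sigma}$ over $\Z^3\cap B^D_{R_{N_+}}$ (resp.\ $B^D_{R_{N_+}}$). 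Applying $\partial^\alpha_r\partial^\beta_{\tilde r}$ pulls down, under both the sum and the integral, the same weight: a bounded constant (depending on $\sigma,D,\alpha,\beta$ but not on $L$) times $(\pi/L)^{|\alpha|+|\beta|}(ik)^{\alpha+\beta}$. Summing over the at most eight reflections and applying Lemma~\ref{central} with multi-index $\alpha+\beta$ to each resulting bracket yields
\[
|\partial^\alpha_r\partial^\beta_{\tilde r}(\gamma_{N_+,L}-\gamma^{\textnormal{ctm}}_{N,L})|\lesssim_{\epsilon,\alpha,\beta}\frac{R_{N_+}^{|\alpha|+|\beta|+\frac{34}{23}+\epsilon}}{L^{3+|\alpha|+|\beta|}}.
\]
Plugging in $R_{N_+}\sim\bar\rho^{1/3}L$ and $L^3\sim N$ turns the right-hand side into $\bar\rho^{1+(|\alpha|+|\beta|)/3}N^{-35/69+\epsilon}\sim\bar\rho^{34/69+(|\alpha|+|\beta|)/3+\epsilon}L^{-35/23+\epsilon}$, which, combined with the open-shell error from the first step, gives \eqref{densitymatrixest0}--\eqref{densitymatrixest}.

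For the decay bounds \eqref{densitymatrixdecay}--\eqref{densitymatrixdecayper} I would add to \eqref{densitymatrixest0} a direct bound on $\gamma^{\textnormal{ctm}}_{N,L}$ itself. Lemma~\ref{fouriertransformball} gives $|h(t)|\lesssim(1+t)^{-2}$, and since $\bar\rho$ is fixed both $p_{N,L}$ and $p_{N,L}^3/p_F^3$ lie between positive constants for large $L$; hence $|\gamma^{\textnormal{ctm,Per}}_{N,L}(r,\tilde r)|\lesssim\bar\rho(1+|r-\tilde r|_{D,L})^{-2}$ at once, and $|\gamma^{\textnormal{ctm,Dir/Neu}}_{N,L}(r,\tilde r)|\lesssim\bar\rho\sum_{\sigma\in G}(1+|r-\sigma\tilde r|_{D,2L})^{-2}$. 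The key elementary fact is that $|r-\sigma\tilde r|_{D,2L}\ge|r-\tilde r|_{D,2L}$ for all $r,\tilde r\in Q^D_L$ and $\sigma\in G$: in each coordinate $j$ flipped by $\sigma$ the $2Ld_j$-torus distance of $r_j+\tilde r_j$ is $\min(r_j+\tilde r_j,\,2Ld_j-r_j-\tilde r_j)\ge|r_j-\tilde r_j|$ because $r_j,\tilde r_j\in[0,Ld_j]$, while the unflipped coordinates coincide. Thus each of the eight terms, and hence the sum, is $\lesssim\bar\rho(1+|r-\tilde r|_{D,2L})^{-2}$, and adding the approximation error $\bar\rho N^{-35/69+\epsilon}$ proves \eqref{densitymatrixdecay}--\eqref{densitymatrixdecayper}.

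The genuinely hard analytic input, Lemma~\ref{central}, is already in hand, so the remaining difficulty is essentially organizational. The main thing requiring care is arranging the difference so that Lemma~\ref{central} applies \emph{uniformly} in the argument (i.e.\ with $|z|_{\max}\le\pi$), which forces one to carry along the distinction between the $L$-periodic and $2L$-periodic tori and the reflection group $G$, to keep the dependence of all implicit constants on $\bar\rho$ and $D$ explicit, and to note that the differentiated continuum matrix is smooth away from the null set where the torus distance is non-smooth, so that the pointwise bounds established there suffice.
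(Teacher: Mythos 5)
Your proposal is correct and follows essentially the same route as the paper: reduction to a closed shell via Lemma \ref{openshellcontrol}, application of Lemma \ref{central} to the lattice sums with $z$ reduced modulo $(2\pi\Z)^3$ so that $|z|_{\max}\le\pi$, the observation that each derivative contributes a factor $\pi k/L$ compensating the $R^{|\alpha|}$ in \eqref{centralest}, and the decay of $h$ combined with \eqref{densitymatrixest0} for the last two bounds. Your choice of the current shell $N_+$ (so that $R_{N_+}=R_N$ and $\gamma^{\textnormal{ctm}}_{N,L}$ is exactly the closed-shell continuum matrix) is a small tidying of the paper's argument and is explicitly sanctioned by the remark following Lemma \ref{openshellcontrol}.
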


\begin{proof}
According to Lemma \ref{openshellcontrol}, up to an error $\lesssim \bar{\rho}^{\frac{34}{69}+\epsilon} L^{-\frac{35}{69}+3\epsilon}$, we can use the closed shell formulas for any determinantal ground state.

Next, we know that (i) $R \sim L$, (ii) $\widehat{\chi}_{B^D_R}(z) = |D|R^3 \widehat{\chi}_{B_1}(R D z)$, (iii) $\widehat{\chi}_{B_1}(z) = \frac{4 \pi}{3} h(|z|)$, and (iv) the sums $\sum_{k\in \Z^3\cap B_R} e^{i\frac{\pi}{L} k\scpr D^{-1}w}$ and $\sum_{k\in \Z^3\cap B_R} e^{i \frac{2\pi}{L} k \scpr D^{-1} w}$ are periodic (in $w$) with respect to $D((2L\Z)^3)$ and $D((L\Z)^3)$ respectively. Hence, estimate \eqref{densitymatrixest0} follows directly from the closed shell formulas \eqref{densitymatrix} by applying Lemma \ref{central} with $z = \frac{\pi}{L} D^{-1}(r-\sigma \tilde{r}) \mod (2\pi\Z)^3$ for the Dirichlet and Neumann case and $z = \frac{2\pi}{L} D^{-1} (r-\tilde{r}) \mod (2\pi \Z)^3$ for the periodic case.

For the derivative estimates in \eqref{densitymatrixest}, one can simply use Lemma \ref{central} together with the identity $\widehat{(-i\cdot )^\alpha f} = \partial^\alpha \widehat{f}$. Note that each derivative gives an additional factor of $1/L$ which compensates for the $R^{|\alpha|}$ gained in \eqref{centralest} and accounts for the factor  $\bar{\rho}^{\frac{|\alpha|}{3}}$ in \eqref{densitymatrixest}. The decay estimates \eqref{densitymatrixdecay} and \eqref{densitymatrixdecayper} follow from estimate \eqref{densitymatrixest0} and the decay of $h$ (see Lemma \ref{fouriertransformball}).
\end{proof}

\section{Proof of Theorem \ref{mainthm}} \label{sec:exchangethm}

In this section we make the following simplification to the current notation: as we are dealing with the thermodynamic limit ($N,L\ra \infty$ with $\bar{\rho} = \frac{N}{|Q^D|L^3}$ fixed), all functions and constants depending on both $N$ and $L$ will simply be indexed by $L$ (e.g., $\rho_L = \rho_{N,L}$). 

\subsection{Semi-local functionals}
Our goal now is to prove the following two-term asymptotic expansion for general semi-local functionals of the density and its gradient.
\begin{theorem}[Asymptotics for general semi-local functionals]\label{generalasymp} Let $\frac{N}{|Q^D|L^3} = \bar{\rho} = constant$, $\Psi_{L}$ be any determinantal ground state of the free $N$-electron gas in $Q^D_L$, and $\rho_{L}$ be the associated single-particle density. Suppose that $f(a,b) \in C^0([0,\infty)^2)\cap C^1((0,\infty)\times[0,\infty))$ and let $F[\rho_L] \coloneqq \int_{Q^D_L} f(\rho_L(r),|\nabla \rho_L(r)|) \mathrm{d}r$. Then for any $\epsilon>0$ we have
\begin{align}
    F[\rho_L] = \begin{dcases} f(\bar{\rho},0)|Q^D|L^3 +  \mathcal{O}(L^{-\frac{34}{23}+\epsilon}) , &\mbox{ for periodic BCs,} \\
     f(\bar{\rho},0)|Q^D|L^3 + \biggr(c^{\textnormal{Dir}}_{BL}(\bar{\rho})+c_{FM}(\bar{\rho})\biggr)|\partial Q^D|L^2 + \smallO(L^2), &\mbox{ for Dirichlet BCs,} \\
     f(\bar{\rho},0)|Q^D|L^3 + \biggr(c^{\textnormal{Neu}}_{BL}(\bar{\rho})-c_{FM}(\bar{\rho})\biggr)|\partial Q^D|L^2 + \smallO(L^2), &\mbox{ for Neumann BCs,} \end{dcases} \nonumber
\end{align}
where the boundary layer and Fermi momentum corrections are given by
\begin{align*}
    &c_{BL}(\bar{\rho}) = \begin{dcases} \frac{1}{2p_F} \int_0^\infty f\bigr(\bar{\rho}(1-h(s)),2\bar{\rho}p_F|\dot{h}(s)|\bigr)-f(\bar{\rho},0) \mathrm{d} s, &\mbox{ for Dirichlet BCs,} \\
    \frac{1}{2p_F} \int_0^\infty f\bigr(\bar{\rho}(1+h(s)),2\bar{\rho}p_F|\dot{h}(s)|\bigr)-f(\bar{\rho},0) \mathrm{d} s , &\mbox{ for Neumann BCs,} \end{dcases}  \\
    &c_{FM}(\bar{\rho}) \coloneqq \frac{3\pi \bar{\rho}}{8p_F} \partial_1 f(\bar{\rho},0) .
\end{align*}
\end{theorem}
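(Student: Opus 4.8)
\emph{Overall strategy.} The plan is to replace the exact density $\rho_L(r)=\gamma_{N,L}(r,r)$ by the diagonal $\rho_L^{\mathrm{ctm}}(r):=\gamma^{\mathrm{ctm}}_{N,L}(r,r)$ of the continuum density matrix \eqref{ctmdensitymatrix}. By Theorem~\ref{continuumapproxthm} (used with $|\alpha|+|\beta|\le 1$) one has $\rho_L=\rho_L^{\mathrm{ctm}}+\mathcal{O}(L^{-35/23+\epsilon})$ and $\nabla\rho_L=\nabla\rho_L^{\mathrm{ctm}}+\mathcal{O}(L^{-35/23+\epsilon})$ uniformly; since $\tfrac{35}{23}>1$, such an error integrated over $Q^D_L$ (against any function that stays in a compact set where $f$ is Lipschitz) is $\smallO(L^2)$, hence harmless. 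Next one splits $Q^D_L$ into a \emph{bulk} region $B_\ell=\{r\in Q^D_L:\mathrm{dist}(r,\partial Q^D_L)>\ell\}$ and six \emph{boundary slabs} of thickness $\ell$, with $\ell=\ell(L)$ chosen so that $1\ll\ell\ll L^{1/2}$ (e.g.\ $\ell=L^{1/4}$). The three constants in the statement then emerge as follows: the volume term $f(\bar\rho,0)|Q^D|L^3$ from the bulk; a surface term $c_{FM}$ that is \emph{hidden in the bulk} and comes from the $L^{-1}$ shift of the Fermi momentum; and the boundary-layer term $c_{BL}$ from the slabs.

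\emph{The bulk.} On $B_\ell$ only the identity reflection in \eqref{ctmdensitymatrix} contributes to leading order: since $|h(t)|,|\dot h(t)|\lesssim t^{-2}$ (Lemma~\ref{fouriertransformball}), the other reflections and their gradients contribute $\mathcal{O}(\mathrm{dist}(r,\partial Q^D_L)^{-2})$. Hence on $B_\ell$ one has $\bigl|\rho_L(r)-\bar\rho\,(p_{N,L}/p_F)^3\bigr|+\bigl|\nabla\rho_L(r)\bigr|\lesssim\mathrm{dist}(r,\partial Q^D_L)^{-2}+L^{-35/23+\epsilon}$, which integrates over $B_\ell$ to $\mathcal{O}(L^2/\ell)+\mathcal{O}(L^{34/23+\epsilon})=\smallO(L^2)$. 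Since $\bar\rho>0$, the pairs $(\rho_L,|\nabla\rho_L|)$ occurring on $B_\ell$ lie, for $L$ large, in a fixed compact subset of $(0,\infty)\times[0,\infty)$ on which $f\in C^1$, so that $\int_{B_\ell}f(\rho_L,|\nabla\rho_L|)=f\bigl(\bar\rho(p_{N,L}/p_F)^3,0\bigr)|B_\ell|+\smallO(L^2)$. A first-order Taylor expansion in the first argument together with the Fermi-momentum asymptotics of Lemma~\ref{latticeeffect}, i.e.\ $p_{N,L}=p_F\pm\tfrac{\pi|\partial Q^D|}{8|Q^D|}L^{-1}+\mathcal{O}(L^{-35/23+\epsilon})$ ($+$ for Dirichlet, $-$ for Neumann), gives $f\bigl(\bar\rho(p_{N,L}/p_F)^3,0\bigr)=f(\bar\rho,0)\pm\partial_1 f(\bar\rho,0)\tfrac{3\pi\bar\rho|\partial Q^D|}{8p_F|Q^D|}L^{-1}+\smallO(L^{-1})$; multiplying by $|B_\ell|=|Q^D|L^3-\mathcal{O}(\ell L^2)$ produces $f(\bar\rho,0)|Q^D|L^3$, the correction $\pm c_{FM}(\bar\rho)|\partial Q^D|L^2$, and a $-f(\bar\rho,0)\cdot(\text{slab volume})$ term to be absorbed below. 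For periodic boundary conditions $\rho_L^{\mathrm{ctm}}$ is the constant $\bar\rho(p_{N,L}/p_F)^3$, $p_{N,L}=p_F+\mathcal{O}(L^{-35/23+\epsilon})$ and $\nabla\rho_L^{\mathrm{ctm}}\equiv 0$, so only the volume term survives and this case follows at once.

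\emph{The boundary layer.} Near the face $\{r_1=0\}$, and on the part of its slab with $r_2,r_3$ at distance $\ge\ell$ from the remaining faces — the complement having volume $\mathcal{O}(\ell^2 L)=\smallO(L^2)$ and a bounded integrand — the same decay estimates show that only the identity and the single reflection $\sigma_1$ in the first coordinate survive in \eqref{ctmdensitymatrix}, up to $\mathcal{O}(\ell^{-2})$, so that $\rho_L(r)=\bar\rho\bigl(1\mp h(2p_F r_1)\bigr)+E(r)$ and $|\nabla\rho_L(r)|=2\bar\rho p_F|\dot h(2p_F r_1)|+E'(r)$ ($\mp$ being $-$ for Dirichlet, $+$ for Neumann), where, using Lemma~\ref{latticeeffect} to replace $p_{N,L}$ by $p_F$, one has $|E(r)|,|E'(r)|\lesssim\ell^{-2}+L^{-1}(1+r_1^{-1})+L^{-35/23+\epsilon}$. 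For $r_1\ge\eta$ (a fixed small $\eta>0$) both arguments of $f$ stay, for $L$ large, in a compact subset of $(0,\infty)\times[0,\infty)$, so by the mean value theorem $f(\rho_L,|\nabla\rho_L|)=f\bigl(\bar\rho(1\mp h(2p_F r_1)),2\bar\rho p_F|\dot h(2p_F r_1)|\bigr)+\mathcal{O}_\eta(\ell^{-2}+L^{-1}+L^{-35/23+\epsilon})$, whose error integrates to $\smallO(L^2)$; for $r_1<\eta$ one only uses that $f$ is continuous, hence bounded on the compact range of $(\rho_L,|\nabla\rho_L|)$, so this sliver contributes $\mathcal{O}(\eta L^2)$. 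Integrating over the face area $|F_1|=d_2 d_3 L^2$ and over $r_1\in[0,\ell]$, subtracting the $f(\bar\rho,0)$ term already counted in the bulk, and substituting $s=2p_F r_1$, the normal integral converges by dominated convergence (again via $|h|,|\dot h|\lesssim s^{-2}$) to $\tfrac1{2p_F}\int_0^\infty\bigl[f(\bar\rho(1\mp h(s)),2\bar\rho p_F|\dot h(s)|)-f(\bar\rho,0)\bigr]\,\mathrm{d}s=c_{BL}^{\mathrm{Dir/Neu}}(\bar\rho)$. Hence this slab contributes $|F_1|c_{BL}(\bar\rho)+\mathcal{O}(\eta L^2)+\smallO(L^2)$; summing the six slabs (pairwise overlaps have volume $\smallO(L^2)$), using $\sum_{\text{faces}}|F_i|=|\partial Q^D|L^2$, and letting $\eta\downarrow 0$ after $L\to\infty$ yields the boundary-layer contribution $c_{BL}(\bar\rho)|\partial Q^D|L^2$. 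Adding the bulk and boundary-layer parts — the $-f(\bar\rho,0)\cdot(\text{slab volume})$ term cancelling the $f(\bar\rho,0)$ baseline on the slabs — gives the claimed expansions.

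\emph{Main obstacle.} The delicate points are: (i) $f$ is only continuous where $\rho_L$ vanishes, i.e.\ on $\partial Q^D_L$, which forces the splitting at $r_1=\eta$ and the order of limits ``$L\to\infty$, then $\eta\downarrow 0$'' in the boundary layer; (ii) the two competing demands on the slab thickness — $\ell\to\infty$ to make the power-law tails of the non-identity reflections $\smallO(L^2)$ in the bulk, but $\ell\ll L^{1/2}$ to make the edge and corner overlaps $\smallO(L^2)$ — must be met simultaneously (the continuum estimate of Theorem~\ref{continuumapproxthm} being more than sufficient for this); and (iii) recognising that the bulk density is not $\bar\rho$ but $\bar\rho(p_{N,L}/p_F)^3$, so that the $L^{-1}$ correction to the Fermi momentum feeds a genuine surface-order contribution $c_{FM}$ into the bulk integral — the subtle effect highlighted in the introduction.
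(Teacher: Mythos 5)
Your argument is correct, and all the essential ingredients coincide with the paper's: the continuum approximation of Theorem~\ref{continuumapproxthm} (with the improved $L^{-35/23+\epsilon}$ error, which is exactly what makes the surface term accessible), the Fermi-momentum shift of Lemma~\ref{latticeeffect} feeding a genuine $O(L^2)$ contribution $\pm c_{FM}$ through $\partial_1 f(\bar\rho,0)$, the decay $|h|,|\dot h|\lesssim t^{-2}$ to isolate the one-dimensional boundary profile, and — crucially — the double limit (first $L\to\infty$, then the microscopic cutoff $\eta\downarrow 0$) to cope with $f$ being merely continuous where $\rho$ vanishes, which is the paper's $\delta$-argument in different clothing. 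The organizational difference is real but modest: the paper works on a single fixed-$\delta$ good region $Q_{L,\delta}$ and expands $f$ via the fundamental theorem of calculus, sorting the resulting terms by reflection-group element into $I(\delta)$, $J_\sigma(\delta)$, $K_j(\delta)$; you instead introduce a mesoscopic slab thickness $\ell(L)$ with $1\ll\ell\ll L^{1/2}$ and compare directly with the limiting profile on each piece. Your decomposition is arguably more geometric and makes the bookkeeping of the $-f(\bar\rho,0)\cdot(\text{slab volume})$ cancellation transparent, at the price of having to track the two competing constraints on $\ell$; the paper's version avoids any $L$-dependent cutoff but buries the same cancellations inside the $t$-integral of the Taylor expansion. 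Both routes deliver the identical constants and remainders.
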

\begin{proof} In the periodic case, since $f$ is differentiable near $(\bar{\rho},0)$ for any $\bar{\rho}>0$, the result follows from Theorem \ref{continuumapproxthm} and a simple Lipschitz estimate.

As the Neumann and Dirichlet case are  similar to each other, we give the details only for the Dirichlet case. The proof is somewhat technical, due to the fact that we have not assumed continuity of the derivative of $f$ with respect to $|\nabla \rho|$ at $\rho=0$ (otherwise the small parameter $\delta$ in the proof below could be set to zero). But allowing a derivative discontinuity at $\rho=0$ is essential to include the B88 functional, and shows the robustness of our results. 

First note that, since $f(\rho,|\nabla\rho|)$ depends only on $\rho$ and on the norm of $|\nabla \rho|$, by the symmetries pointed out in Lemma \ref{rhosymmetries} we can reduce the integration domain to
\begin{align*} F[\rho_L] = 8 \int_{Q^D_{\frac{L}{2}}} f(\rho_L,|\nabla \rho_L|) \mathrm{d}r . \end{align*}
Thus the continuum approximation from Theorem \ref{continuumapproxthm} reads
\begin{align} \rho^\textnormal{ctm}_L(r) = \gamma^\textnormal{ctm}_L(r,r) = \bar{\rho}\frac{p_L^3}{p_F^3}\sum_{\sigma \in G} \det \sigma h(p_{L} \norm{r-\sigma r}) \label{eq:ctmdensity0} \end{align}
(since $\norm{r-\sigma r}_{D,2L} = \norm{r-\sigma r}$ for any $r\in Q^D_{\frac{L}{2}}$ and $\sigma \in G$). The idea now is to use the continuum approximation to show that $\rho_L$ is only small close to the faces and  edges of the box. For this, first note that $h(r) =1$ if and only if $r=0$. Therefore, for any $\delta >0$ there exist $c(\delta)>0$ such that $1-h(2p_L |r_i|)\geq c(\delta)$ as long as $|r_i|>\delta$ and $L$ is big enough. Moreover, from \eqref{eq:ctmdensity0} and the fact that $h$ decays at infinity, we see that we can choose $R(\delta)>0$ such that
\begin{align}
    C\geq \rho^\textnormal{ctm}_L(r) \geq \frac{c(\delta)}{2} \quad \mbox{for any $r \in Q_{L,\delta}$ and some $C>0$}, \label{eq:ctmest}
\end{align}
where $Q_{L,\delta}$ is the region defined (using the projection $\pi_j$ from the notation section which removes the coordinate $r_j$) by
\begin{align*}
    Q_{L,\delta} \coloneqq \{ r\in Q^D_{\frac{L}{2}} : \min_{j\leq 3}|r_j|>\delta \quad \mbox{ and } \quad \min_{j\leq 3}|\pi_j r| > R(\delta) \}.
\end{align*}
Next, note that by the continuum approximation estimate (see equation \eqref{densitymatrixest}) and estimate \eqref{eq:ctmest}, we find that
\begin{align*}
    2C \geq \rho_L(r) \geq \frac{c(\delta)}{4} \quad \mbox{for any $r \in Q_{L,\delta}$ and $L$ big enough}.
\end{align*}
In particular, by the assumptions on $f$ ($C^1$ away from $a=0$ and $C^0$ up to $a=0$), there exists a Lipschitz constant $C(\delta)>0$ and a uniform (with respect to $\delta$ and $L$) constant $C_0>0$ such that
\begin{align*}
    |F[\rho_L]-F[\rho^\textnormal{ctm}_L]| &\lesssim \int_{Q^D_{\frac{L}{2}}\setminus Q_{L,\delta}} C_0 + C(\delta) \int_{Q_{L,\delta}}|\rho_L-\rho^\textnormal{ctm}_L|+|\nabla \rho_L - \nabla \rho^\textnormal{ctm}_L|   \\ 
    &\lesssim \delta L^2 + R(\delta)^2 L +  C(\delta) L^{\frac{34}{23}+\epsilon} . 
\end{align*}
Therefore, by dividing the above by $L^2$, taking the limit as $L\ra \infty$, and then the limit $\delta \ra 0$, one has
\begin{align}
    F[\rho_L] = F[\rho^\textnormal{ctm}_L] + \smallO(L^2) . \label{eq:ctmsmallo}
\end{align}

The next step is to work with the continuum versions on $Q_{L,\delta}$ and use a Taylor expansion of $f$ together with the decay of $h$ to determine the asymptotic coefficients. To shorten the notation, let us define
\begin{align*}
&f^L_k(r,t) \coloneqq \partial_k f\bigr(\bar{\rho}(1-t)+t\rho^\textnormal{ctm}_L(r),t|\nabla \rho^\textnormal{ctm}_L(r)|\bigr), \quad \quad  k \in \{1,2\},
\end{align*}
with the usual notation $\partial_kf$ for the partial derivative of $f$ with respect to the $k$-th argument. 
Then, note that by estimate \eqref{eq:ctmest} and the assumptions on $f$, we have
\begin{align}
    |f_{k}^L(r,t)| \leq C(\delta), \label{eq:fkest}
\end{align}
for any $(r,t)\in Q_{L,\delta}\times[0,1]$, for any $k\in\{1,2\}$, and for some $C(\delta)>0$. In addition, by the fundamental theorem of calculus we find
\begin{align} F[\rho^\textnormal{ctm}_L]
&=\underbrace{8\int_{Q_{\frac{L}{2}}^D\setminus Q_{L,\delta}} f(\rho^\textnormal{ctm}_L,|\nabla \rho^\textnormal{ctm}_L|) \mathrm{d}r}_{\lesssim \delta L^2 +R(\delta)^2 L} + 8\int_{Q_{L,\delta}} \biggr( f(\bar{\rho},0) 
+ \int_0^1 f^L_1(r,t)(\rho^\textnormal{ctm}_L(r)-\bar{\rho}) \mathrm{d}t \nonumber \\
&+ \int_{0}^1 \underbrace{f_2^L(r,t)|\nabla\rho_L^{\rm ctm}(r)|}_{=\frac{f_2^L(r,t)}{|\nabla \rho^\textnormal{ctm}_L(r)|} \sum_{j=1}^3 \bigl(\partial_{r_j} \rho^\textnormal{ctm}_L(r)\bigr)^2} \mathrm{d}t\biggr)\mathrm{d}r \nonumber \\
 &= \mathcal{O}(\delta L^2 + R(\delta)^2 L) + f(\bar{\rho},0)|Q^D|L^3 + \underbrace{8\bar{\rho}\int_0^1 \int_{Q_{L,\delta}}\!f^L_1(r,t) \biggr(\frac{p_L^3}{p_F^3}-1\biggr) \mathrm{d}r\mathrm{d}t}_{= :I(\delta)} \nonumber \\
 &+\sum_{\sigma \neq id} \underbrace{\frac{8\bar{\rho} p_L^3}{p_F^3}\int_0^1\int_{Q_{L,\delta}} \!f^L_1(r,t) \det \sigma h(p_L\norm{r-\sigma r}) \mathrm{d} r\mathrm{d}t}_{=: J_\sigma(\delta)}  \nonumber \\
 &+\sum_{j=1}^3 \underbrace{\frac{8\bar{\rho}^2p_L^6}{p_F^6}\int_0^1 \int_{Q_{L,\delta}} \frac{f^L_2(r,t)}{|\nabla \rho^\textnormal{ctm}_{L}(r)|}\biggr(\sum_{\sigma_{jj}=-1} \det \sigma p_L\dot{h}(p_L \norm{r-\sigma r}) \frac{4r_j}{\norm{r-\sigma r}}\biggr)^2 \mathrm{d}r\mathrm{d}t}_{=: K_j(\delta)} . \label{eq:IJKdef} \end{align}
 Therefore, to complete the proof we need to study the ($L$ dependent) terms $I(\delta)$, $J_\sigma(\delta)$ and $K(\delta)$. Physically, $I$ is a correction coming from the expansion of the Fermi momentum, $J$ comes from the density-dependence of $f$ (and is already present for the LDA), and $K$ comes from the dependence of $f$ on the norm of the density gradient (and is absent for the LDA). Let us start with  $I(\delta)$. In this case, we first note that 
\begin{align*}\frac{p_L^3}{p_F^3}-1 = \frac{3\pi |\partial Q^D|}{8 p_F |Q^D|}\frac{1}{L} + \mathcal{O}(L^{-\frac{35}{23}+\epsilon})
\end{align*}
by Lemma \ref{latticeeffect}. Hence by scaling out the $L$ in $I(\delta)$ and making the following observations: (i) $f^L_1(r,t)$ is bounded in $Q_{\delta,L}$ by a constant depending on $\delta$, but independent of $L$ (see \eqref{eq:fkest}) , (ii) $\lim_{L\ra \infty} f^L_1(Lr,t) = \partial_1 f(\bar{\rho},0)$ for a.e. $(r,t)\in Q^D_\frac{1}{2} \times [0,1]$, and (iii) $\lim_{L\ra \infty} \chi_{Q_{L,\delta}}(Lr) = \chi_{Q^D_\frac12}(r)$ in $L^1(\R^3)$, we conclude that
 \begin{align}
     I(\delta) =  \frac{3\pi \bar{\rho} |\partial Q^D|}{8 p_F} \partial_1 f(\bar{\rho},0) L^2 + \smallO(L^2) = c_{FM}(\bar{\rho})|\partial Q^D| L^2 + \smallO(L^2). \label{eq:Iest0}
 \end{align}
 We consider next the terms $J_\sigma$ with $trace(\sigma) \leq -1$. In this case, note that by the decay of $h$ (see Lemma \ref{fouriertransformball}), there exists some $j \in \{1,2,3\}$, such that $|h(p_L\norm{r-\sigma r})| \lesssim (1+\pi_j r)^{-2}$. Hence by estimate \eqref{eq:fkest} we find that $J_\sigma \lesssim_\delta L \log L$, and therefore, we just need to worry about the terms $K_j$ and $J_\sigma$ with $trace (\sigma) = 1$. For simplicity let us label the reflection $\sigma$ with $trace(\sigma) = 1$ and $\sigma_{jj}=-1$ by $\sigma_j$. Now note that, by the decay of $h$ and estimate \eqref{eq:fkest}, we have
\begin{align*}
    |f_1^L(r,t)h(2p_L r_j) + f_2^{L}(r,t)\frac{|\partial_{r_j} \rho^\textnormal{ctm}(r)|^2}{|\nabla \rho^\textnormal{ctm}(r)|}| \lesssim \frac{C(\delta)}{(1+r_j)^2} .
\end{align*}
As a consequence, up to an error bounded by $C(\delta) L$, one can change the domain of integration of $K_j(\delta)+J_{\sigma_j}(\delta)$ from $Q_{L,\delta}$ to 
\begin{align*}
    Q_{L,\delta}^j \coloneqq \{ r : r_j \in (0,\infty), \pi_j r \in \pi_jQ^D_{\frac{L}{2}}, \quad \min_{j\leq 3}|r_j| \geq \delta , \mbox{ and } \min_{j\leq3} |\pi_j r| > R(\delta) \}.
\end{align*}
In summary, we have
\begin{align}
    K_j(\delta)+J_{\sigma_j}(\delta)&= \frac{8\bar{\rho}p_L^3}{p_F^3}\int_0^1 \int_{\R^3} \chi_{Q_{L,\delta}^j}(r)\biggr(\frac{ \bar{\rho} p_L^3}{p_F^3}\frac{f_2^L(r,t)}{|\nabla \rho^\textnormal{ctm}_L(r)|} \biggr(\sum_{\sigma_{jj}=-1} \frac{p_L \det \sigma\dot{h}(p_L\norm{r-\sigma r})4r_j}{\norm{(1-\sigma)(r_j,L\pi_jr)}} \biggr)^2  \nonumber \\
    &+f_1^L(r,t)(-h(2p_L r_j))\biggr)\mathrm{d}r \mathrm{d} t  + \mathcal{O}(C(\delta) L) . \label{eq:mittelest}
\end{align}
Now note that by the decay of $h$, for a.e. $(r_j, \pi_j r) \in [0,\infty)\times \pi_j(Q^D_{\frac{1}{2}})$, the following holds:
\begin{align*} 
    &\lim_{L\ra \infty} \rho^\textnormal{ctm}_L(r_j, L\pi_j r)-\bar{\rho} = -\bar{\rho} h(2p_F r_j) ,\\
    &\lim_{L\ra \infty}|\nabla \rho^\textnormal{ctm}_L(r_j,L\pi_jr)|= 2\bar{\rho} p_F |\dot{h}(2p_F r_j)| ,\\
    &\lim_{L\ra \infty} \sum_{\sigma_{jj}=-1} \det \sigma \bar{\rho}p_L\dot{h}\bigr(p_L \norm{(1-\sigma)(r_j,L\pi_jr)}\bigr) \frac{4r_j}{\norm{(1-\sigma_j)(r_j,L\pi_j r_j)}} = -2\bar{\rho} p_F \dot{h}(2p_F r_j) , \\
    &\lim_{L\ra \infty} \chi_{Q_{L,\delta}^j}(r_j, L\pi_jr) = \chi_{(\delta,\infty)}(r_j)\chi_{\pi_j(Q^D_{\frac12})}(\pi_j r) .
\end{align*}
Therefore, by scaling out the $L$ in the variables $\pi_jr$ in $K_j(\delta)$ and noting that the integrand in equation \eqref{eq:mittelest} is bounded by $C(\delta) \chi_{Q_{L,\delta}}(r)(1+|r_j|)^{-2}$ (by the decay of $h$), we conclude from dominated convergence in $Q^j_{L,\delta}$ and the continuity of $\nabla f$ in $Q^j_{L,\delta}$ that
\begin{align}
    K_j(\delta)+J_{\sigma_j}(\delta)&=2 \biggr(\prod_{\ell\neq j} d_\ell\biggr) L^2 \int_\delta^\infty \! \int_0^1\!\biggr(\partial_2 f\bigr(\bar{\rho}(1-th(2p_F r_j)),t2\bar{\rho}p_F|\dot{h}(2p_F r_j)|\bigr)2\bar{\rho}p_F|\dot{h}(2p_F r_j)| \nonumber \\
    &- \partial_1 f\bigr(\bar{\rho}(1-th(2p_F r_j)),t2\bar{\rho}p_F|\dot{h}(2p_F r_j)|\bigr)\bar{\rho} h(2p_F r_j)\biggr)\mathrm{d}t\mathrm{d}r_j \nonumber + \mathcal{O}(\delta L^2)+ \smallO_\delta(L^2)\nonumber \\
    &= \frac{\prod_{\ell\neq j} d_\ell}{p_F} L^2 \int_{2p_F \delta}^\infty f\bigr(\bar{\rho}(1-h(r_j),2\bar{\rho}p_F |\dot{h}(r_j)|\bigr)-f(\bar{\rho},0) \mathrm{d}r_j +\mathcal{O}(\delta L^2)+\smallO_\delta(L^2) , \label{eq:JKest}
\end{align}
where $\smallO_\delta$ emphasizes that the bounds may depend on $\delta$.
As a consequence, the proof follows from equations \eqref{eq:ctmsmallo},\eqref{eq:IJKdef},\eqref{eq:Iest0}, and \eqref{eq:JKest} by first taking the limit $L\ra \infty$ and then $\delta \ra 0$.
\end{proof}

\begin{remark*}[on Theorem~\ref{generalasymp}]
\begin{itemize} 
\item Note that $c_{FM}(\bar{\rho})$ above depends on $\bar{\rho}$ which is in contrast with the constant $c_{FM}$ defined in Theorem \ref{exactexchangethm} below. The reason is that for the ``physically" relevant cases (see the discussion before Theorem \ref{mainthm}), we have $c_{FM}(\bar{\rho}) = - c_{FM} \bar{\rho}$, where $c_{FM} = \frac{3}{8}$ is precisely the value defined there (see Corollary \ref{asymplda} below).
\item The same arguments can be used for semi-local functionals of higher order derivatives by considering generalized variables like $(\rho, \nabla \rho, ..., \partial^\alpha \rho)$ and applying estimate \eqref{centralest}. As we are only interested in LDA and GGAs for the moment, we leave the asymptotics of functionals for higher order derivatives for future works 
\item If $f$ is more regular and one knows the ratio at which the derivatives diverge when $(a,b) \ra 0$, one can further use the decay of $h$ to improve the remainder term from $\smallO(L^2)$ (for Dirichlet and Neumann cases) to, possibly, $\mathcal{O}(L^{\frac{34}{23}+\epsilon})$. \end{itemize} \end{remark*}

As straightforward corollaries of Theorem \ref{generalasymp}, we obtain the two-term asymptotic expansion for LDA and GGAs from Theorem \ref{mainthm}.

\begin{corollary}[Asymptotics of LDA] \label{asymplda}Let $\rho_{L}$ be the single-particle density of any determinantal ground state of the free $N$-electron gas in $Q_L^D$ (under our usual boundary conditions). Then in the thermodynamic limit we have
\begin{align*} E_x^{LDA}[\rho_L] = \begin{dcases}  -c_x \bar{\rho}^{4/3}|Q^D| L^3 + \mathcal{O}(L^{\frac{34}{23}+\epsilon}) , &\mbox{for periodic BCs,} \\
-c_x \bar{\rho}^{4/3}|Q^D| L^3 - c_{LDA}^{\textnormal{Dir}} \bar{\rho} |\partial Q^D| L^2 + \smallO(L^2), &\mbox{for Dirichlet BCs,} \\
-c_x \bar{\rho}^{4/3}|Q^D| L^3 - c_{LDA}^{\textnormal{Neu}}\bar{\rho} |\partial Q^D| L^2 + \smallO(L^2) , &\mbox{for Neumann BCs,}
\end{dcases}
\end{align*}
where the constants are
\begin{align*}
    c_{LDA} = \begin{dcases} c_{FM} + \frac{3}{8\pi} \int_0^\infty (1-h(s))^{\frac{4}{3}}-1 \mathrm{d}s, &\mbox{for Dirichlet BCs,} \\
    -c_{FM} + \frac{3}{8\pi}\int_0^\infty (1+h(s))^{\frac{4}{3}}-1 \mathrm{d}s , &\mbox{for Neumann BCs,} \end{dcases}
\end{align*}
with $c_{FM} = \frac{3}{8}$. (Compare with the constant in Theorem \ref{exactexchangethm}.)
\end{corollary}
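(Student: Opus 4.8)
The plan is to obtain this corollary directly from Theorem~\ref{generalasymp}, with the only remaining work being the evaluation of two explicit constants. First I would observe that, since $\rho_L$ is supported in $Q^D_L$, one has
\begin{align*}
  E_x^{LDA}[\rho_L]=\int_{Q^D_L}e_x^{LDA}(\rho_L(r))\,\mathrm{d}r=\int_{Q^D_L}f\bigl(\rho_L(r),|\nabla\rho_L(r)|\bigr)\,\mathrm{d}r=F[\rho_L]
\end{align*}
for the choice $f(a,b)=-c_x a^{4/3}$ coming from \eqref{LDA}. This $f$ does not depend on $b$ and belongs to $C^1([0,\infty)^2)$ — its derivative $\partial_1 f(a,0)=-\tfrac43 c_x a^{1/3}$ extends continuously to $a=0$ — so in particular it satisfies the hypothesis $C^0([0,\infty)^2)\cap C^1((0,\infty)\times[0,\infty))$ of Theorem~\ref{generalasymp}. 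Applying that theorem then immediately yields the bulk term $f(\bar{\rho},0)|Q^D|L^3=-c_x\bar{\rho}^{4/3}|Q^D|L^3$ and the periodic remainder, and in the Dirichlet (resp.\ Neumann) case a surface term $\bigl(c_{BL}^{\textnormal{Dir}}(\bar{\rho})+c_{FM}(\bar{\rho})\bigr)|\partial Q^D|L^2+\smallO(L^2)$ (resp.\ $\bigl(c_{BL}^{\textnormal{Neu}}(\bar{\rho})-c_{FM}(\bar{\rho})\bigr)|\partial Q^D|L^2+\smallO(L^2)$).

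Second I would evaluate $c_{FM}(\bar{\rho})$ and $c_{BL}(\bar{\rho})$ for this $f$. The one algebraic identity that does all the work is, with $c_x=\tfrac34(3/\pi)^{1/3}$ and $p_F=(3\pi^2\bar{\rho})^{1/3}$,
\begin{align*}
  \frac{c_x\bar{\rho}^{1/3}}{p_F}=\frac34\cdot\frac{(3/\pi)^{1/3}}{(3\pi^2)^{1/3}}=\frac{3}{4\pi},\qquad\text{hence}\qquad\frac{c_x\bar{\rho}^{4/3}}{p_F}=\frac{3}{4\pi}\bar{\rho}.
\end{align*}
From $c_{FM}(\bar{\rho})=\tfrac{3\pi\bar{\rho}}{8p_F}\,\partial_1 f(\bar{\rho},0)=-\tfrac{\pi\bar{\rho}}{2}\cdot\tfrac{c_x\bar{\rho}^{1/3}}{p_F}$ I get $c_{FM}(\bar{\rho})=-\tfrac38\bar{\rho}$, i.e.\ $c_{FM}(\bar{\rho})=-c_{FM}\bar{\rho}$ with $c_{FM}=\tfrac38$. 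Since $f$ is $b$-independent, the Dirichlet boundary-layer constant collapses to
\begin{align*}
  c_{BL}^{\textnormal{Dir}}(\bar{\rho})&=\frac{1}{2p_F}\int_0^\infty\Bigl(-c_x\bigl(\bar{\rho}(1-h(s))\bigr)^{4/3}+c_x\bar{\rho}^{4/3}\Bigr)\,\mathrm{d}s\\
  &=-\frac{c_x\bar{\rho}^{4/3}}{2p_F}\int_0^\infty\bigl((1-h(s))^{4/3}-1\bigr)\,\mathrm{d}s=-\frac{3\bar{\rho}}{8\pi}\int_0^\infty\bigl((1-h(s))^{4/3}-1\bigr)\,\mathrm{d}s,
\end{align*}
and adding $c_{FM}(\bar{\rho})$ produces exactly $-c_{LDA}^{\textnormal{Dir}}\bar{\rho}$ with $c_{LDA}^{\textnormal{Dir}}=c_{FM}+\tfrac{3}{8\pi}\int_0^\infty((1-h(s))^{4/3}-1)\,\mathrm{d}s$. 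The Neumann case is the same computation after replacing $h$ by $-h$ and flipping the sign in front of $c_{FM}(\bar{\rho})$, which gives $c_{LDA}^{\textnormal{Neu}}=-c_{FM}+\tfrac{3}{8\pi}\int_0^\infty((1+h(s))^{4/3}-1)\,\mathrm{d}s$.

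I do not expect a genuine obstacle: the statement is a direct specialization of Theorem~\ref{generalasymp}. The only points deserving a line of care are (i) verifying that $f(a,b)=-c_x a^{4/3}$ meets the regularity assumption — it does, and in fact is $C^1$ up to $a=0$, so none of the $\delta$-truncation subtleties present in the proof of Theorem~\ref{generalasymp} are visible at the level of this corollary; (ii) the sign bookkeeping distinguishing Dirichlet from Neumann, which enters through the boundary-layer profile $\bar{\rho}(1\mp h)$ and the $\mp c_{FM}(\bar{\rho})$ prefactor; and (iii) noting that the integrals converge, which follows from $|h(s)|\le h(0)=1$ (so that $(1\mp h(s))^{4/3}$ is real) together with $h(s)=O(s^{-2})$ as $s\to\infty$, whence $(1\mp h(s))^{4/3}-1=O(s^{-2})$.
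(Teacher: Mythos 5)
Your proposal is correct and follows exactly the paper's route: the paper's proof of Corollary~\ref{asymplda} is the one-line application of Theorem~\ref{generalasymp} to the LDA integrand $f(a,b)=-c_x a^{4/3}$, with the evaluation of $c_{FM}(\bar{\rho})=-\tfrac38\bar{\rho}$ and of the boundary-layer integral left implicit (the former is recorded in the remark following Theorem~\ref{generalasymp}). Your explicit verification of the regularity hypothesis, the identity $c_x\bar{\rho}^{4/3}/p_F=\tfrac{3}{4\pi}\bar{\rho}$, and the convergence of the integrals merely fills in details the paper omits.
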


\begin{corollary}[Asymptotics for GGA]\label{asympgga}
Let $g^{GGA}(a,b) \in C^0([0,\infty)^2)\cap C^1((0,\infty)\times[0,\infty))$ such that $g^{GGA}(a,0) = 0$, for all $a \geq 0$. Moreover, let $\rho_{L}$ be the single-particle density of any determinantal ground state of the free $N$-electron gas in $Q_L^D$ (under our usual boundary conditions). Then, for $\Delta E_x^{GGA}[\rho_L] = \int_{Q^D_L} g^{GGA}(\rho_L,|\nabla \rho_L|)$, in the thermodynamic limit we have
\begin{align*}
    \Delta E_x^{GGA}[\rho_L] = \begin{dcases} \mathcal{O}(L^{-\frac{34}{23}+\epsilon}) &\mbox{for periodic BCs,}\\
    c^{\textnormal{Dir}}_{GGA}(\bar{\rho}) |\partial Q^D| L^2 + \smallO(L^2) , &\mbox{for Dirichlet BCs,}\\
    c_{GGA}^{\textnormal{Neu}}(\bar{\rho})|\partial Q^D| L^2 + \smallO(L^2), &\mbox{for Neumann BCs,} \end{dcases}
\end{align*}
where the constants are given by
\begin{align*}
    c_{GGA}(\bar{\rho}) = \begin{dcases} \frac{1}{2p_F}\int_0^\infty g^{GGA}\biggr(\bar{\rho}(1-h(s)),2\bar{\rho}p_F|\dot{h}(s)|\biggl) \mathrm{d}s, &\mbox{ for Dirichlet BCs,} \\
    \frac{1}{2p_F} \int_0^\infty g^{GGA}\biggr(\bar{\rho}(1+h(s)), 2\bar{\rho} p_F| \dot{h}(s)|\biggl) \mathrm{d}s , &\mbox{ for Neumann BCs.} \end{dcases} 
\end{align*}

\end{corollary}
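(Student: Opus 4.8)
The plan is to obtain the statement as an immediate specialization of Theorem~\ref{generalasymp} to the choice $f=g^{\rm GGA}$, using the normalization $g^{\rm GGA}(a,0)=0$ to collapse two of the three coefficients appearing there.

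First, observe that the regularity assumed of $g^{\rm GGA}$, namely $g^{\rm GGA}\in C^0([0,\infty)^2)\cap C^1((0,\infty)\times[0,\infty))$, is exactly the hypothesis imposed on $f$ in Theorem~\ref{generalasymp}; hence that theorem applies verbatim to $F[\rho_L]=\Delta E_x^{\rm GGA}[\rho_L]=\int_{Q^D_L}g^{\rm GGA}(\rho_L,|\nabla\rho_L|)\,\mathrm{d}r$. It then remains only to evaluate the constants $f(\bar\rho,0)$, $c_{FM}(\bar\rho)$, and $c_{BL}(\bar\rho)$ at $f=g^{\rm GGA}$. The bulk coefficient vanishes, $f(\bar\rho,0)=g^{\rm GGA}(\bar\rho,0)=0$, which already yields $\Delta E_x^{\rm GGA}[\rho_L]=\mathcal{O}(L^{-34/23+\epsilon})$ in the periodic case. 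The Fermi-momentum correction also vanishes: since the map $a\mapsto g^{\rm GGA}(a,0)$ is identically zero on $[0,\infty)$ and is differentiable at $\bar\rho>0$ by the $C^1$ hypothesis, one has $\partial_1 g^{\rm GGA}(\bar\rho,0)=0$, hence $c_{FM}(\bar\rho)=\frac{3\pi\bar\rho}{8p_F}\partial_1 g^{\rm GGA}(\bar\rho,0)=0$. Finally, in the boundary-layer coefficient the subtracted term $f(\bar\rho,0)$ is again $0$, so $c_{BL}^{\rm Dir}(\bar\rho)$ reduces to $\frac{1}{2p_F}\int_0^\infty g^{\rm GGA}(\bar\rho(1-h(s)),2\bar\rho p_F|\dot h(s)|)\,\mathrm{d}s=c_{\rm GGA}^{\rm Dir}(\bar\rho)$, and analogously for Neumann with $1-h$ replaced by $1+h$. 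Substituting these three simplifications into the conclusion of Theorem~\ref{generalasymp} produces precisely the three claimed expansions.

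There is essentially no obstacle: all of the analytic work — the continuum approximation supplied by Theorem~\ref{continuumapproxthm}, the control of open shells via Lemma~\ref{openshellcontrol}, and the delicate $\delta\to0$ argument accommodating a possible derivative discontinuity of the integrand at $\rho=0$ — has already been carried out inside the proof of Theorem~\ref{generalasymp}. The only two points requiring a line of justification are (i) the implication $g^{\rm GGA}(\cdot,0)\equiv 0\ \Rightarrow\ \partial_1 g^{\rm GGA}(\bar\rho,0)=0$ invoked above, and (ii) the finiteness of the constants $c_{\rm GGA}^{\rm Dir/Neu}(\bar\rho)$, which follows — as already established within Theorem~\ref{generalasymp} — from the decay $h(s),\dot h(s)=O(s^{-2})$ as $s\to\infty$ together with the local Lipschitz bound on $g^{\rm GGA}$ near $(\bar\rho,0)$, so that the integrand is $O(s^{-2})$ and hence integrable.
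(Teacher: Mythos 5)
Your proposal is correct and is essentially identical to the paper's own proof: both apply Theorem~\ref{generalasymp} with $f=g^{\rm GGA}$ and observe that $g^{\rm GGA}(a,0)\equiv 0$ forces both the bulk term and $\partial_1 g^{\rm GGA}(\bar\rho,0)$ (hence $c_{FM}$) to vanish, leaving only the boundary-layer constant. Your additional remarks on the finiteness of $c_{\rm GGA}^{\rm Dir/Neu}(\bar\rho)$ are a harmless elaboration of what is already contained in Theorem~\ref{generalasymp}.
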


\begin{proof}[Proof of Corollary \ref{asymplda} and \ref{asympgga}] In the first corollary, just apply Theorem \ref{generalasymp} to $f(\rho) = c_x \rho^{4/3}$. For the second one, set $f(\rho,|\nabla \rho|) = g^{GGA}(\rho,|\nabla \rho|)$ and note that $f(a,0) = 0$ for all $a \geq 0$ implies that $\partial_1 f(a,0) = 0$ for all $a >0$. 
\end{proof}

\subsection{Exact exchange}
Now we turn to the asymptotic expansion of the exact exchange. The analysis here is different as we are dealing with a functional of the density matrix, not the density.

\begin{theorem}[Asymptotics of exact exchange]
\label{exactexchangethm} Let $\bar{\rho}= \frac{N}{|Q^D|L^3} = constant$, $\Psi_L$ be a determinantal ground state of the free $N$-electron gas in $Q_L^D$. Then, we have
\begin{align*} E_x[\Psi_{L}] = \begin{dcases} -c_x\bar{\rho}^{\frac{4}{3}}|Q^D|L^3 + c_{FS} \bar{\rho} |\partial Q^D|L^2 + \mathcal{O}(L^{\frac{45}{23}+\epsilon}), &\mbox{for periodic BCs,} \\
-c_x \bar{\rho}^{\frac{4}{3}}|Q^D| L^3 - (c_{BL}^{\textnormal{Dir}}+c_{FM}-c_{FS})\bar{\rho}|\partial Q^D| L^2 + \mathcal{O}(L^{\frac{45}{23}+\epsilon}), &\mbox{for Dirichlet BCs,} \\
-c_x \bar{\rho}^{\frac{4}{3}}|Q^D| L^3 -(c_{BL}^{\textnormal{Neu}}-c_{FM}-c_{FS})\bar{\rho}|\partial Q^D| L^2 + \mathcal{O}(L^{\frac{45}{23}+\epsilon}) , &\mbox{for Neumann BCs,} \end{dcases} \end{align*}
where the finite-size, Fermi momentum and boundary layer corrections are
\begin{align*} 
    c_{FS} = \frac{1}{8} , \quad c_{FM} = \frac{3}{8} ,\quad c_{BL}^{\textnormal{Dir}} = -\frac{\log 2}{4}\mbox{, and} \quad c_{BL}^{\textnormal{Neu}} = \frac{3\log 2}{4} .
\end{align*}
\end{theorem}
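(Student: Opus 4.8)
The plan is to work from the density-matrix form \eqref{exchDM} of the exact exchange, $E_x[\Psi_L]=-\tfrac14\int_{Q^D_L\times Q^D_L}|\gamma_L(r,\tilde r)|^2|r-\tilde r|^{-1}\,dr\,d\tilde r$, and to replace $\gamma_L$ by the continuum density matrix $\gamma^{\textnormal{ctm}}_L$ of \eqref{ctmdensitymatrix}. Using $|\gamma_L|^2-|\gamma^{\textnormal{ctm}}_L|^2=(\gamma_L-\gamma^{\textnormal{ctm}}_L)\overline{(\gamma_L+\gamma^{\textnormal{ctm}}_L)}$ together with the pointwise bound $|\gamma_L-\gamma^{\textnormal{ctm}}_L|\lesssim_\epsilon L^{-\frac{35}{23}+\epsilon}$ and the decay bounds \eqref{densitymatrixdecay}--\eqref{densitymatrixdecayper} from Theorem~\ref{continuumapproxthm}, the error of this substitution is at most $L^{-\frac{35}{23}+\epsilon}\iint\frac{(1+|r-\tilde r|)^{-2}}{|r-\tilde r|}\lesssim L^{3-\frac{35}{23}+\epsilon}$ plus $L^{-\frac{70}{23}+\epsilon}\iint\frac{1}{|r-\tilde r|}\lesssim L^{5-\frac{70}{23}+\epsilon}=L^{\frac{45}{23}+\epsilon}$; the second term dominates and pins down the announced remainder. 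It then remains to expand $-\tfrac14\iint|\gamma^{\textnormal{ctm}}_L|^2|r-\tilde r|^{-1}$ to order $O(L^{1+\epsilon})$.

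Plugging \eqref{ctmdensitymatrix} in, $|\gamma^{\textnormal{ctm}}_L|^2=\bar\rho^2(p_L/p_F)^6\sum_{\sigma,\sigma'}(\det\sigma\det\sigma'\text{ or }1)\,h(p_L|r-\sigma\tilde r|_{D,mL})h(p_L|r-\sigma'\tilde r|_{D,mL})$ (only $\sigma=\sigma'=\mathrm{id}$, $m=1$, for periodic; sum over $G$, $m=2$, for Dirichlet/Neumann). Since $h$ decays like $|\cdot|^{-2}$, the product $h(p_L|r-\sigma\tilde r|)h(p_L|r-\sigma'\tilde r|)$ is concentrated on $\{r\approx\sigma\tilde r\approx\sigma'\tilde r\}$, forcing $(\mathrm{id}-\sigma)\tilde r\approx0$ and $(\sigma-\sigma')\tilde r\approx0$; hence every pair $(\sigma,\sigma')$ not both contained in $\{\mathrm{id},\sigma_j\}$ for a single coordinate $j$ is supported within $O(1)$ of an edge or corner of the box and contributes only $O(L^{1+\epsilon})$. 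So, up to $O(L^{1+\epsilon})$, the relevant structure is: the diagonal $\sigma=\sigma'=\mathrm{id}$ term $h(p_L|r-\tilde r|)^2$ for all three boundary conditions, plus, for Dirichlet/Neumann, for each face $\{r_j\approx0\}$ and its mirror the reassembled half-space integrand $\bar\rho^2(p_L/p_F)^6\big(h(p_L|r-\tilde r|)\mp h(p_L|r-\sigma_j\tilde r|)\big)^2$ (sign $-$ for Dirichlet, $+$ for Neumann).

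For the diagonal term --- which already equals $\bar\rho^2(p_L/p_F)^6h(p_L|r-\tilde r|)^2$ on $Q^D_L\times Q^D_L$ for Dirichlet/Neumann (as $|r-\tilde r|_{D,2L}=|r-\tilde r|$ there), and reduces to the same in the periodic case after replacing the lattice-periodic $h(p_L|r-\tilde r|_{D,L})^2$ by $\sum_{n\in D(L\Z)^3}h(p_L|r-\tilde r-n|)^2$ at cost $O(L^{1+\epsilon})$ and checking the images $n\neq0$ add only $O(L^{1+\epsilon})$ --- I would set $w=r-\tilde r$ and split $\int_{Q^D_L-\tilde r}\tfrac{h(p_L|w|)^2}{|w|}\,dw=\tfrac{C_h}{p_L^2}-E_L(\tilde r)$ with $C_h=\int_{\R^3}\tfrac{h(|u|)^2}{|u|}\,du=9\pi$ and $E_L(\tilde r)=\int_{\R^3\setminus(Q^D_L-\tilde r)}\tfrac{h(p_L|w|)^2}{|w|}\,dw$. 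The bulk piece $|Q^D|L^3\tfrac{C_h}{p_L^2}$ gives $-c_x\bar\rho^{4/3}|Q^D|L^3$ (as in \cite{Friesecke1997}) and, via the two-term Fermi-momentum asymptotics $p_L=p_F\pm\tfrac{\pi|\partial Q^D|}{8|Q^D|}L^{-1}+O(L^{-\frac{35}{23}+\epsilon})$ of Lemma~\ref{latticeeffect}, the Fermi-momentum correction $\mp c_{FM}\bar\rho|\partial Q^D|L^2$, $c_{FM}=\tfrac38$. For the tail, $E_L(\tilde r)=O(L^{-2})$ in the bulk, while near the face $\{r_j\approx0\}$ one has $E_L(\tilde r)=\int_{w_j>\mathrm{dist}(\tilde r,\partial Q^D_L)}\tfrac{h(p_L|w|)^2}{|w|}\,dw+(\text{far-face and edge terms})$; since $\int_0^\infty\!\big(\int_{w_j>\delta}\tfrac{h(p_L|w|)^2}{|w|}\,dw\big)\,d\delta=\tfrac12\int_{\R^3}\tfrac{|w_j|h(p_L|w|)^2}{|w|}\,dw=\tfrac{C_h'}{2p_L^3}$ with $C_h'=\int_{\R^3}\tfrac{|u_1|h(|u|)^2}{|u|}\,du=3\pi^2$, this yields $\int_{Q^D_L}E_L=\tfrac{C_h'}{2p_L^3}|\partial Q^D|L^2+O(L^{1+\epsilon})$ and hence the electrostatic finite-size term $+c_{FS}\bar\rho|\partial Q^D|L^2$, $c_{FS}=\tfrac18$. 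This completes the periodic case.

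For the Dirichlet/Neumann boundary layer I would, near each of the six faces, scale out $p_F$ and the $L^2$ of transverse area, replace $p_L$ by $p_F$ (error $O(L)$), and evaluate $-\tfrac{\bar\rho^2}{4}\int\tfrac{(h(p_F|r-\tilde r|)\mp h(p_F|r-\sigma_j\tilde r|))^2-h(p_F|r-\tilde r|)^2}{|r-\tilde r|}$ over $\{r_j,\tilde r_j>0\}$ with transverse variable in $\R^2$; by absolute convergence (decay of $h$, as in \eqref{eq:JKest}) this collapses to an explicit one-dimensional integral, giving $\mp c_{BL}\bar\rho|\partial Q^D|L^2$ with $c_{BL}^{\textnormal{Dir}}=-\tfrac{\log2}{4}$ and $c_{BL}^{\textnormal{Neu}}=\tfrac{3\log2}{4}$; collecting bulk, Fermi-momentum, finite-size and boundary-layer terms with the sign conventions of the statement yields the three formulas. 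The main obstacle is the electrostatic finite-size term $c_{FS}$: it is the only contribution that persists even for periodic boundary conditions and uniform density, it is not governed by any local feature of the density matrix but by the mismatch between the true Euclidean Coulomb kernel and the torus geometry built into $\gamma^{\textnormal{ctm}}_L$, and extracting it rigorously forces one to control the only quadratically convergent quantity $\int_{Q^D_L}E_L(\tilde r)\,d\tilde r$ --- and, in the periodic case, a lattice sum over images --- and to prove that only the ``self'' image enters at order $L^2$.
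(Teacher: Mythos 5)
Your proposal is correct and follows essentially the same route as the paper: substitute the continuum density matrix with the $L^{45/23+\epsilon}$ error bound, discard all reflection pairs except $(\mathrm{id},\mathrm{id})$, $(\mathrm{id},\sigma_j)$, $(\sigma_j,\sigma_j)$, extract the bulk and finite-size terms from the diagonal piece (your layer-cake treatment of $E_L(\tilde r)$ is just a rewriting of the paper's expansion of $\prod_j(d_jL-|w_j|)$ after the substitution $w=r-\tilde r$), get $c_{FM}$ from Lemma \ref{latticeeffect}, and reduce the boundary layer to half-space integrals per face. The only thing you assert rather than derive is the value of the cross-term integral $\int h(p_F|r-\tilde r|)h(p_F|r-\sigma_j\tilde r|)/|r-\tilde r|$ (the paper's $I_2=\tfrac{3\pi\log 2}{2}$, which requires a nontrivial change of variables plus the Plancherel identity $\int_0^\infty h(r)h(ar)r^2\,\mathrm{d}r=\tfrac{3\pi}{2\max\{a,1\}^3}$); that is a computational detail, not a gap in the approach.
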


\begin{proof} As before, we prove the Dirichlet case in detail and outline the proof for the other two boundary conditions at the end. 

The first step is again to justify the use of the continuum density matrices \eqref{densitymatrixest}. For this, we use the identity \[ |\gamma_L(r,\tilde{r})|^2-|\gamma^\textnormal{ctm}_L(r,\tilde{r})|^2 = Re \{\overline{(\gamma_L(r,\tilde{r})-\gamma^\textnormal{ctm}_L(r,\tilde{r}))}(\gamma_N(r,\tilde{r})+\gamma^\textnormal{ctm}_N(r,\tilde{r}))\} \]
together with estimates \eqref{densitymatrixest0} and \eqref{densitymatrixdecay} from Theorem \ref{continuumapproxthm} to obtain
\begin{align*} \biggr| \int_{Q_L^D\times Q_L^D} \frac{|\gamma_L(r,\tilde{r})|^2-|\gamma^\textnormal{ctm}_L(r,\tilde{r})|^2}{|r-\tilde{r}|}\biggl| \mathrm{d}r\mathrm{d}\tilde{r} &\lesssim   L^{-\frac{70}{23}+\epsilon} \int_{Q_L^D\times Q_L^D}|r-\tilde{r}|^{-1}\mathrm{d}r\mathrm{d}\tilde{r} \\
& + L^{-\frac{35}{23}+\epsilon} \int_{Q_L^D\times Q_L^D} \frac{(1+|r-\tilde{r}|_{D,2L})^{-2}}{|r-\tilde{r}|}\mathrm{d}r\mathrm{d}\tilde{r} \\
&\lesssim L^{\frac{45}{23}+\epsilon} + L^{\frac{34}{23}+\epsilon} \log L . \end{align*}
Therefore, by the continuum formulas we have
\begin{align} E_x[\rho_L] \approx -\sum_{\tau,\sigma \in G} \det(\sigma \tau) \frac{\bar{\rho}^2 p_L^6}{4p_F^6} \underbrace{\int_{Q_L^D\times Q_L^D} \frac{h(p_L|r-\sigma \tilde{r}|_{D,2L})h(p_L|r-\tau \tilde{r}|_{D,2L})}{|r-\tilde{r}|}\mathrm{d}r\mathrm{d}\tilde{r}}_{\coloneqq J_{\sigma,\tau}(L)}  , \label{energyest1} \end{align}
where $\approx$ will be used throughout this proof to denote equality up to errors included in the remainder of Theorem \ref{exactexchangethm}. Now, to estimate the terms $J_{\sigma,\tau}$ we start with the following lemma.
\begin{lemma} If $trace(\sigma) \leq -1$ or $trace(\tau) \leq -1$, then $J_{\sigma,\tau}(L) \lesssim L (\log L)^2$. Furthermore, if $trace(\sigma) = 1 = trace(\tau)$ and $\sigma \neq \tau$, we have $J_{\sigma,\tau}(L) \lesssim L (\log L)^4$. \end{lemma}
\begin{proof}
First observe that since $r_j,\tilde{r}_j \in [0,d_jL]$, one has
\begin{align*} |r_j+\tilde{r}_j \mod 2d_j L| = \min\{r_j+\tilde{r}_j,2d_jL-r_j-\tilde{r}_j\} \geq |r_j-\tilde{r}_j| .\end{align*}
In particular, by the decay of $h$, we see that
\begin{align} |h(p_L|r-\sigma \tilde{r}|_{D,2L})| \lesssim \frac{1}{(1+|r-\sigma \tilde{r}|_{D,2L})^2} \lesssim \frac{1}{(1+|r-\tilde{r}|)^2}, \label{lemmaest1} \end{align}
for any $\sigma \in G$. On the other hand, if $trace(\sigma) \leq -1$, there exists $j\in \{1,2,3\}$ such that 
\begin{align*} |r-\sigma \tilde{r}|\geq \min_{\substack{p\in \pi_j(D(2L\Z)^3) \\  p \in  \pi_j(Q^D_{2L})}}|\pi_j(r+\tilde{r})-p| ,\end{align*}
and therefore,
\begin{align} |h(p_L|r-\sigma \tilde{r}|_{D,2L})| \lesssim\sum_{\substack{p\in \pi_j(D(2L\Z)^3) \\  p \in  \pi_j(Q^D_{2L})}} \frac{1}{(1+|\pi_j(r+\tilde{r})-p|)^2} .\label{lemmaest2} \end{align}
As a result, assuming that $trace(\sigma) \leq -1$, we can see from \eqref{lemmaest1} and \eqref{lemmaest2} that
\begin{align*} J_{\sigma,\tau}(L) \lesssim \int_{Q^D_L\times Q^D_L}\sum_{\substack{p\in \pi_j(D(2L\Z)^3) \\  p \in  \pi_j(Q^D_{2L})}} \frac{1}{(1+|\pi_j(r+\tilde{r})-p|)^2} \frac{1}{(1+|r-\tilde{r}|)^2}\frac{1}{|r-\tilde{r}|} \mathrm{d}r\mathrm{d}\tilde{r} \lesssim L (\log L)^2 . \end{align*}  
For the terms $J_{\sigma,\tau}$ with $trace(\sigma) =1 = trace(\tau)$ and $\sigma \neq \tau$, note that there exists $j\neq \ell \in \{1,2,3\}$ such that \begin{align*} |h(p_L |r-\sigma \tilde{r}|)|\lesssim \sum_{\substack{p_j \in \{0,2d_jL\}}}\frac{1}{(1+|r_j+\tilde{r}_j-p_j|)(1+|\pi_j(r-\tilde{r}|)} , \\
|h(p_L|r-\tau \tilde{r}|)| \lesssim \sum_{\substack{p_\ell \in \{0,2d_\ell L\}}}\frac{1}{(1+|r_\ell+\tilde{r}_\ell-p_\ell|)(1+|\pi_\ell(r-\tilde{r})|)} .\end{align*} 
 The lemma thus follows by integrating the product of the estimates above against the Coulomb potential in the box $Q_L^D\times Q_L^D$. \end{proof}
 
  From the lemma above, it is enough to study $J_{\textnormal{id},\textnormal{id}}(L)$, $J_{\sigma,\textnormal{id}}(L)$ and $J_{\sigma,\sigma}(L)$ where $trace(\sigma) = 1$.
 
 We start with $J_{\textnormal{id},\textnormal{id}}(L)$. In this case, we first note that $|r-\tilde{r}|_{D,2L} = |r-\tilde{r}|$ for any $r,\tilde{r}\in Q^D_L$. In particular, by the change of variables $(w(r,\tilde{r}),\tilde{w}(r,\tilde{r})) = (r-\tilde{r},\tilde{r})$, we obtain
\begin{align*} J_{\textnormal{id},\textnormal{id}}(L) &= |Q^D|L^3 \int_{Q_L^D-Q_L^D} \frac{h(p_L|w|)^2}{|w|} \mathrm{d} w -L^2 \sum_{j=1}^3 \biggr(\prod_{\ell \neq j} d_\ell\biggr) \int_{Q^D_L-Q_L^D} \frac{h(p_L|w|)^2|w_j|}{|w|}\mathrm{d} w \\
&+ \int_{Q^D_L-Q^D_L}\frac{h(p_L|w|)^2}{|w|} \biggr(L\sum_{j=1}^3 d_j  \prod_{\ell \neq j} |w_\ell| - \prod_{j=1}^3 |w_j|\biggr)\mathrm{d} w.
 \end{align*}
By the decay of $h$ and a simple estimate, up to an error $\lesssim L\log L$ the integral on the first two terms can be taken over the whole of $\R^3$, and the third integral can be neglected. In addition, by scaling out $p_L$, using spherical coordinates, and recalling that $|\partial Q^D| = 2(d_1d_2+d_1d_3+d_2d_3)$, we have
\begin{align} J_{\textnormal{id},\textnormal{id}}(L)  \approx \frac{4\pi|Q^D|L^3}{p_L^2} \underbrace{\int_{0}^\infty h(r)^2 r \mathrm{d} r}_{\coloneqq I_0} - \frac{\pi|\partial Q^D| L^2}{	p_L^3}  \underbrace{\int_{0}^\infty h(r)^2r^2\mathrm{d} r}_{\coloneqq I_1} . \label{J00est} \end{align}

Next, for the term $J_{\sigma,\sigma}$ we assume without loss of generality that $\sigma_{11} = -1$. Then, by invariance of the integrand under the reflections $r_i-\tilde{r}_i \mapsto \tilde{r}_i-r_i$, the change of variables $(w(r,\tilde{r}),\tilde{w}(r,\tilde{r})) = (r-\tilde{r},r_1+\tilde{r}_1,\tilde{r}_2,\tilde{r}_3)$ (notice $\frac{\mathrm{d}w\,\mathrm{d}\tilde{w}}{2} = \mathrm{d}r\,\mathrm{d}\tilde{r}$), and the decay of $h$, we have
\begin{align*} J_{\sigma,\sigma}(L) &= \int_{Q^D_L-Q^D_L}\int_{|w_1|}^{2d_1L-|w_1|} \frac{h(p_L|(\tilde{w}_1 \mod 2d_1L,\pi_1 w)|)^2\prod_{\ell \neq 1}(d_\ell L - |w_\ell|)}{|w|}\frac{\mathrm{d}\tilde{w}_1 \mathrm{d}w}{2}  \nonumber \\
&= 8d_2d_3 L^2 \int_{Q^D_L}\int_{w_1}^{d_1L} \frac{h(p_L|(\tilde{w}_1,\pi_1 w)|)^2}{|w|}\mathrm{d}\tilde{w}_1 \mathrm{d} w  + \mathcal{O}(L\log L) \nonumber \\
&= 8d_2d_3 L^2 \int_{Q^D_L}h(p_L|(\tilde{w}_1,\pi_1 w)|)^2\biggr(\int_{0}^{\tilde{w}_1}\frac{1}{|w|} \mathrm{d}w_1 \biggr)\mathrm{d}\tilde{w}_1 \mathrm{d}\pi_1 w + \mathcal{O}(L\log L) \end{align*}
(where we inverted the order of integration between $\tilde{w}_1$ and $w_1$ in the last step). In addition, since $\int_0^{|\tilde{w}_1} \frac{1}{|w|} \mathrm{d}w_1 = \frac{1}{2}\log\bigr(\frac{\tilde{w}_1+|(\tilde{w}_1,\pi_1 w)|}{|(\tilde{w}_1,\pi_1 w)|-\tilde{w}_1}\bigr) \lesssim \frac{|\tilde{w}_1|}{|\pi_1 w|}$, by the decay of $h$ one can see that, up to an error $\lesssim L$, we can replace the domain of the outer integration by $\R_+^3$. Hence, by scaling out $p_L$, changing to spherical coordinates, and recalling the definition of $I_1$ in \eqref{J00est}, we find that
\begin{align} J_{\sigma,\sigma}(L) &\approx \frac{2\pi d_2d_3}{p_L^3} L^2 \int_0^\infty h(r)^2 r^2\mathrm{d} r  \underbrace{\int_0^{\frac{\pi}{2}} \log\biggr(\frac{1+\cos \theta}{1-\cos \theta}\biggr)\sin \theta \mathrm{d}\theta}_{\mathclap{=(\cos \theta-1)\log(1-\cos\theta) -(1+\cos \theta)\log(1+\cos \theta)}}  = \frac{4\pi \log(2)}{p_L^3} d_2d_3 I_1 L^2 . \label{J11est} \end{align}

At last, for $J_{\textnormal{id},\sigma}$ (again assuming that $\sigma_{11}=-1$ without loss of generality), by the change of variables $(w(r,\tilde{r}),\tilde{w}(r,\tilde{r})) = (r-\tilde{r},r_1+\tilde{r}_1,\tilde{r}_2,\tilde{r}_3)$ and the same arguments as before we conclude that
\begin{align*} J_{\textnormal{id},\sigma} &= \int_{Q^D_L-Q^D_L} \int_{|w_1|}^{d_1L}\frac{h(p_L|(\tilde{w}_1,\pi_1 w)|)h(p_L|w|)\prod_{\ell \neq 1}(d_\ell L - |w_\ell|)}{|w|}\mathrm{d}\tilde{w}_1\mathrm{d} w  \\
&= 8d_2d_3 L^2 \int_{\R_+^3}\int_{|w_1|}^{\infty}\frac{h(p_L|(\tilde{w}_1,\pi_1 w)|)h(p_L|w|)}{|w|}\mathrm{d}\tilde{w}_1\mathrm{d} w + \mathcal{O}(L\log L) .
\end{align*}
Hence, scaling out $p_L$ and using spherical coordinates for $w$, we have
\begin{align} J_{\textnormal{id},\sigma}(L) \approx \frac{4\pi d_2d_3}{p_L^3} L^2 \underbrace{\int_0^\infty \int_0^{\frac{\pi}{2}} \int_{r\cos \theta}^\infty h(\sqrt{(r\sin\theta)^2+\tilde{w}_1^2})h(r) r \sin \theta \mathrm{d}\tilde{w}_1 \mathrm{d}\theta \mathrm{d} r}_{\coloneqq I_2}. \label{J10est} \end{align}
Thus by plugging \eqref{J00est},\eqref{J11est} and \eqref{J10est} into \eqref{energyest1}, we have
\begin{align*} E_x[\rho_L] &\approx - \frac{\bar{\rho}^2 p_L^6}{ 4 p_F^6} \biggr( J_{\textnormal{id},\textnormal{id}}(L) + \sum_{j=1}^3 J_{\sigma_j,\sigma_j}(L) - \sum_{j=1}^3 \bigr(J_{\textnormal{id},\sigma_j}(L) + J_{\sigma_j,\textnormal{id}}(L)\bigr) \biggr) \\
&\approx -\frac{\pi \bar{\rho}^2 p_L^4}{p_F^6}I_0|Q^D| L^3 - \biggr(-\frac{\pi\bar{\rho}^2 p_L^3}{4 p_F^6}I_1 +\frac{\pi \log 2\bar{\rho}^2 p_L^3}{2p_F^6}I_1- \frac{\pi \bar{\rho}^2p_L^3}{p_F^6}I_2\biggr)|\partial Q^D| L^2  .\end{align*}
As a result, using Lemma \ref{latticeeffect} to replace $p_L$ by $p_F$ plus correction, we conclude that
\begin{align*} E_x[\rho_L] \approx  -\underbrace{\frac{\pi}{(3\pi^2)^{\frac{2}{3}}} I_0}_{c_x} \bar{\rho}^{\frac{4}{3}}|Q^D|L^3 - \biggr(\underbrace{\frac{1}{6}I_0}_{c_{FM}} - \underbrace{\frac{1}{12 \pi} I_1}_{c_{FS}} + \underbrace{\frac{\log 2}{6\pi}I_1- \frac{1}{3\pi}I_2}_{c_{BL}}\biggr)\bar{\rho} |\partial Q^D| L^2 . \end{align*}
The proof is completed by using the values of the integrals computed in the next lemma.
\begin{lemma}\label{constlemma} $I_0 = \frac{9}{4}$, $I_1 = \frac{3\pi}{2}$ and $I_2 = \frac{3\pi \log 2}{2}$.
\end{lemma}
\begin{proof} For $I_0$, we can use the identity $\int \frac{(\sin s - s \cos s)^2}{s^5} \mathrm{d} s = \frac{-2s^2+2 s \sin(2s) + \cos(2s)-1}{s^4}$ (see e.g., \cite[Lemma 6.1]{Friesecke1997} for an elegant evaluation) to obtain
\begin{align*} I_1 = \int_0^\infty 9 \frac{(\sin s - s \cos s)^2}{s^5} \mathrm{d}s = -9 \lim_{s\ra 0} \frac{-2s^2+2 s \sin(2s) + \cos(2s)-1}{s^4} = \frac{9}{4} , \end{align*}
where the limit can be computed by L'Hôpital's rule. Next, note that for any $a>0$, by Plancherel's theorem, we have
\begin{align} \int_0^\infty h(r)h(ar)\, r^2\mathrm{d}r &= \frac{1}{4\pi |B_1|^2} \int_{\R^3} \widehat{\chi}_{B_1}(k)\widehat{\chi}_{B_1}(ak) \mathrm{d} k \nonumber \\
&= \frac{(2\pi)^3}{4\pi |B_1|^2 a^3} \int_{\R^3} \chi_{B_1}(k) \chi_{B_1}\biggr(\frac{k}{a}\biggr) \mathrm{d}k \nonumber \\
 &= \frac{3\pi}{2 \max\{a,1\}^3} . \label{ballintegral} \end{align}
In particular, the value of $I_1$ follows by setting $a=1$. For $I_2$, first note that by using the inverse of polar coordinates $(r,\theta) = (\sqrt{x^2+y^2}, \arctan(y/x))$, one has
\begin{align*}
I_2 = \int_{\R_+^2} \int_{x}^\infty h\bigr(\sqrt{y^2+\tilde{w}_1^2}\bigr)h\bigr(\sqrt{x^2+y^2}\bigr)\frac{y}{\sqrt{x^2+y^2}} \mathrm{d}\tilde{w}_1\mathrm{d}x\mathrm{d}y .\end{align*}
Next, set $\mathcal{R} \coloneqq \{(x,y,\tilde{w}_1)\in \R^3 : x>0,y>0,\tilde{w}_1 >x\}$ and consider the change of coordinates 
\begin{align*} &T:\biggr\{(s,\phi,v)\in \R^3: s>0, 0<\phi<\frac{\pi}{2}, 0<v<\log \biggr(\frac{\sin\phi+1}{\cos\phi}\biggr)\biggr\} \ra \mathcal{R} \\
&(s,\phi,v) \mapsto (x,y,\tilde{w}_1) = T(s,\phi,v) =  (s \cos \phi \sinh v, s\sqrt{\sin^2 \phi - \cos^2 \phi \sinh^2 v},s\cos \phi \cosh v), \end{align*}
for which
\begin{align*} \det D T(s,\phi,v) = s^2 \frac{\sin (\phi) \cos (\phi)}{\sqrt{\sin^2 \phi - \cos^2 \phi \sinh^2 v}} =  s^2 \cos \phi\frac{\sqrt{x^2+y^2}}{y} .\end{align*}
Then, by \eqref{ballintegral} and the substituion $\tau=\sin \phi$, we conclude that 
\begin{align*} 
    I_2 &= \int_0^{\frac{\pi}{2}} \cos \phi \, \log\biggr(\frac{\sin \phi+1}{\cos \phi}\biggr)\underbrace{\int_{0}^\infty h(s)h(s\sin \phi)s^2 \mathrm{d} s}_{= \frac{3\pi}{2}}\mathrm{d}\phi \\
    &=  \frac{3\pi}{2}\int_0^1 \log \biggr(\frac{1+\tau}{\sqrt{1-\tau^2}} \biggr)\mathrm{d} \tau = \frac{3\pi}{2}\int_0^1 \log \biggr(\frac{\sqrt{1+\tau}}{\sqrt{1-\tau}} \biggr)\mathrm{d} \tau = \frac{3\pi \log 2}{2}, 
\end{align*}
which completes the proof of Theorem \ref{exactexchangethm} for the Dirichlet case. For the Neumann case the same arguments work with the proper change of signs (due to the missing $\det \sigma$ in \eqref{densitymatrix}). \end{proof}

For the periodic case, we first note that by the decay of $h$,
\begin{align*}
    |h(p_L|r-r'|_{L,D}|)^2 - \sum_{\substack{p \in DL \Z^3 \\ |D^{-1} p|_{\max} \leq L}} h(p_L|r-r'-p|)^2 \lesssim L^{-4} .
\end{align*}
Moreover, one can now show that
\begin{align*}
    \int_{Q^D_L\times Q^D_L} \frac{h(p_L|r-\tilde{r}-p|)^2}{|r-\tilde{r}|} \mathrm{d}r\mathrm{d}\tilde{r}  \lesssim L \log L \quad \mbox{for any $p \in DL\Z^3 \setminus \{0\}$,}
\end{align*}
which implies that $E_x[\Psi_L] \approx -\frac{\bar{\rho}^2 p_L^6}{4p_F^6} J_{id,id}(L)$. The result now follows from Lemma~\ref{latticeeffect}, estimate~\eqref{J00est} and Lemma~\ref{constlemma}.
\end{proof}

\section{Kinetic energy } \label{sec:kinetic}
In this section we use Lemma \ref{central} to compute the asymptotic expansion of the kinetic energy. 
\begin{theorem}[Asymptotics of kinetic energy] \label{kineticthm} Let $D$ fixed, $\bar{\rho} = \frac{N}{|Q^D|L^3} = constant$ and $\Psi_{N,L}$ be any ground state of the free $N$-electron gas in the box $Q^D_L$ under our usual boundary conditions. Let $T$ be the kinetic energy functional defined in \eqref{kinen}. Then, we have 
\begin{align*}  T[\Psi_{N,L}] = \begin{dcases} 
c_{TF} \bar{\rho}^{5/3} |Q^D|L^3+ \mathcal{O}(L^{\frac{34}{23}+\epsilon})  , &\mbox{for periodic BCs,} \\ 
c_{TF} \bar{\rho}^{5/3} |Q^D|L^3 + c_{K} \bar{\rho}^{4/3} |\partial Q^D|L^2 + \mathcal{O}(L^{\frac{34}{23}+\epsilon}) , &\mbox{for Dirichlet BCs,} \\
c_{TF} \bar{\rho}^{5/3}|Q^D| L^3 - c_{K} \bar{\rho}^{4/3} |\partial Q^D| L^2 + \mathcal{O}(L^{\frac{34}{23}+\epsilon}), &\mbox{for Neumann BCs,} \end{dcases} \end{align*}
where $c_{TF} = \frac{3}{10}(3\pi^2)^{2/3}$ is the Thomas-Fermi constant and $c_K = \frac{3 \pi}{32}(3\pi^2)^{1/3}$. 
\end{theorem}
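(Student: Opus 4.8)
\emph{Overview and Step 1: reduction to a weighted lattice sum.} Unlike exact exchange, the kinetic energy of a ground state is just a sum of Laplace eigenvalues, so the whole computation reduces to one three-dimensional and one (lower-order) two-dimensional lattice-point estimate, the first of which is exactly what Lemma \ref{central} provides. Since $T$ is the operator being minimised, every ground state $\Psi_{N,L}$ (determinantal or not) has the \emph{same} kinetic energy, namely $\tfrac12$ times the sum of the $N$ smallest single-particle eigenvalues; the twofold spin multiplicity of each Laplace eigenvalue cancels the $\tfrac12$, leaving $\sum\lambda_k$ over the $N/2$ spatial orbitals of lowest energy. Grouping eigenvalues into shells, using $d(N)\lesssim_\epsilon N^{34/69+\epsilon}$ from \eqref{degeneracycontrol} together with $\lambda_k\sim R_N^2/L^2\sim 1$ for $|D^{-1}k|\sim R_N\sim L$, and recalling the eigenvalues in \eqref{eigenvectors}, one obtains in the Dirichlet case
\[
 T[\Psi_{N,L}] = \frac{\pi^2}{L^2}\sum_{k\in\N^3\cap B^D_{R_N}}|D^{-1}k|^2 + \mathcal{O}(L^{\frac{34}{23}+\epsilon}),
\]
and likewise with $\N^3\cap B^D_{R_N}$, $\pi^2$ replaced by $\Z^3\cap B^D_{R_N^{\textnormal{Per}}}$, $4\pi^2$ in the periodic case and by $\N_0^3\cap B^D_{R_N^{\textnormal{Neu}}}$, $\pi^2$ in the Neumann case.

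\emph{Step 2: from the octant to $\Z^3$, then Lemma \ref{central}.} Because $|D^{-1}k|^2$ and $B^D_R$ are invariant under the reflection group $G$, inclusion--exclusion over the coordinate hyperplanes gives
\[
 8\sum_{k\in\N^3\cap B^D_R}|D^{-1}k|^2 = \sum_{k\in\Z^3\cap B^D_R}|D^{-1}k|^2 - \sum_{j=1}^3\sum_{\substack{k\in\Z^3\cap B^D_R \\ k_j=0}}|D^{-1}k|^2 + \mathcal{O}(R^3),
\]
the remainder collecting the one-dimensional sums; for $\N_0^3$ the same identity holds but with the two-dimensional sums entering with the \emph{opposite} sign (the Euler--Maclaurin $\pm\tfrac12$ boundary term for sums over $n\ge1$ versus $n\ge0$). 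Applying Lemma \ref{central} with $z=0$ and $|\alpha|=2$ yields $\sum_{k\in\Z^3\cap B^D_R}|D^{-1}k|^2=\int_{B^D_R}|D^{-1}k|^2\,\mathrm{d}k+\mathcal{O}(R^{2+\frac{34}{23}+\epsilon})=\tfrac{4\pi}{5}|D|R^5+\mathcal{O}(R^{\frac{80}{23}+\epsilon})$, while the planar sums are evaluated by their integrals, $\sum_{j}\sum_{k\in\Z^3\cap B^D_R,\,k_j=0}|D^{-1}k|^2=\tfrac{\pi}{4}|\partial Q^D|R^4+\mathcal{O}(R^{3+\epsilon})$, using $\sum_j d_{j'}d_{j''}=\tfrac12|\partial Q^D|$. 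Since $R_N\sim L$, both remainders become $\mathcal{O}(L^{\frac{34}{23}+\epsilon})$ after the factor $\pi^2/L^2$.

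\emph{Step 3: inserting the Fermi momentum and collecting the constants.} With $p_{N,L}=\pi R_N/L$ for Dirichlet (resp.\ $2\pi R_N^{\textnormal{Per}}/L$ and $\pi R_N^{\textnormal{Neu}}/L$ otherwise), Lemma \ref{latticeeffect} gives $R_N=\tfrac{p_F}{\pi}L\pm\tfrac{|\partial Q^D|}{8|Q^D|}+\mathcal{O}(L^{-\frac{12}{23}+\epsilon})$, with $+$ for Dirichlet, $-$ for Neumann, and no $\mathcal{O}(1)$ term for periodic. Expanding $R_N^5$ and $R_N^4$ in powers of $L^{-1}$ and collecting powers of $L$: the cubic term assembles, using $|D|=|Q^D|$, $p_F=(3\pi^2\bar{\rho})^{1/3}$ and $\tfrac{(3\pi^2)^{5/3}}{10\pi^2}=\tfrac{3}{10}(3\pi^2)^{2/3}$, into $c_{TF}\bar{\rho}^{5/3}|Q^D|L^3$; the quadratic term receives one contribution $\pm\tfrac{p_F^4}{16\pi}|\partial Q^D|L^2$ from the $\mathcal{O}(L^{-1})$ shift of $p_{N,L}$ inside $R_N^5$ and one contribution $\mp\tfrac{p_F^4}{32\pi}|\partial Q^D|L^2$ from the planar lattice term, which combine into $\pm\tfrac{p_F^4}{32\pi}|\partial Q^D|L^2=\pm c_K\bar{\rho}^{4/3}|\partial Q^D|L^2$ via $\tfrac{(3\pi^2)^{4/3}}{32\pi}=\tfrac{3\pi}{32}(3\pi^2)^{1/3}$; for periodic both contributions are absent, so there is no surface term. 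This reproduces the three claimed expansions.

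\emph{Main obstacle and an alternative.} The only delicate point is Step 3: the surface coefficient $c_K$ is the difference of two comparable $L^2$-contributions (the boundary-lattice term and the $p_{N,L}$-shift term), so both must be kept to the right order; and, just as for exact exchange, reaching the stated remainder $\mathcal{O}(L^{\frac{34}{23}+\epsilon})$ rather than a cruder bound relies on the improved exponent of Lemma \ref{central}. One can also bypass the inclusion--exclusion by computing $T[\Psi_{N,L}]=\tfrac12\int_{Q^D_L}\nabla_r\cdot\nabla_{\tilde r}\gamma^{\textnormal{ctm}}_{N,L}(r,\tilde r)\big|_{\tilde r=r}\,\mathrm{d}r+\mathcal{O}(L^{\frac{34}{23}+\epsilon})$ from \eqref{ctmdensitymatrix} and Theorem \ref{continuumapproxthm} (derivative estimate with $|\alpha|=|\beta|=1$), using $h(t)=1-\tfrac{t^2}{10}+O(t^4)$ near $t=0$ for the bulk term and a boundary-layer integral of $h'$ and $h''$ for the surface term; both routes give the same constants, and the integrals of Lemma \ref{constlemma} are not needed.
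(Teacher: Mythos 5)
Your proposal is correct and follows essentially the same route as the paper's proof: reduce to the eigenvalue sum with open-shell control via $d(N)\lesssim N^{34/69+\epsilon}$, pass from the octant to $\Z^3$ by inclusion--exclusion, evaluate the full lattice sum with Lemma \ref{central} ($z=0$, $|\alpha|=2$) and the coordinate-hyperplane sums by their integrals, and finally expand $p_{N,L}$ via Lemma \ref{latticeeffect}; your bookkeeping of the two competing $L^2$-contributions (the $\pm p_F^4/(16\pi)$ Fermi-momentum shift versus the $\mp p_F^4/(32\pi)$ planar term) and the resulting constants $c_{TF}$, $c_K$ all check out. The sketched alternative via the continuum density matrix is a plausible variant but is not needed.
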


\begin{proof}
For the Dirichlet case, just note that
\begin{align*} T[\Psi_{N,L}] &= \inner{\Psi_{N,L},-\frac{\Delta}{2}\Psi_{N,L}} = \!\!\!\sum_{k \in \N^3 \cap B^D_{R_N}} \!\!\! \frac{\pi^2 |D^{-1}k|^2}{L^2} - (N_+-N) \frac{\pi^2 R_N^2}{L^2} \\ 
&= \frac{\pi^2}{L^2}\frac{1}{8}\biggr(\sum_{k \in \Z^3 \cap B^D_{R_N}}\!\!\! |D^{-1} k|^2 - \sum_{j=1}^3 \sum_{\substack{k\in \Z^3 \cap B^D_{R_N} \\ k_j = 0}}|D^{-1} k|^2\biggr) + \mathcal{O}(L^{\frac{34}{23}+\epsilon}) . \end{align*} Moreover, by a simple estimate\footnote{In fact, by adapting the proof of \cite{Friesecke1997}, one can get a remainder of order  $\mathcal{O}(R^{2+\frac{2}{3}})$ in estimate \eqref{variation}.} we find that
\begin{align} \sum_{\substack{k \in B^D_{R_N}\cap \Z^3 \\ k_j = 0}} |D^{-1}k|^2 = \int_{\R^2}\chi_{\pi_j(B^D_{R_N})}(k)\sum_{\ell \neq j} (d_{\ell}^{-1}k_{\ell})^2 \mathrm{d}k + \mathcal{O}(R_N^{3}) = \frac{\pi}{4} R_N^4 \prod_{\ell \neq j} d_\ell  + \mathcal{O}(R_N^3). \label{variation} \end{align}
Thus using estimate \eqref{variation} and Lemma \ref{central}, we conclude that
\begin{align*} T[\Psi_{N,L}] = \frac{1}{10 \pi^2} p_{N,L}^5 |Q^D| L^3 -\frac{1}{32 \pi}p_{N,L}^4|\partial Q^D| L^2 + \mathcal{O}(L^{\frac{34}{23}+\epsilon}). \end{align*}
The result now follows from the asymptotics of $p_L$ in Lemma \ref{latticeeffect}. The Neumann and periodic cases are entirely analogous. 
\end{proof}

\begin{remark*} Note that we do not assume $\Psi_{N,L}$ to be a determinantal ground state as the kinetic energy is simply the ground state energy of the Laplacian and therefore unique (even if the ground state is not). \end{remark*}

\section{Numerical results and final discussion} \label{sec:numerics}

We now compare our asymptotic results to numerical values of different exchange functionals for the free 
electron gas with zero boundary conditions, for up to $30~\! 000$ electrons.  
Our numerical computations were carried out in Matlab. All energy functionals other than  exact exchange were evaluated by direct numerical integration of the exact formulas given in section~\ref{sec:electrongasformulas}. For exact exchange, accurate direct numerical evaluation of the expression \eqref{exchDM}, \eqref{densitymatrix} is impossible, because of the high-dimensionality of the domain of integration (6D) and the ${1}/{|r-r'|}$ singularity of the integrand. We tackled these obstructions by reducing the problem to the numerical computation of a small ($O(N^{1/3})$) number of one-dimensional integrals of smooth functions (see appendix~\ref{appendix:numerics} for a detailed description). Moreover, we focus here on the case of a cubic box $[0,L]^3$ and $\bar{\rho}=1$. 

To begin with, in Figure \ref{F:firstorderconvergence} we have plotted the exact exchange energy per unit volume, together with the theoretical one-term (just $c_x$) and two-term asymptotics ($c_x + c_{x,2}\cdot 6 L^{-1}$) from Theorem \ref{mainthm}. For comparison we have also included the LDA exchange energy per unit volume.
\begin{figure}[ht!]
    \centering
    \includegraphics[width=0.7\textwidth]{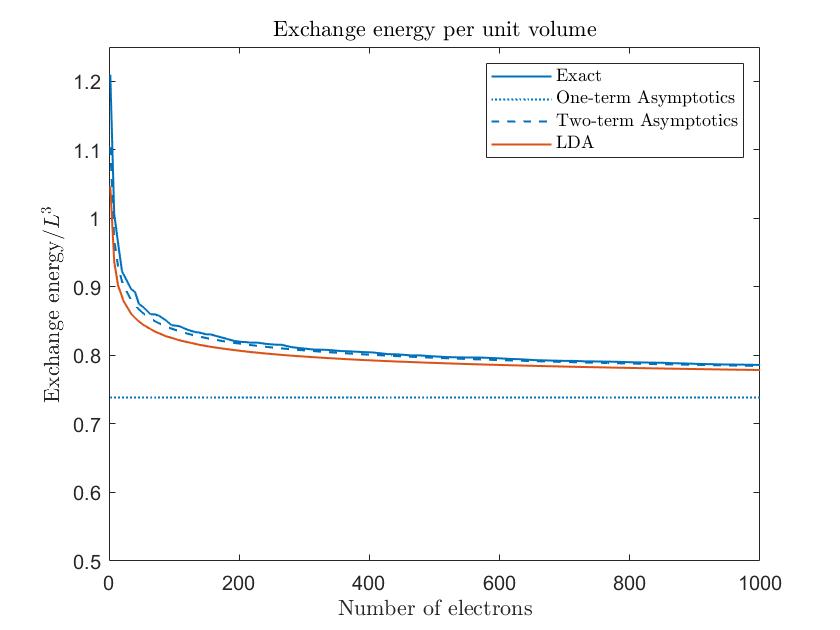}
    \caption{Exact exchange energy per unit volume of the free electron gas in a box with zero boundary conditions, compared with the LDA, one-term asymptotics (Dirac exchange constant) and two-term asymptotics (present work, Theorem \ref{mainthm}).}
    \label{F:firstorderconvergence}
\end{figure}
Note that even for small $N$,  the two-term asymptotics is a much better approximation than the one-term asymptotics, and also a better one than the more complicated LDA. Note that the latter requires integration of an inhomogeneous $N$-electron exchange energy density of the system.

Let us now look in more detail at the next-order contribution. Besides exact exchange and the LDA, we consider the widely used GGAs introduced by Becke in 1988 (B88) \cite{Becke1988} and  Perdew, Burke and Ernzerhof in 1996 (PBE) \cite{PBE1996}, and the modified version of PBE introduced by Perdew et al. in 2008 (PBEsol) \cite{PBEsol2008}. For convenience of the reader, we recall the expressions for these functionals here:
\begin{align} &g^{B88}(\rho,|\nabla \rho|) =  \frac{2^\frac13 \beta \bigr(|\nabla \rho|/\rho^\frac43\bigr)^2}{1 + 6 \beta 2^\frac13 \bigr(|\nabla \rho|/\rho^\frac43\bigr)  \sinh^{-1}\bigr(2^\frac13 |\nabla \rho|/\rho^\frac43\bigr)} \rho^\frac43 \label{b88} \\ \nonumber \\
&g^{PBE}(\rho,|\nabla \rho|) = c_x \frac{\mu \bigr(|\nabla \rho|/\rho^\frac43\bigr)^2}{4(3\pi^2)^\frac23 + \frac{\mu}{\kappa} \bigr(|\nabla \rho|/\rho^\frac43\bigr)^2} \rho^\frac43\label{pbe}
\end{align}
where $\sinh^{-1}$ is the inverse hyperbolic sine and the constants are $\beta = 0.0042$, $\kappa = 0.804$, and $\mu = 0.2195$. For PBEsol, one has the same expression as for PBE in  \eqref{pbe}, but with $\mu = 0.1235$.

\begin{figure}[ht!]
    \centering
    \includegraphics[scale=0.56]{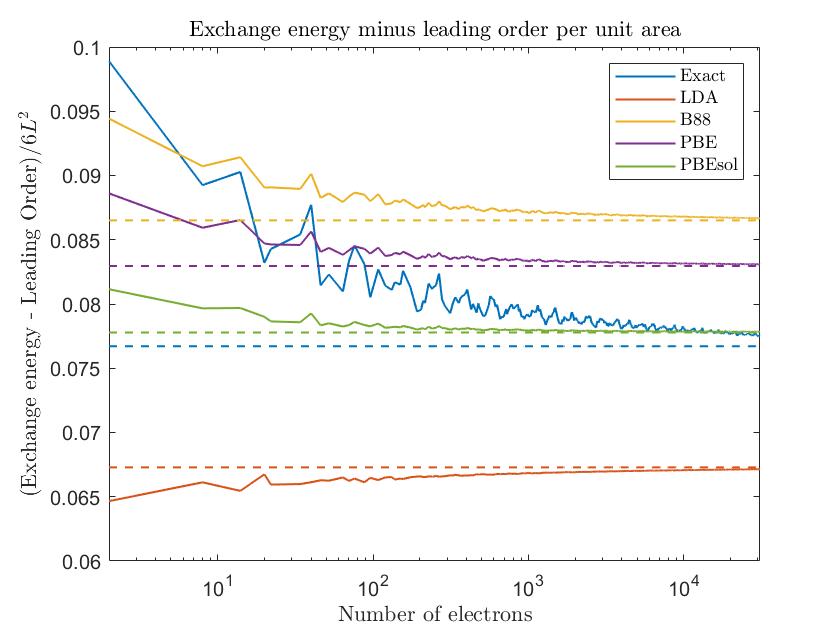}
    \caption{Exchange energy of the free electron gas in a box with zero boundary conditions minus leading order term, per unit boundary area, for various functionals. The number of
electrons per unit volume was normalized to 1. Solid lines: Numerical values. Dashed lines: asymptotic values (second order coefficients from Theorem \ref{mainthm}, present work).}
    \label{F:secondorderconvergence}
\end{figure}

Numerical evaluation of the exact one-dimensional integral expression for the GGA constant in Theorem~\ref{mainthm} gives the following values:
\begin{equation}  c^{\textnormal{Dir}}_{PBE} \approx 0.0157 ,\quad c^{\textnormal{Dir}}_{B88} \approx 0.0192, \quad c^{\textnormal{Dir}}_{PBEsol} \approx 0.0105.  \label{constantvalues} \end{equation}
To numerically verify the next-order asymptotics, we plotted in Figure \ref{F:secondorderconvergence} the graph of the energy functionals minus the leading order term divided by the boundary area $|\partial [0,L]^3|=6L^2$, together with the asymptotic values predicted by Theorem \ref{mainthm} and \eqref{constantvalues}. Precisely, since $\bar{\rho} =1$, the values for the dashed lines in Figure \ref{F:secondorderconvergence} are, respectively, $c^{\textnormal{Dir}}_{x,2}$, $c^{\textnormal{Dir}}_{\rm LDA}$, $c^{\textnormal{Dir}}_{\rm LDA}+c^{\textnormal{Dir}}_{B88}$, $c^{\textnormal{Dir}}_{\rm LDA}+c^{\textnormal{Dir}}_{PBE}$, $c_{\rm LDA}^{\textnormal{Dir}}+c^{\textnormal{Dir}}_{PBEsol}$.

Overall, there is a good match between numerics and asymptotics as $N$ gets large. More detailed observations are the following. 

\begin{itemize}
\item Asymptotically, the LDA underestimates the surface term by 12\%, whereas B88 and PBE overestimate it by 13\% respectively 8\%. 
\vspace*{-1mm}

\item Asymptotically, only PBEsol is much more accurate than the LDA, exhibiting an error of just  1.4\%. This should not come as a surprise to experts, as certain surface data (although not the ones considered here) entered into the choice of the parameters. Thus one may say that the present work provides an alternative theoretical justification of the PBEsol parameters. The price to pay is that PBEsol is the least accurate of the GGAs for very small $N$.
\vspace*{-1mm}

\item  B88 is the most accurate GGA for very small $N$. This is not unexpected given the fact that the parameter $\beta$ was fitted to data for the first few noble gas atoms. The price to pay is that B88 does not improve on the LDA beyond a few hundred electrons. 
\vspace*{-1mm}

\item In the regime of 20 to 100 electrons, which is certainly relevant in applications, particularly in chemistry, PBE fares best.
\vspace*{-1mm}

\item The slowest convergence to the asymptotic value, and the largest fluctuations, occur for exact exchange. Neither asymptotics up to second order nor any of the GGA functionals correctly reproduce these significant finite-$N$ fluctuations. 
Note that they would be captured exactly by the universal Hohenberg-Kohn functional.
\end{itemize}

In the context of our model system, the free electron gas in a box with zero boundary conditions,
our rigorous asymptotic results and the above observations illustrate
both the advances that have been made in the physics and chemistry literature in designing computationally simple exchange-correlation functionals, and the immense difficulties in improving on the current state of the art. For the latter, we hope that the new exact constraint on GGAs presented here (eq.~\eqref{exactconstr}) will in the future turn out to be useful.

\section*{Acknowledgments} We are grateful to Fernando Chamizo for a helpful email exchange regarding his results. We would also like to thank Mi-Song Dupuy for
useful comments on an early draft of this manuscript and helpful discussions regarding the numerical evaluation of exact exchange.

\appendix

\section{Numerical scheme for exact exchange} \label{appendix:numerics}

Here we detail our scheme for accurate and efficient evaluation of exact exchange for the free electron gas in a box. As already explained, the closed-form expression \eqref{exchDM}, \eqref{densitymatrix} cannot be evaluated directly by numerical integration, because of the high-dimensionality of the domain of integration (6D) and the Coulomb singularity of the integrand.

Recall that the eigenfunctions of the Laplacian with zero boundary conditions on $Q = [0,1]^3$ are given by the following expression:
\begin{align} \phi_k(x) =  \prod_{i=1}^3 \sqrt{2}\sin( \pi k_i x_i)  \; \; \; (k\in\N^3). \label{eigenfunction} \end{align}
Hence for closed shell $N$, the ground state $\Psi_{N,L}$ of the free $N$-electron gas in $Q_L=[0,L]^3$, $N/L^3=1$, is unique and the exact exchange energy is, by rescaling to the fixed domain $Q$,  
\begin{align} E_x[\Psi_{N,L}] &= -\frac{1}{L}\int_{Q\times Q} \frac{\left|\sum_{k \in \N^3, \, |k|\le R_N}\phi_k(x) \overline{\phi_k(y)}\right|^2}{|x-y|} dx \, dy \nonumber \\
 &=-\frac{1}{L}\sum_{\substack{k,\ell\in\N^3\\|k|,|\ell|\le R_N}} \underbrace{ \int_{Q\times Q} \frac{\prod\limits_{i=1}^3 4\sin(\pi k_i x_i) \sin(\pi \ell_i x_i)\sin(\pi k_i y_i) \sin(\pi \ell_i y_i)}{|x-y|} dx \, dy}_{=:I_{k,\ell}} 
 \label{A2} \end{align}
where $R_N$ is the Dirichlet Fermi radius defined in \eqref{fermiradiidef}. 

The starting point of our numerical scheme is a simple calculation which reduces the above six-dimensional integral to the three-dimensional integral of a separable function times the Coulomb potential over a finite region. For periodic boundary conditions such a reduction is trivial because the system is translation invariant, which implies that the exchange integrand depends only on the relative coordinate $z=x-y$; but the zero boundary condition breaks the translation invariance. Nevertheless the following holds:
\begin{lemma}\label{reductionlemma} For $k$, $\ell\in\N^3$, and $I_{k,\ell}$ as defined above,
\begin{align} I_{k,\ell} = 8 \int_{[0,1]^3} \frac{\prod_{i=1}^3 f_{k_i,\ell_i}(z_i)}{|z|} dz\label{A3} \end{align}
where for $a,b \in \N$, $f_{a,b}$ is defined as
\begin{align} f_{a,b}(\tau) & = \frac{1-\tau}{2}\Bigl(\cos(\pi(a+b)\tau) + \cos(\pi (a-b) \tau)\Bigr)
+ \biggr(\frac{1}{a}  + \frac{1}{b} - \frac{1}{(a+b)}\biggr)\frac{\sin(\pi(a+b)\tau)}{2\pi} 
\nonumber \\
& 
+ \begin{cases} \Bigl(\frac{1}{a} - \frac{1}{b} - \frac{1}{a-b}\Bigr)\frac{\sin(\pi(a-b) \tau)}{2\pi} & \mbox{ if } a\neq b, \\ \;\;\;\; \frac{1-\tau}{2} & 
\mbox{otherwise.} \end{cases} \label{ffunc}\end{align}
\end{lemma}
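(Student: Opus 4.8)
The plan is to strip off the Coulomb singularity by the change of variables $z=x-y$, exploit the fact that the spatial integrand factorizes over Cartesian directions, and thereby reduce the six-dimensional integral to an elementary one-dimensional integral per direction. First I would rewrite each Cartesian factor using the product-to-sum formula: set $\psi_{a,b}(t):=2\sin(\pi a t)\sin(\pi b t)=\cos(\pi(a-b)t)-\cos(\pi(a+b)t)$, so that the numerator in the definition of $I_{k,\ell}$ equals $\prod_{i=1}^3\psi_{k_i,\ell_i}(x_i)\,\psi_{k_i,\ell_i}(y_i)$. Since $1/|x-y|$ depends only on $z=x-y$, substituting $x=y+z$ and integrating out $y$ over the set on which both $y$ and $y+z$ lie in $[0,1]^3$ — a set which is a product of intervals $[0,1]\cap([0,1]-z_i)$ — gives
\[
I_{k,\ell}=\int_{[-1,1]^3}\frac{1}{|z|}\prod_{i=1}^3\Bigl(\int_{[0,1]\cap([0,1]-z_i)}\psi_{k_i,\ell_i}(y_i+z_i)\,\psi_{k_i,\ell_i}(y_i)\,\mathrm{d}y_i\Bigr)\mathrm{d}z .
\]

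Denoting the inner one-dimensional integral by $\tilde f_{a,b}(z_i)$, the substitution $u_i=y_i+z_i$ shows that $\tilde f_{a,b}$ is even; since $1/|z|$ is also invariant under each reflection $z_i\mapsto-z_i$, splitting $[-1,1]^3$ into its eight orthants collapses the identity to $I_{k,\ell}=8\int_{[0,1]^3}\tfrac{1}{|z|}\prod_i\tilde f_{k_i,\ell_i}(z_i)\,\mathrm{d}z$, so it only remains to evaluate $\tilde f_{a,b}(\tau)=\int_0^{1-\tau}\psi_{a,b}(y+\tau)\psi_{a,b}(y)\,\mathrm{d}y$ for $\tau\in[0,1]$ and check that it equals the claimed $f_{a,b}(\tau)$. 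For this I would expand $\psi_{a,b}(y+\tau)\psi_{a,b}(y)$ into four products of cosines, apply product-to-sum once more, and integrate each resulting term over $y\in[0,1-\tau]$. The decisive simplification is that $a+b$ and $a-b$ are integers, so every boundary term of the form $\sin(2\pi m-x)$ with $m\in\mathbb Z$ reduces to $-\sin x$; this is exactly what generates the $(1-\tau)/2$ coefficients and the $1/a$, $1/b$, $1/(a\pm b)$ combinations in \eqref{ffunc}. The degenerate case $a=b$, where $\psi_{a,a}(t)=1-\cos(2\pi at)$ and the singular term $1/(a-b)$ must disappear, has to be treated separately and matches the second branch of \eqref{ffunc}.

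I expect the only genuine work — and the main source of potential error — to be the bookkeeping in this final evaluation: there are several cosine products, each producing two elementary integrals, and one must track signs carefully while invoking the integrality of $a\pm b$ to cancel the spurious boundary contributions. Everything else (the change of variables, Fubini over the product domain, the evenness argument, the orthant count) is routine. A convenient internal consistency check is that the two "cross" contributions $\int_0^{1-\tau}\cos(\pi(a-b)(y+\tau))\cos(\pi(a+b)y)\,\mathrm{d}y$ and $\int_0^{1-\tau}\cos(\pi(a+b)(y+\tau))\cos(\pi(a-b)y)\,\mathrm{d}y$ turn out to be equal, which halves the algebra and gives the symmetric dependence on $a,b$ visible in the formula.
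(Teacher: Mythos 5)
Your proposal is correct and follows essentially the same route as the paper: product-to-sum identities, a change of variables isolating $z=x-y$ so that the integral factorizes over coordinates, evenness of the one-dimensional factors plus the reflection symmetry of $1/|z|$ to collapse $[-1,1]^3$ to $[0,1]^3$ with the factor $8$, and an elementary trigonometric integration in which the integrality of $a\pm b$ kills the boundary terms. The only (immaterial) differences are that you pair the sines by variable rather than by frequency index and keep $y$ instead of passing to $w=x+y$; carrying out your final bookkeeping does reproduce \eqref{ffunc}, including the equality of the two cross terms you predicted.
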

We remark that there is a well known alternative reduction of any 6D Coulomb integral of the form $\int u(x) \frac{1}{|x-y|}v(y)\, dx\, dy$ to a 3D integral over reciprocal space, by using the convolution theorem for the Fourier transform. But this leads to an integral over an unbounded domain, a stronger ($\sim 1/|k|^2$) singularity, and -- in our case -- a slow decay of the integrand, making the expression \eqref{A3} numerically much more  favourable.
\begin{proof} Using the identity 
$2\sin(A)\sin(B) = \cos(A-B) - \cos(A+B)$ and the change of variables $w = x+y, \, z = x-y$, and noting that the volume element becomes $\mathrm{d}x_i\mathrm{d}y_i = \frac{1}{2}\mathrm{d}w_i \mathrm{d}z_i$, 
$$
 I_{k,\ell} = \int_{[-1,1]^3} \frac{1}{|z|}
 \prod_{i=1}^3 \underbrace{ \left\{
 \frac12 \int_{|z_i|}^{2-|z_i|} \bigl( \cos\pi k_iz_i - \cos\pi k_i w_i\bigr) \bigl( \cos\pi\ell_iz_i - \cos\pi\ell_iw_i\bigr) dw_i 
 \right\} }_{=:f_{k_i,\ell_i}(z_i)} dz.
$$
The integral in the definition of $f_{a,b}(z_i)$ is elementary to evaluate by using the identity $2\cos A \cos B = cos(A+B) + \cos(A-B)$ for the term $\cos \pi a w_i \cos \pi b w_i$, yielding
\begin{align*} f_{a,b}(z_i) &  = (1-|z_i|)\cos(\pi a z_i)\cos(\pi b z_i) + \cos(\pi a z_i)\frac{\sin(\pi b |z_i|)}{\pi b}+ \cos(\pi b z_i) \frac{\sin(\pi a |z_i|)}{\pi a}  \nonumber \\
& - \frac{\sin(\pi(a+b)|z_i|)}{2\pi (a+b)}\;\; 
\begin{cases}
- \frac{\sin(\pi (a-b)|z_i|)}{2\pi(a-b)} &    \mbox{if } a\neq b, \\
+ \frac{1-|z_i|}{2} & \mbox{otherwise.}
\end{cases}
\end{align*}
 Next, we note that since $f_{a,b}(-z_i) = f_{a,b}(z_i)$ and the Coulomb potential $1/|z|$ is invariant under the transformations $z_i \mapsto -z_i$, we can replace $[-1,1]^3$ by $[0,1]^3$ by adding a factor of $8$ in front of the integral. The final expression for $f_{a,b}$ given in the lemma now follows from the trigonometric identities $2\cos A \cos B = \cos(A+B)+\cos(A-B)$ and $\alpha \sin A \cos B + \beta \cos A \sin B= \frac{\alpha+\beta}{2}\sin(A+B)+\frac{\alpha-\beta}{2}\sin(A-B)$.
\end{proof}

Since the integrand in \eqref{A3} is a separable function except for the Coulomb potential, the idea now is to also approximate the latter by separable functions, therefore reducing the problem to the computation of one dimensional integrals. Such an approximation is provided by recent advances in low-rank tensor approximation; more specifically, we use results of Hackbusch (\cite[Section 9.8.2]{hackbusch2012}). The  Coulomb potential can be very accurately approximated by a sum of weighted Gaussians, \begin{align} 
   \frac{1}{r} \approx \sum_{j=1}^M \omega_j e^{-\alpha_j r^2}.
\label{Gaussian}   
\end{align}
Plugging this approximation into equation \eqref{A3} and factorizing $e^{-\alpha_j|z|^2}=\prod_{i=1}^3 e^{-\alpha_jz_j^2}$, one obtains 
\begin{align} 
 I_{k,\ell} \approx  8 \sum_{j=1}^M \omega_j \prod_{i=1}^3 I_{k_i,\ell_i,j} \; \mbox{ with } \; I_{a,b,j} = \int_0^1 f_{a,b}(t) e^{-\alpha_j t^2} dt, \label{firstreduction} \end{align}
which reduces the 3D integral in \eqref{A3}  to one-dimensional integrals of analytic functions.\footnote{In fact, one could represent these integrals exactly in terms of the error function ${\tt erf}$ and the imaginary error function ${\tt erfi}$, but we do not use this fact on our scheme.}

To reduce the overall number of 1D integrals that must be computed, let us introduce, for $p\in\{0,1,...,2\lfloor{ R_N^{\textnormal{Dir}}} \rfloor \}$ and $j\in\{1,...,M\}$ (where $\lfloor \; \rfloor$ denotes the integer part), the auxiliary integrals 
\begin{align} C_{p,j} \coloneqq \int_0^1 \frac{1-t}{2} \cos(\pi p \, t) e^{-\alpha_j t^2} \mathrm{d} t \quad \mbox{ and } \quad S_{p,j} \coloneqq \int_0^1 \frac{\sin(\pi p \, t)}{2\pi} e^{-\alpha_j t^2} \mathrm{d} t.  \label{CandSintegrals} 
\end{align}
It follows from the explicit expression for $f_{a,b}$ in \eqref{ffunc} that
\begin{align}
 I_{a,b,j} & = C_{a+b,j} + C_{|a-b|,j} + \Bigl( \tfrac{1}{a} + \tfrac{1}{b} - \tfrac{1}{a+b}\Bigr) S_{a+b,j} \nonumber \\
 & \;\;\; + \begin{cases} \Bigl( \tfrac{1}{a}-\tfrac{1}{b}-\tfrac{1}{a-b}\Bigr) {\rm sign}(a-b) S_{|a-b|,j} & \mbox{ if } a\neq b \\
\; C_{0,j} & \mbox{ if }a=b. \end{cases} \label{numericalexpression} 
\end{align}
Thus in total
\begin{equation} \label{A...}
   E _x[\Psi_{N,L}] \approx - \frac{8}{L} 
   \sum_{\substack{k,\ell\in\N^3\\|k|,|\ell|\le R_N}} \sum_{j=1}^M \omega_j \prod_{i=1}^3 I_{k_i,\ell_i,j}
\end{equation}
with $I_{a,b,j}$ given by \eqref{CandSintegrals}--\eqref{numericalexpression}. 
In particular, as $R_N^{\textnormal{Dir}} \sim N^{1/3}$, calculating the exchange energy of the free $N$-electron gas reduces to the problem of evaluating $\mathcal{O}(N^{1/3} M)$ 
one-dimensional integrals of analytic functions on the interval $[0,1]$, and multiplying and summing them according to equation \eqref{A...}.

Next, let us discuss the choice of weights and exponents, and the error, in \eqref{Gaussian}. We used the values 
$\{\omega^H_j,\alpha^H_j\}_{j=1}^{51}$ given on Hackbusch's webpage \cite{Hackbuschwebsite} for the (approximately) best approximation of the Coulomb potential as the sum of $M=51$ Gaussians, which satisfy 
$
 \norm{1/r-v^H(r)}_{L^\infty([1,10^9])} \leq 10^{-9} 
$
where $v^H(r) = \sum_{j=1}^{51} \omega_j^H e^{-\alpha_j^H r^2}$. Moreover since we are interested in a good approximation of $1/r$ on the unit cube, we rescaled Hackbusch's parameters by setting
\begin{align*} \omega_j = \frac{\omega^H_j}{r_0}\quad \mbox{ and } \quad \alpha_j = \frac{\alpha_j^H}{r_0^2}, \end{align*}
which yields a pointwise error of  \begin{align} 
  \norm{v(r)-1/r}_{L^\infty([r_0,10^9 r_0])}\leq r_0^{-1}\times 10^{-9} \label{coulombLinftyest} 
\end{align} 
where $v(r) = \sum_{j=1}^{51} \omega_j e^{-\alpha_j r^2}$. In our numerical results we chose $r_0=10^{-4}$, to achieve good accuracy in \eqref{A3} both in the region $|z|< r_0$ (note that the integral of $1/|z|$ over this region is $\sim r_0^2$) and outside it. 

Finally, let us discuss evaluation of the 1D integrals \eqref{CandSintegrals}, which requires a moment's thought as one needs to resolve both the oscillatory trigonometric factor and the Gaussian factor. The wavevector $\pi p$ is $\le \pi \cdot 2 R_{N}^{\textnormal{Dir}}$ and hence 
$\lesssim 200$
for up to $N=30~000$ electrons (in which case $R_N^{\textnormal{Dir}}\approx 31$), so the trigonometric oscillations can be accurately resolved by any standard quadrature method. The Gaussian factor, however, turns out to be more delicate, as the maximum value of $\alpha_j$ is $\approx 8\times 10^8$. For $\alpha_j>1$ we therefore used  the following alternative expressions obtained by re-scaling:
\begin{align} C_{p,j} & = \frac{1}{\sqrt{\alpha_j}}\int_0^{\sqrt{\alpha_j}} \biggr(\frac{1}{2}-\frac{t}{2\sqrt{\alpha_j}}\biggr) \cos(\pi p \frac{t}{\sqrt{\alpha_j}}) e^{-t^2} \mathrm{d}t,  \label{Cintegralrescaled} \\
S_{p,j} & = \frac{1}{\sqrt{\alpha_j}} \int_0^{\sqrt{\alpha_j}} \frac{1}{2\pi}  \sin(\pi p \frac{t}{\sqrt{\alpha_j}})e^{-t^2} \mathrm{d}t. \label{Sintegralrescaled} \end{align}
Note that even though the integration interval may be big, for practical purposes one can truncate at $\min\{\sqrt{\alpha_j},10\}$ (as $\int_{10}^\infty e^{-t^2} \mathrm{d}t \approx 10^{-45}$). 

\section{Assumptions on GGAs}\label{sec:GGAcheck}
We now show that the expressions for the PBE and B88 functionals (see equations \eqref{pbe} and \eqref{b88}) satisfy our assumptions in Theorem \ref{mainthm}. 

The $C^1$ regularity in $(0,\infty)\times\R$ is straightforward, so we just need to worry about the continuity when $a$ goes to zero and $b$ remains bounded. For this, let us rewrite equations \eqref{b88} and \eqref{pbe} as
\begin{align*}
    &g^{PBE}(a,b) = c_x \frac{\mu s^2}{4(3\pi^2)^\frac23 + \frac{\mu}{\kappa} s^2} a^\frac43 \\
    &g^{B88}(a,b) = \frac{2^\frac13 s^2}{1 + 6 \beta 2^\frac13 s \sinh^{-1}(2^\frac13 s)} a^{\frac43} = \frac{2^\frac13 s}{1 + 6 \beta 2^\frac13 s \sinh^{-1}(2^\frac13 s)} b
\end{align*}
where $s = b/a^\frac43$. Thus for the PBE, since the (enhancement) factor in front of $a^{\frac43}$ is bounded, we see that $g^{PBE}(a,b)\ra 0$ as $a\ra 0$ regardless of $b$. For the B88, we make two observations: (i) thanks to the superlinear growth of the denominator in the enhancement factor, we see that $g^{B88}(a,b)$ goes to zero if $s\ra \infty$ and $b$ stays bounded, and (ii) if $s$ is bounded and $b \ra 0$, $g^{B88}(a,b)$ also goes to zero. In particular, as taking the limit $a \ra 0$ with $b$ bounded falls into one of these two cases, property \eqref{ggaassump} holds.

\begin{small}

\end{small}


\begin{thebibliography}{10}
\bibitem{Becke1988}
A.~D. Becke.
\newblock Density-functional exchange-energy approximation with correct
asymptotic behavior.
\newblock {\em Physiscal Review A}, 38:3098--3100, Sep 1988.

\bibitem{Bloch1929}
F.~Bloch.
\newblock Bemerkung zur Elektronentheorie des Ferromagnetismus und der
elektrischen Leitf{\"a}higkeit.
\newblock {\em Zeitschrift f{\"u}r Physik}, 57(7):545--555, Jul 1929.

\bibitem{Chamizo1995}
F.~Chamizo and H.~Iwaniec.
\newblock On the sphere problem.
\newblock {\em Revista Matemática Iberoamericana}, 11(2):417--429, 1995.

\bibitem{chen1963improvement}
J.-R. Chen.
\newblock Improvement on the asymptotic formulas for the number of lattice
points in a region of three dimensions (ii).
\newblock {\em Scientia Sinica}, 12(5):739--741, 1963.

\bibitem{Christiansen2023}
M.~R. Christiansen, C.~Hainzl, and P.~T. Nam.
\newblock The Gell-Mann--Brueckner formula for the correlation energy of the
electron gas: A rigorous upper bound in the mean-field regime.
\newblock {\em Communications in Mathematical Physics}, Feb 2023.

\bibitem{dirac_1930}
P.~A.~M. Dirac.
\newblock Note on exchange phenomena in the Thomas atom.
\newblock {\em Mathematical Proceedings of the Cambridge Philosophical
	Society}, 26(3):376–385, 1930.

\bibitem{Friesecke1997}
G.~Friesecke.
\newblock Pair correlations and exchange phenomena in the free electron gas.
\newblock {\em Communications in Mathematical Physics}, 184(1):143--171, Mar
1997.

\bibitem{SCElimit2022}
G.~Friesecke, A.~Gerolin, and P.~Gori-Giorgi.
\newblock The strong-interaction limit of density functional theory.
\newblock {\em arXiv preprint arXiv:2202.09760}, 2022.

\bibitem{graf1994correlation}
G.~M. Graf and J.~P. Solovej.
\newblock A correlation estimate with applications to quantum systems with
{C}oulomb interactions.
\newblock {\em Reviews in Mathematical Physics}, 6(05a):977--997, 1994.

\bibitem{graham_kolesnik_1991}
S.~W. Graham and G.~Kolesnik.
\newblock {\em Van der Corput's Method of Exponential Sums}.
\newblock London Mathematical Society Lecture Note Series. Cambridge University
Press, 1991.

\bibitem{Hackbuschwebsite}
W.~Hackbusch.
\newblock \url{https://www.mis.mpg.de/scicomp/EXP_SUM/1_sqrtx/}.
\newblock Accessed: 22-02-2022.

\bibitem{hackbusch2012}
W.~Hackbusch.
\newblock {\em Tensor spaces and numerical tensor calculus}, volume~42.
\newblock Springer, 2012.

\bibitem{HeathBrown1999}
D.~R. Heath-Brown.
\newblock Lattice points in the sphere.
\newblock In {\em In Number theory in progress}, pages 883--892, 1999.

\bibitem{HeathBrown2017}
D.~R. Heath-Brown.
\newblock A new kth derivative estimate for exponential sums via Vinogradov's
mean value.
\newblock {\em Proceedings of the Steklov Institute of Mathematics},
296(1):88--103, 2017.

\bibitem{Huxley1996}
M.~N. Huxley.
\newblock {\em Area, lattice points, and exponential sums}, volume~13.
\newblock Clarendon Press, 1996.

\bibitem{ivrii2016100}
V.~Ivrii.
\newblock 100 years of Weyl’s law.
\newblock {\em Bulletin of Mathematical Sciences}, 6(3):379--452, 2016.

\bibitem{KohnSham1965}
W.~Kohn and L.~J. Sham.
\newblock Self-consistent equations including exchange and correlation effects.
\newblock {\em Physical Review}, 140:A1133--A1138, Nov 1965.

\bibitem{GillUEG2016}
P.-F. Loos and P.~M.~W. Gill.
\newblock The uniform electron gas.
\newblock {\em WIREs Computational Molecular Science}, 6(4):410--429, 2016.

\bibitem{PBE1996}
J.~P. Perdew, K.~Burke, and M.~Ernzerhof.
\newblock Generalized gradient approximation made simple.
\newblock {\em Physical Review Letters}, 77:3865--3868, Oct 1996.

\bibitem{PBEsol2008}
J.~P. Perdew, A.~Ruzsinszky, G.~I. Csonka, O.~A. Vydrov, G.~E. Scuseria, L.~A.
Constantin, X.~Zhou, and K.~Burke.
\newblock Restoring the density-gradient expansion for exchange in solids and
surfaces.
\newblock {\em Physical Review Letters}, 100:136406, Apr 2008.

\bibitem{ReedSimonIV}
M.~Reed and B.~Simon.
\newblock {\em Methods of Modern Mathematical Physics. IV Analysis of
	Operators}.
\newblock Academic Press, 1978.

\bibitem{sogge2017fourier}
C.~D. Sogge.
\newblock {\em Fourier integrals in classical analysis}, volume 210.
\newblock Cambridge University Press, 2017.

\bibitem{Vinogradov1963}
I.~M. Vinogradov.
\newblock On the number of integer points in a sphere.
\newblock {\em Izvestiya Akademii Nauk SSSR. Seriya Matematicheskaya},
27:957--968, 1963.
\end{thebibliography}
\end{document}